\tikzstyle{normal place}=[circle,thick, draw=black!100,fill=white!100,minimum size=6mm]
\tikzstyle{gray place}=[circle,thick, draw=black!100,fill=black!35,minimum size=6mm]
\tikzstyle{grayclair place}=[circle,thick, draw=black!100,fill=black!5,minimum size=6mm]
\tikzstyle{contour place}=[place,draw=red!100]
\tikzstyle{purple place}=[place,draw=purple!100,fill=red!20]
\tikzstyle{orange place}=[place,draw=black!100,fill=orange!20]
\tikzstyle{brown place}=[place,draw=black!100,fill=brown!70]
\tikzstyle{brownclair place}=[place,draw=black!100,fill=brown!30]
\tikzstyle{blue place}=[place,draw=blue!100,fill=blue!30]
\tikzstyle{normal transition}=[ rectangle,thick, draw=black!100,fill=black!5, minimum width=8mm, inner ysep=2pt]	
\tikzstyle{purple transition}=[ rectangle,thick, draw=purple!100,fill=purple!35, minimum width=8mm, inner ysep=2pt]	
\tikzstyle{yellow transition}=[ rectangle,thick, draw=black!100, fill=yellow!20, minimum width=8mm, inner ysep=2pt]	
\tikzstyle{green transition}=[ rectangle,thick, draw=black!100, fill=green!20, minimum width=8mm, inner ysep=2pt]	
\tikzstyle{blue transition}=[ rectangle,thick, draw=blue!100,fill=blue!25, minimum width=8mm, inner ysep=2pt]	
\newcommand{\en}[1]{\mathsf{en}(#1)}
\newcommand{\firable}[1]{\mathsf{firable}(#1)}
\newcommand{\newen}[1]{\mathsf{newen}(#1)}
\newcommand{\calN}{{\cal N}}
\newcommand{\last}{{\sf last}}
\newcommand{\lastm}{{\sf lastm}}
\newcommand{\Runs}{{\sf Runs}}
\newcommand{\N}[0]{\mathbb{N}}
\newcommand{\Z}[0]{\mathbb{Z}}
\newcommand{\Q}[0]{\mathbb{Q}}
\newcommand{\R}[0]{\mathbb{R}}
\newcommand{\Rp}{\mathbb{R}_{\geq 0}}
\newcommand{\Qp}{\mathbb{Q}_{\geq 0}}
\newcommand{\pre}[1]{\ensuremath{{}^{\bullet}\mspace{-2mu}#1}}
\newcommand{\post}[1]{\ensuremath{#1^{\bullet}}}
\newcommand{\Markings}[0]{{\N}^{P}}
\newcommand{\states}[0]{\Markings \times \Intervals{\Rp}^{T}}
\newcommand{\Intervals}[1]{\mathcal{I}(#1)}
\newcommand{\Net}[0]{\mathcal{N}}
\newcommand{\minusI}{\ominus}
\newcommand{\eft}[1]{\underline{#1}}
\newcommand{\lft}[1]{\overline{#1}}
\newcommand{\proj}[2]{{#1}_{|#2}}
\newcommand{\Next}{{\sf Next}}
\newcommand{\cost}[0]{\textsf{cost}}
\newcommand{\costrate}[0]{\cost_m}
\newcommand{\discreteCost}[0]{\cost_t}
\newcommand{\intfirable}{$\N^\Params$-firable}
\newcommand{\costclass}[0]{\cost}
\newcommand{\costclassint}[0]{\costclass_\N}
\newcommand{\costrun}[0]{\cost}
\newcommand{\Params}{\mathbb{P}}
\newcommand{\sequence}{\mathsf{sequence}}
\def\Conv{{\sf Conv}}
\def\IV{{\sf Ints}}
\def\IH{{\sf IH}}
\newcommand{\aCost}{\textsc{Cost}}
\newcommand{\aPassed}{\textsc{Passed}}
\newcommand{\aWaiting}{\textsc{Waiting}}
\newcommand{\aGoal}{\textsf{Goal}}
\newcommand{\aPoly}{\textsf{PolyRes}}
\newcommand{\costincl}{\preccurlyeq}
\newcommand{\costrincl}{\succcurlyeq}
\newcommand{\cmax}{c_{\max}}
\tikzstyle{location}=[minimum size=12pt, circle, fill=blue!20, draw=black, text=black, inner sep=1.5pt, ]
    \tikzstyle{place}=[circle,thick,draw=black!75,fill=magenta!20,minimum size=6mm]
  \tikzstyle{transition}=[rectangle,thick,draw=black!75,fill=black!20,minimum size=4mm]
\begin{document}

\setcounter{page}{97}
\publyear{2021}
\papernumber{2083}
\volume{183}
\issue{1-2}

  \finalVersionForARXIV

\title{Cost Problems for Parametric Time Petri Nets\thanks{This work has been partially funded by ANR project ProMiS number ANR-19-CE25-0015}}

\author{Didier Lime\thanks{Address for correspondence: \'Ecole Centrale de Nantes, LS2N UMR CNRS 6004, Nantes, France},
~Olivier H. Roux
\\
\'Ecole Centrale de Nantes\\
LS2N UMR CNRS 6004, Nantes, France\\
\{Didier.Lime, Olivier-h.Roux\}@ec-nantes.fr
\and
Charlotte Seidner\\
Université de Nantes\\
LS2N UMR CNRS 6004, Nantes, France\\
Charlotte.Seidner@univ-nantes.fr
}

\maketitle

    \runninghead{D. Lime et al.}{Cost Problems for Parametric Time Petri Nets}

\vspace*{-6mm}
\begin{abstract}
We investigate the problem of parameter synthesis for time Petri nets with a cost variable that evolves both continuously with time, and discretely when firing transitions. More precisely, parameters are rational symbolic constants used for time constraints on the firing of transitions and we want to synthesise all their values such that some marking is reachable, with a cost that is either minimal or simply less than a given bound.

We first prove that the mere existence of values for the parameters such that the latter property holds is undecidable. We nonetheless provide symbolic semi-algorithms for the two synthesis problems and we prove them both sound and complete when they terminate. We also show how to modify them for the case when parameter values are integers. Finally, we prove that these modified versions terminate if parameters are bounded. While this is to be expected since there are now only a finite number of possible parameter values, our algorithms are symbolic and thus avoid an explicit enumeration of all those values. Furthermore, the results are symbolic constraints representing finite unions of convex polyhedra that are easily amenable to further analysis through linear programming.

We finally report on the implementation of the approach in Romeo, a software tool for the analysis of time Petri nets.
\end{abstract}

\begin{keywords}
    Time Petri nets, reachability, parameters, cost, optimality.
\end{keywords}

\section{Introduction}  \label{sec:intro}
So-called \emph{priced} or \emph{cost timed} models are suitable for representing real-time systems whose behaviour is constrained by some resource consuming (be it energy or CPU time, for instance) and for which we need to assess the total cost accumulated during their execution.
Such models can even describe whether the evolution of the cost during the run is caused by staying in a given state (continuous cost) or by performing a given action (discrete cost).
Thus, the task of finding if the model can reach some ``good'' states while keeping the overall cost under a given bound (or, further, finding the minimum cost) can prove of interest in many real-life applications, such as optimal scheduling or production line planning.

Timed models, however, require a thorough knowledge of the system for their analysis and are thus difficult to build in the early design stages, when the system is not fully identified.
Even when all timing constraints are known, the whole design process must often be carried out afresh, whenever the environment changes.
To obtain such valuable characteristics as flexibility and robustness, the designer may want to relax constraints on some specifications by allowing them a wider range of values.
To this end, parametric reasoning is particularly relevant for timed models, since it allows designers to use parameters instead of definite timing values.

We therefore propose to tackle the definition and analysis of models that support both (linear) cost functions and timed parameters.\medskip

\noindent  \textbf{Related work}~~~
Parametric timed automata (PTA) \cite{alur-acm-93}
extend timed automata \cite{alur-tcs-94} to overcome the limits of
checking the correctness of the systems with respect to definite timing
constraints.
The reachability-emptiness problem, which tests whether there exists a parameter valuation such that the automaton has an accepting run, is fundamental to any verification process but is undecidable~\cite{alur-acm-93}.
L/U automata \cite{hune-jlap-02} use each parameter either as a lower bound or as an upper bound on clocks. The reachability-emptiness problem is decidable for this model, but the state-space exploration, which would allow for explicit synthesis of all the suitable parameter valuations, still might not terminate~\cite{jovanovic-TSE-15}.
To obtain decidability results, the approach described in~\cite{jovanovic-TSE-15} does not rely on syntactical restrictions on guards and invariants, but rather on restricting the parameter values to bounded integers.
From a practical point of view, this subclass of PTA is not that restrictive, since the time constraints of timed automata are usually expressed as natural (or perhaps rational) numbers.

In \cite{ALUR2004297}, the authors have proved the decidability of the optimal-cost problem for Priced Timed Automata  with non-negative costs. In \cite{Behrmann2001,Behrmann2005,Larsen01ascheap}, the computation of the optimal-cost to reach a goal location is based on a forward exploration of zones extended with linear cost functions.
In \cite{bouyer-CAV-16}, the authors have improved this approach, so as to ensure termination of the forward exploration algorithm, even when clocks are not bounded and costs are negative, provided that the automaton has no negative cost cycles.
In \cite{DBLP:journals/corr/AbdullaM13}, the considered model is a timed arc Petri net, under weak firing semantics, extended with rate costs associated with places and firing costs associated with transitions. The computation of the optimal-cost for reaching a goal marking is based on similar techniques to~\cite{ALUR2004297}. In~\cite{boucheneb-FORMATS-17}, the authors have investigated the optimal-cost reachability problem for time Petri nets where each transition has a firing cost and each marking has a rate cost (represented as a linear rate cost function over markings). To compute the optimal-cost  to reach a goal marking, the authors have revisited the state class graph method to include costs.

\paragraph{Our contribution}
We propose in Section~\ref{sec:costtpn} an extension of time Petri nets with costs (both discrete and continuous with time) and timing parameters, i.e.,
rational symbolic constants used in the constraints on the firing times of transitions.

Within this formalism, we define three problems dealing with parametric reachability with cost constraints.
We prove  in Section~\ref{sec:npcosts} that the existence of a parameter valuation to reach a given marking under a given bounded cost is undecidable. This proof adapts a 2-counter machine encoding first proposed in~\cite{aleksandra-phd-13} for PTA. To our knowledge it is the first time a direct Petri net encoding is provided and the adaptation is not trivial.
We give in Section~\ref{sec:semialgo} a symbolic semi-algorithm that computes all such parameter valuations when it terminates, and we prove its correctness. We also provide another symbolic semi-algorithm that computes all parameter valuations such that a given marking is reachable with a minimal cost (over all runs and all parameter valuations), with corresponding completeness and soundness proofs.
We  propose in Section~\ref{sec:integer} variants of these semi-algorithms that compute integer parameter valuations and prove in Section~\ref{sec:termination} their termination provided parameter valuations are a priori bounded and the cost of each run is uniformly lower-bounded for integer parameter valuations.  This technique is symbolic and avoids the explicit enumeration of all possible parameter valuations. The basic underlying idea of using the integer hull operator was first investigated in~\cite{jovanovic-TSE-15} for PTA, but this is the first time that it is adapted and proved to work with state classes for time Petri nets, and the fact that it naturally also preserves costs for integer parameter valuations is new and very interesting.
We finally describe in Section~\ref{sec:casestudy} the implementation of the approach in the tool Romeo
by analysing a small scheduling case-study. This article is an extension of \cite{lime-ICATPN-19}, with mainly the addition of the optimal synthesis problem, algorithm, and implementation in Romeo.

\section{Parametric cost time Petri nets}
\label{sec:costtpn}
\subsection{Preliminaries}

    We denote the set of natural numbers (including $0$) by $\N$, the set of integers by $\Z$, the set of rational numbers by $\Q$ and the set of real numbers by $\R$. We note $\Qp$ (resp. $\Rp$) the set of non-negative rational (resp. real) numbers.
	For $n \in \N$, we let $\llbracket 0, n \rrbracket$ denote the set $\{ i \in \N \mid i \leq n \}$.
    For a finite set $X$, we denote its size by $|X|$.

\smallskip
    Given a set $X$, we denote by $\Intervals{X}$, the set of non empty real intervals that have their finite end-points in $X$.
    For $I \in \Intervals{X}$, $\eft{I}$ denotes its left end-point if $I$ is left-bounded and $-\infty$ otherwise. Similarly, $\lft{I}$ denotes the right end-point if $I$ is right-bounded and $\infty$ otherwise.
    We say that an interval $I$ is non-negative if $I\subseteq \Rp$.
	Moreover, for any non-empty non-negative interval $I$ and any $d \in \Rp$ such that $d\leq d'$ for some $d'\in I$, we let $I\minusI d$ be the interval defined by $\{\theta-d\mid\theta\in I\wedge \theta-d\geq0\}$. Note that this is again a non-empty non-negative interval.
	
\smallskip
    Given sets $V$ and $X$, a $V$-valuation (or simply valuation when $V$ is clear from the context) of $X$ is a mapping from $X$ to $V$. We denote by $V^X$ the set of $V$-valuations of $X$. When $X$ is finite, given an arbitrary fixed order on $X$, we often equivalently consider $V$-valuations as vectors of $V^{|X|}$.
    Given a $V$-valuation $v$ of $X$ and $Y\subseteq X$, we denote by $\proj{v}{Y}$ the projection of $v$ on $Y$, i.e., the valuation on $Y$ such that $\forall x\in Y, \proj{v}{Y}(x)=v(x)$.

\subsection{Time Petri nets with costs and parameters}\smallskip

\begin{definition}[Parametric Cost Time Petri Net (pcTPN)]
    A \emph{Parametric Cost Time Petri Net} (pcTPN) is a tuple $\Net = (P,T,\Params,\pre{.},\post{.},m_0,I_s,\discreteCost,\costrate)$ where:
	\begin{itemize}
\itemsep=0.9pt
		\item $P$ is a finite non-empty set of \emph{places},
		\item $T$ is a finite set of \emph{transitions} such that $T \cap P = \emptyset$,
        \item $\Params$ is a finite set of \emph{parameters},
		\item $\pre{.}: T \rightarrow \Markings$ is the backward incidence mapping,
		\item $\post{.}: T \rightarrow \Markings$ is the forward incidence mapping,
		\item $m_0\in \Markings$ is the initial marking,
        \item $I_s: T\rightarrow \Intervals{\N\cup\Params}$ is the (parametric) \emph{static firing interval} function,
        \item $\discreteCost: T \rightarrow \Z$ is the \emph{discrete cost} function, and
        \item $\costrate: \Markings \rightarrow \Z$ is the \emph{cost rate} function.
	\end{itemize}
\end{definition}

    Given a parameterized object $x$ (be it a pcTPN, a function, an expression, etc.), and a $\Q$-valuation $v$ of parameters, we denote by $v(x)$ the corresponding non-parameterized object, in which each parameter $a$ has been replaced by the value $v(a)$.

    A \emph{marking} is an $\N$-valuation of $P$.
    For a marking $m\in \Markings$, $m(p)$ represents a number of \emph{tokens} in place $p$.
    A transition $t \in T$ is said to be \emph{enabled} by a given marking $m \in \Markings$ if for all places $p$, $m(p)\geq \pre{t}(p)$. We also write $m\geq \pre{t}$.
	We denote by $\en{m}$ the set of transitions that are enabled by the marking $m$: $\en{m} = \{t \in T \mid m \geq \pre{t}\}$.

    Firing an enabled transition $t$ from marking $m$ leads to a new marking $m' = m-\pre{t}+\post{t}$.
    A transition $t' \in T$ is said to be \emph{newly enabled} by the firing of a transition $t$ from a given marking $m \in \Markings$ if it is enabled by the new marking but not by $m-\pre{t}$ (or it is itself fired)\footnote{With that definition we use the classical ``intermediate semantics'' of Berthomieu~\cite{berthomieu-TSE-91}. A study of alternative semantics can be found in~\cite{berard-ATVA-05}.}.

    We denote by $\newen{m,t}$ the set of transitions that are newly enabled by the firing of $t$ from the marking $m$:
    $\newen{m,t} =\big \{t' \in \en{m-\pre{t}+\post{t}} \; | \; t'\not\in \en{m-\pre{t}} \text{ or } t=t'\big\}$

    A \emph{state} of the net $\Net$ is a tuple $(m,I,c,v)$ in $\states\times\R\times\Qp^\Params$, where: $m$ is a marking of $\Net$, $I$ is called the interval function and associates a \emph{temporal interval} to each transition enabled by $m$.
    Value $c$ is the cost associated with that state and valuation $v$ assigns a rational value to each parameter for the state.

\begin{definition}[Semantics of a pcTPN]
The semantics of a pcTPN is a timed transition system $(Q,Q_0,\rightarrow)$ where:
\begin{itemize}
    \item $Q\subseteq \states\times \R \times \Qp^\Params$
    \item $Q_0=\{(m_0,I_0, 0, v) | v\in\Qp^\Params, \forall t\in T, v(I_s(t))\neq\emptyset\}$ where $\forall t\in \en{m_0}, I_0(t)=I_s(t)$
    \item $\rightarrow$ consists of two types of transitions:
    \begin{itemize}
        \item discrete transitions: $(m,I,c,v)\xrightarrow{t\in T} (m',I',c',v)$ iff
            \begin{itemize}
                \item $m\geq \pre{t}$, $m'=m - \pre{t} +\post{t}$  and $\eft{I(t)} = 0$,
                \item $\forall t'\in \en{m'}$
	            \begin{itemize}
                    \item   $I'(t') = v(I_s(t'))$ if $t'\in\newen{m,t}$,
			        \item  $I'(t')=I(t')$ otherwise
	    	    \end{itemize}
                \item $c' = c + \discreteCost(t)$
    	  \end{itemize}
        \item time transitions: $(m,I,c,v)\xrightarrow{d\in \Rp} (m,I \minusI d,c',v)$,
            iff $\forall t\in \en{m}$, $\exists d'\in I(t) \text{ s.t. } d\leq d'$ and $c' = c + \costrate(m)*d$.
    \end{itemize}
\end{itemize}
\label{def:sem-pctpn}
\end{definition}

A run of a pcTPN $\Net$ is a possibly infinite sequence $q_0a_0q_1a_1q_2a_2\cdots$ such that $q_0\in Q_0$, for all $i>0$, $q_i\in Q$, $a_i\in T\cup\Rp$ and $q_i\xrightarrow{a_i} q_{i+1}$.
The set of runs of $\Net$ is denoted by $\Runs(\Net)$.
We note $(m,I,c,v)\xrightarrow{t@d}(m',I',c',v)$ for the sequence of elapsing $d\geq 0$ followed by the firing of the transition $t$.
We denote by $\sequence(\rho)$ the projection of the run $\rho$ over $T$: for a run $\rho=q_0\xrightarrow{t_0@d_0}q_1\xrightarrow{t_1@d_1}q_2\xrightarrow{t_2@d_2}q_3\xrightarrow{t_3@d_3}\cdots$, we have $\sequence(\rho) =t_0t_1t_2t_3\cdots$.
We write $q\xhookrightarrow{t}q'$ if there exists $d\geq 0$ such that $q\xrightarrow{t@d} q'$.

For a finite run $\rho$ we denote by $\last(\rho)$ the last state of $\rho$ and by $\lastm(\rho)$ its marking.
A state $(m,I,c,v)$ is said to be \emph{reachable} if there exists a finite run $\rho$ of the net, with $\last(\rho)=(m,I,c,v)$. A marking $m$ is reachable for parameter valuation $v$, if there exists some $I$ and $c$ such that $(m, I, c, v)$ is reachable.

For $k\in\N$ and parameter valuation $v$, the (Cost) Time Petri net $v(\Net)$ is said to be $k$-bounded if for all reachable markings $m$, and all places $p$, $m(p)\leq k$. We say that $v(\Net)$ is bounded if there exists $k$ such that it is $k$-bounded.

The \emph{cost} $\costrun(\rho)$ of a finite run $\rho$, with last state $(m,I,c,v)$ is $c$.

Since we are interested in minimising the cost, the \emph{cost} of a sequence of transitions $\sigma$ is defined as $\costclass(\sigma)= \inf_{\rho\in\Runs(\Net), \sequence(\rho)=\sigma} \costrun(\rho)$.

For the sake of the clarity of the presentation, we consider only closed intervals (or right-open to infinity) so this infimum is actually a minimum.

        \begin{figure}[h]
            \centering
            \scalebox{0.98}{
            \begin{tikzpicture}[auto]
                \node[normal place,tokens=1,label={above:$p_0$}] (p0) {};
                \node[normal place,below =of p0,tokens=1,label={below:$p_1$}] (p1) {};

                \node[transition, right=of p0,label={above:$\begin{array}{c}\cost(t_0)=2\\{}[a,a]\end{array}$}] (t0) {$t_0$};
                \draw (t0) edge[pre,bend right=20] (p0)
                           edge[post,bend left=20] (p0);

                \node[transition, right=of p1,label={below:$[2,5]$}] (t1) {$t_1$};

                \node[normal place,right=of t1,tokens=0,label={below:$p_2$}] (p2) {};
                \draw (t1) edge[pre] (p1) edge[post] (p2);

                \node [right=1cm of t0] {$\cost(\vec{m})=2m(p_0) + m(p_1)$};
            \end{tikzpicture} }\vspace*{-2mm}
            \caption{A parametric cost time Petri net.}
            \label{fig:pctpn}
        \end{figure}
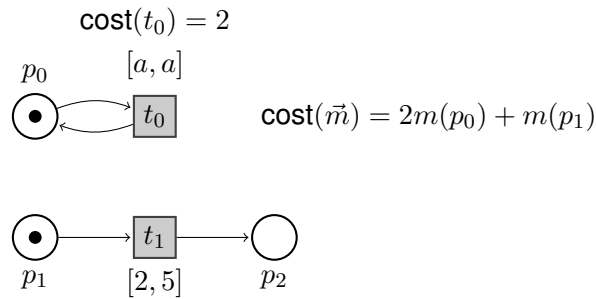

\begin{example}
    Consider the net in Figure~\ref{fig:pctpn}. It has one parameter $a$, and the cost increases with time with a rate equal to twice the number of tokens in $p_0$ + once the number of tokens in $p_1$.  It also increases by $2$ when firing $t_0$ and does not change when firing $t_1$ (so we omit it in the figure).
    For parameter value $v(a)=2$, the initial state is $(\{p_0,p_1\}, I(t_0)=[2,2], I(t_1)=[2,5], 0, 2)$. We write the marking as the set of marked places because the net is safe to make it simpler. Cost is initially $0$.

\smallskip
    Then a possible run is:\vspace*{-2mm}
        $$
        \hspace*{-2mm}\begin{array}{ll}(\{p_0,p_1\}, I(t_0)\!=[2,2], I(t_1)=[2,5], 0, 2)
            &\hspace*{-2mm}\xrightarrow{2} (\{p_0,p_1\}, I(t_0)\!=[0,0], I(t_1)=[0,3], 6, 2)\\
            &\hspace*{-2mm}\xrightarrow{t_0} (\{p_0,p_1\}, I(t_0)\!=[2,2], I(t_1)=[0,3], 8, 2)\\
            & \hspace*{-2.4mm}\xrightarrow{0.2} (\{p_0,p_1\}, I(t_0)\!=[1.8,1.8], I(t_1)=[0,2.8], 8.6, 2)\\
            & \hspace*{-2mm}\xrightarrow{t_1} (\{p_0,p_2\}, I(t_0)\!=[1.8,1.8], 8.6, 2)
        \end{array}\hspace*{-2mm}
        $$
\end{example}

\subsection{Parametric cost problems}

Given a set of target markings $\aGoal$, the problems we are interested in are:
\begin{enumerate}
\itemsep=0.9pt
    \item the existential problem: Given a finite maximum cost value $\cmax$, is there a parameter valuation $v$ such that some marking in $\aGoal$ is reachable with a cost less than $\cmax$ in $v(\Net)$?
    \item the bounded synthesis problem:$\,$Given a finite maximum cost value$\,\cmax\,$,compute all the parame\-ter valuations$\,v\,$such that some
          marking in$\,\aGoal\,$is reachable with a cost less than$\,\cmax\,$in$\,v(\Net)$.
    \item the optimal synthesis problem: Compute the infimum of the cost of all runs that reach a marking in $\aGoal$, $\inf_{m\in \aGoal, v\in\Params^\Q,\rho\in \Runs(\Net), \lastm(\rho)\in \aGoal} \costrun(\rho)$, and all the parameter valuations $v$ such that this infimum cost can be achieved in $v(\Net)$,
\end{enumerate}

\begin{example}
    For the net in Figure~\ref{fig:pctpn}, there is no parameter valuations allowing to reach $p_2=1$ with a cost less or equal to $5$, because we need to wait at least $2$ time units (t.u.) before firing $t_1$ with a cost rate of $3$. For a bound equal to $8$, it will be $1\leq a$ so that $t_0$ fires at most once before $t_1$.

    The minimum cost to reach $p_2=1$ is $6$ and this can be achieved iff $2\leq a$, so that $t_0$ is not forced to fire before $t_1$.
\end{example}

We prove in Section~\ref{sec:npcosts} that the existential problem is undecidable.

\section{Undecidability results}
\label{sec:npcosts}

 The existential parametric time bounded reachability problem for bounded parametric time Petri nets asks whether a given target marking is reachable for some valuation of the parameter(s) within $\cmax$ time units. This is a special case of the existential cost bounded reachability problem defined in Section~\ref{sec:costtpn}, with no discrete cost and a uniform cost rate of $1$. Proposition~\ref{TBR-undecidable} therefore implies the undecidability of that more general problem.

 As in~\cite{alur-acm-93}, our proof is based on a blocking variant of classical two counter machines. We first prove the undecidability of the halting problem for this variant in Lemma~\ref{blockingM2C}, since this has to the best of our knowledge never been done formally.

\begin{lemma}
    Consider a non-deterministic counter machine with two non-negative counters $C_1$ and $C_2$, control states $\{s_1,\ldots,s_n\}$, and, for $i,j\in\{1,\ldots,n\}$, and $x\in\{1,2\}$, instructions of the form:
    \begin{enumerate}
    \itemsep=0.9pt
        \item in state $s_i$, increment counter $C_x$, and go to state $s_j$;
        \item in state $s_i$, if $C_x>0$ decrement $C_x$ and go to state $s_j$ else block;
        \item in state $s_i$, if $C_x=0$ then go to state $s_j$ else block.
    \end{enumerate}

    Assume the machine halts when reaching $s_n$. Then the halting problem for such a machine, i.e., knowing whether the machine will eventually halt or not, is undecidable.
    \label{blockingM2C}
\end{lemma}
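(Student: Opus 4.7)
The plan is to reduce from the halting problem of \emph{deterministic} Minsky two-counter machines, which is classically undecidable, to the halting problem for the blocking variant of the lemma. Recall that a Minsky machine admits the same increment instruction (type 1) but its conditional instruction combines a zero test and a decrement: ``in state $s_i$, if $C_x=0$ go to $s_j$, else decrement $C_x$ and go to $s_k$.'' The blocking variant splits this into a pure (blocking) zero test and a pure (blocking) decrement, so my task is to recover the conjunctive semantics using non-determinism.

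Concretely, given a deterministic Minsky machine $M$, I build a blocking machine $M'$ as follows. All states of $M$ are kept, and each increment instruction is copied verbatim. For every Minsky conditional instruction at state $s_i$ (testing counter $C_x$, with zero-successor $s_j$ and positive-successor $s_k$), I insert at $s_i$ \emph{both} of the blocking instructions: (a) ``if $C_x=0$ go to $s_j$ else block'' (type 3), and (b) ``if $C_x>0$ decrement $C_x$ and go to $s_k$ else block'' (type 2). The non-determinism of $M'$ then chooses between (a) and (b) each time control reaches $s_i$.

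For correctness, interpret halting of a non-deterministic machine as the existence of a halting run. Given any halting run of $M$, the corresponding run of $M'$ obtained by picking (a) whenever $C_x=0$ and (b) whenever $C_x>0$ never blocks and reproduces exactly the same sequence of states and counter values, hence $M'$ halts. Conversely, a halting run of $M'$ uses only non-blocking choices: at each conditional state the unblocked choice has the same effect on counters and control as the unique Minsky instruction it replaces, so projecting yields a halting run of $M$; determinism of $M$ forces that projected run to coincide with the actual computation of $M$, which therefore halts. This establishes the equivalence ``$M$ halts $\iff$ $M'$ halts'', and undecidability is inherited.

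I do not anticipate a real obstacle; the one subtle point is the interpretation of halting for a non-deterministic machine, which must be fixed (as existence of a halting run) for the equivalence to go through cleanly. Everything else is a straightforward instruction-by-instruction simulation.
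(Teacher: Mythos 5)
Your proposal is correct and takes essentially the same route as the paper: replace each deterministic Minsky conditional by the non-deterministic choice between the blocking zero-test and the blocking decrement, observe that exactly the non-blocking run mirrors the deterministic computation, and inherit undecidability. Your write-up is in fact somewhat more explicit than the paper's about the equivalence of halting and about fixing the semantics of halting for a non-deterministic machine as existence of a halting run.
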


\begin{figure}[h]
\vspace*{-2mm}
\centering
\resizebox{1\linewidth}{!}{
\begin{tikzpicture}[node distance=1cm,>=stealth',bend angle=45,auto]
    \node [grayclair place, tokens=0] (px1) [label={[black]above:$P_{x\leq b}$}] {};
    \node [brown place,tokens=0] (fx1) [yshift=-11mm,label={[black]left:$P_{x=b}$},below of=px1] {};
    \node [grayclair place,tokens=0] (pxa) [yshift=-33mm,label={[black]left:$P_{x\geq b}$},below of=px1] {};
    \node [brown place, tokens=0] (fxa) [yshift=-13mm,label={[black]left:$P_{x=a+b}$},below of=pxa] {};

    \node [grayclair place,tokens=0] (py1) [xshift=+55mm,label={[black]above:$P_{y\leq b}$},right of=px1] {};
    \node [brown place,tokens=0] (fy1) [yshift=-11mm,label={[black]right:$P_{y=b}$},below of=py1] {};
    \node [grayclair place,tokens=0] (pya) [yshift=-33mm,label={[black]right:$P_{y\geq b}$},below of=py1] {};
    \node [brown place, tokens=0] (fya) [yshift=-13mm,label={[black]right:$P_{y=a+b}$},below of=pya] {};

   \node [brown place,tokens=0] (pz0) [xshift=-15mm,yshift=-12mm,label={[black]below:$P_{z=0}$},below of=fxa] {};
    \node [grayclair place,tokens=0] (pz1) [xshift=11mm,label={[black]below:$P_{z\leq b}$},right of=pz0] {};
    \node [brown place,tokens=0] (fz1) [xshift=13mm,label={[black]below:$P_{z=b}$},right of=pz1] {};
    \node [grayclair place,tokens=0] (pza) [xshift=33mm,label={[black]below:$P_{z\geq b}$},right of=pz1] {};
    \node [brown place, tokens=0] (fza) [xshift=16mm,label={[black]below:$P_{z=a+b}$},right of=pza] {};


  \node [normal place,tokens=0] (incrCy) [xshift=18mm,yshift=-0mm,label={[black]above:$s_i$},right of=px1] {};
  \node [blue place,tokens=0] (checkx1) [xshift=-9mm,yshift=-13mm,label={[black]right:$P_x$},below of=incrCy] {};
  \node [purple place,tokens=0] (checkya) [xshift=9mm,yshift=-13mm,label={[black]left:$P_y$},below of=incrCy] {};
  \node [blue place,tokens=0] (okx) [xshift=0mm,yshift=-13mm,label={[black]left:$ok_x$},below of=checkx1] {};
  \node [purple place,tokens=0] (oky) [xshift=0mm,yshift=-13mm,label={[black]right:$ok_y$},below of=checkya] {};
  \node [normal place,tokens=0] (endCy) [xshift=0mm,yshift=-65mm,label={[black]below:$s_j$},below of=incrCy] {};

   \node [yellow transition] (emptyz0) [right of=pz0, label={[black]below:$[0,0]$}] {$\varepsilon_{z_{0}}$}
      edge [pre] (pz0)
      edge [post] (pz1);
   \node [normal transition] (tz1) [right of=pz1, label={[black]below:$[b,b]$}] {$t_{z=b}$}
      edge [pre] (pz1)
      edge [post] (fz1);
   \node [yellow transition] (emptyz1) [right of=fz1, label={[black]below:$[0,0]$}] {$\varepsilon_{z_{b}}$}
      edge [pre] (fz1)
      edge [post] (pza);
   \node [normal transition] (tza) [xshift=2mm, right of=pza, label={[black]below:$[a,a]$}] {$t_{z=a+b}$}
      edge [pre] (pza)
      edge [post] (fza);
   \node [yellow transition] (emptyza) [right of=fza, label={[black]below:$[0,0]$}] {$\varepsilon_{z_{a+b}}$}
      edge [pre] (fza);

    \node [normal transition] (txa) [below of=pxa, label={[black]left:$[a,a]$}] {$t_{x={a+b}}$}
      edge [pre] (pxa)
      edge [post] (fxa);
    \node [normal transition] (tx1) [below of=px1, label={[black]left:$[b,b]$}] {$t_{x=b}$}
      edge [pre] (px1)
      edge [post] (fx1);

    \node [normal transition] (tya) [below of=pya, label={[black]right:$[a,a]$}] {$t_{y=a+b}$}
      edge [pre] (pya)
      edge [post] (fya);
    \node [normal transition] (ty1) [below of=py1, label={[black]right:$[b,b]$}] {$t_{y=b}$}
      edge [pre] (py1)
      edge [post] (fy1);

          \node [yellow transition] (emptyxa) [below of=fxa, label={[black]left:$[0,0]$}] {$\varepsilon_{x_{a+b}}$}
      edge [pre] (fxa);
    \node [yellow transition] (emptyx1) [below of=fx1, label={[black]left:$[0,0]$}] {$\varepsilon_{x_b}$}
      edge [pre] (fx1)
      edge [blue,pre] (okx)
      edge [post] (pxa);
    \node [yellow transition] (emptyya) [below of=fya, label={[black]right:$[0,0]$}] {$\varepsilon_{y_{a+b}}$}
      edge [pre] (fya);
    \node [yellow transition] (emptyy1) [below of=fy1, label={[black]right:$[0,0]$}] {$\varepsilon_{y_b}$}
      edge [pre] (fy1)
      edge [post] (pya);

    \node [green transition] (startIncrCy) [below of=incrCy,xshift=0mm,yshift=0mm, label={[blue]right:$[0,0]$}] {$start$}
      edge [pre] (incrCy)
      edge [purple,post] (checkya)
      edge [blue,post] (checkx1);

   \node [purple transition] (ya) [below of=checkya, label={[purple]right:$[0,0]$}] {$R(y)$}
      edge [purple,pre] (fya)
      edge [purple,pre] (checkya)
      edge [purple,post] (oky)
      edge [purple,post] (py1);

  \node [blue transition] (x1) [below of=checkx1, label={[black]left:$[0,0]$}] {$R(x)$}
      edge [blue,pre] (fx1)
      edge [blue,pre] (checkx1)
      edge [blue,post] (okx)
      edge [blue,post] (px1);

    \node [green transition] (doneCy) [below of=startIncrCy,xshift=0mm,yshift=-39mm, label={[blue]right:$[0,0]$}] {$done$}
      edge [blue,pre] (okx)
      edge [purple,pre] (oky)
      edge [pre,bend left=20] (fz1)
      edge [post] (endCy)
      edge [post,bend left=20] (pz0);

    \node [grayclair place,tokens=1] (pp0) [xshift=110mm,label={[black]above:$P_0$},right of=px1] {};
    \node [grayclair place,tokens=1] (pp0a) [label={[black]above:$P_{loop_a}$},right of=pp0] {};
    \node [grayclair place,tokens=0] (pp1) [yshift=-13mm,label={[black]above left:$P_{1}$},below of=pp0] {};
    \node [grayclair place,tokens=0] (ppa) [yshift=-13mm,label={[black]above left:$P_{a}$},below of=pp0a] {};

    \node [normal transition] (loop) [right of=pp0a, label={[black]above:$[0,a]$}] {$loop_a$}
      edge [pre,bend left] (pp0a)
      edge [post,bend right] (pp0a);

    \node [grayclair place,tokens=1] (pp0b) [label={[black]above:$P_{loop_b}$},right of=loop,xshift=5mm] {};
    \node [grayclair place,tokens=0] (ppb) [yshift=-13mm,label={[black]above left:$P_{b}$},below of=pp0b] {};

   \node [normal transition] (loopb) [right of=pp0b, label={[black]above:$[0,b]$}] {$loop_b$}
      edge [pre,bend left] (pp0b)
      edge [post,bend right] (pp0b);

   \node [normal transition] (ta) [below of=pp0a, label={[black]right:$[a,a]$}] {$t_a$}
      edge [pre] (pp0a)
      edge [post] (ppa);

   \node [normal transition] (tb) [below of=pp0b, label={[black]right:$[b,b]$}] {$t_b$}
      edge [pre] (pp0b)
      edge [post] (ppb);

   \node [normal transition] (t1) [below of=pp0, label={[black]left:$[1,1]$}] {$t_1$}
      edge [pre] (pp0)
      edge [post] (pp1);

   \node [normal transition] (emptya) [right of=ppa, label={[black]above:$[0,0]$}] {$\varepsilon_a$}
      edge [pre] (ppa) ;
   \node [normal transition] (emptya) [right of=ppb, label={[black]above:$[0,0]$}] {$\varepsilon_b$}
      edge [pre] (ppb) ;

    \node [normal transition] (empty1) [left of=pp1, label={[black]below:$[0,0]$}] {$\varepsilon_1$}
      edge [pre] (pp1) ;

    \node [brown place,tokens=0] (ppx1) [xshift=-5mm,yshift=-11mm,label={[black]below:$P_{x=b}$},below left of=ppa] {};
    \node [brown place,tokens=0] (ppy1) [yshift=-13mm,xshift=-5mm,label={[black]below:$P_{y=b}$},below  of=ppa] {};
    \node [brown place,tokens=0] (ppz0) [yshift=-13mm,xshift=5mm,label={[black]below:$P_{z=0}$},below of=ppa] {};
    \node [brown place,tokens=0] (ppstart) [yshift=-11mm,xshift=5mm,label={[black]right:$s_0$},below right of=ppa] {};

    \node [normal transition] (tstart) [below of=ppa, label={[black]right:$[0,0]$}] {$start$}
      edge [pre] (ppa)
      edge [pre] (ppb)
      edge [pre] (pp1)
      edge [post] (ppx1)
      edge [post] (ppy1)
      edge [post] (ppz0)
      edge [post] (ppstart);

          \node  (figb) at (14,-6) {\ref{incrCywithb}.b) Initialise the parameters $a$ and $b$};

          \node  (figb2) at (14.6,-6.4) {such that: $0<a\leq1$ and $0<b\leq1$};

        \node  (figa) at (12,-8.6) {\ref{incrCywithb}.a) Encoding of: when in state $s_i$, };

         \node  (figa) at (12.6,-9) {increment  $C_y$ and go to $s_j$};
\end{tikzpicture}\vspace*{-2mm}
}
\caption{ Increment gadget (left) and initial gadget (right) }
\label{incrCywithb}\vspace*{-3mm}
\end{figure}
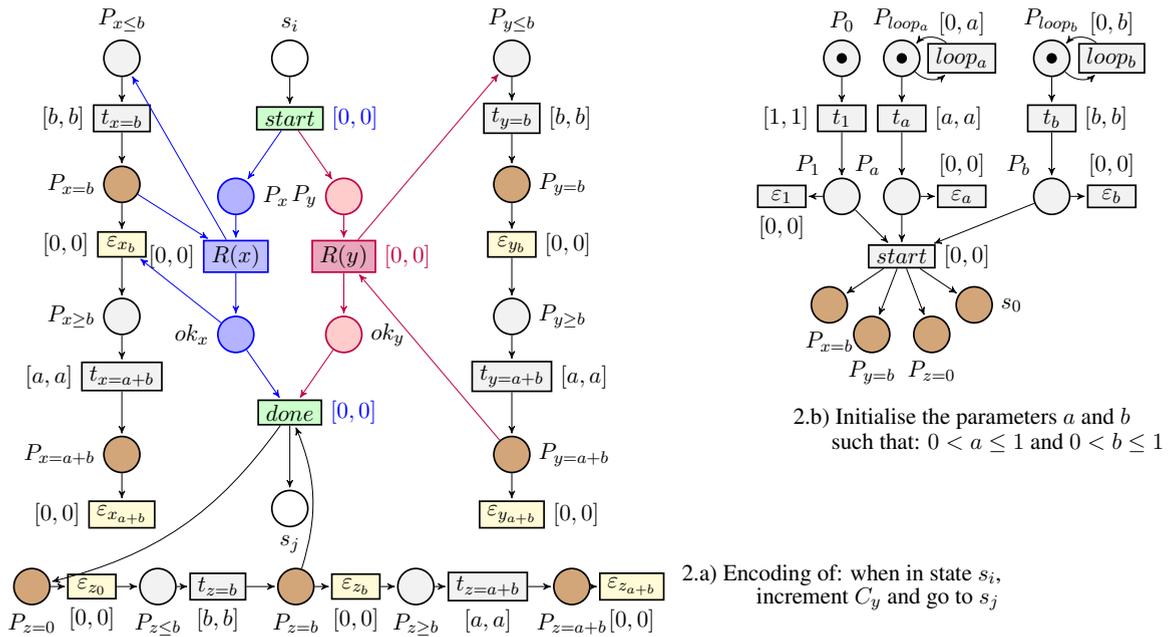

\begin{proof}
    Consider a classical deterministic Minsky machine~\cite{Minsky67}, instead of our decrement and zero test it has instructions of the form: ``in state $s_i$, if $C_x>0$ decrement $C_x$ and go to state $s_j$ else go to $s_k$''.

    We can simulate such an instruction by a non-deterministic choice between instructions:
        ``in state $s_i$, if $C_x>0$ decrement $C_x$ and go to state $s_j$ else block''
        and
        ``in state $s_i$, if $C_x=0$ then go to state $s_k$ else block''.

    It is clear that the resulting machine has exactly one non-blocking path, corresponding to the path of the classical deterministic Minsky machine. Since the halting problem is undecidable for the latter, it is therefore undecidable for our blocking machines.
\end{proof}

\begin{proposition}
\label{TBR-undecidable}
    Existential parametric time bounded reachability is undecidable for bounded parametric time Petri nets.
\end{proposition}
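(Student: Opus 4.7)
My plan is to reduce the halting problem for blocking two-counter machines (Lemma~\ref{blockingM2C}) to existential parametric time bounded reachability. Given such a machine $M$, I would build a bounded parametric TPN $\Net_M$ with two parameters $a$ and $b$ and fix a time bound $\cmax$, so that $M$ halts iff there exists a valuation $v$ of $(a,b)$ for which a designated halt marking of $v(\Net_M)$ is reachable within $\cmax$ time units. Boundedness of $\Net_M$ will be automatic because the construction uses at most one token per place throughout any run, so the resulting net is safe.

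First I would analyse the initialization sub-net on the right of Figure~\ref{incrCywithb}. Its purpose is to force the two parameters to satisfy $0 < a \leq 1$ and $0 < b \leq 1$, and to synchronize the start of the simulation. The key mechanism is that firing $loop_a$ newly re-enables $t_a$ and resets its interval, so the urgent $[a,a]$ transition $t_a$ can be deferred arbitrarily and made to fire at exactly time $1$, simultaneously with $t_b$ and $t_1$; this is only possible when $a, b \leq 1$. Because the $\varepsilon$ transitions are urgent, the $start$ transition, which requires $P_a, P_b, P_1$ jointly marked, can fire only at that instant, producing the initial simulation marking: $s_0$ together with counter tokens in $P_{x=b}$, $P_{y=b}$, $P_{z=0}$.

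Second, I would encode each counter as a single token circulating through the chain $P_{\cdot \leq b} \to P_{\cdot = b} \to P_{\cdot \geq b} \to P_{\cdot = a+b}$, with successive delays $b$ then $a$. One full revolution takes exactly $a+b$ time units, and the counter value is represented by a phase offset within that cycle. For each of the three instruction types of Lemma~\ref{blockingM2C}, I would design a gadget that enters on $s_i$, fires $start$ to produce the synchronization tokens $P_x, P_y$, uses $R(x)$ or $R(y)$ to read or shift the counter phases through $P_{x=b}$ or $P_{y=a+b}$, and finally fires $done$ to produce $s_j$ exactly $a+b$ time units later. Zero tests are realized by requiring a counter token to be aligned with a specific place in its cycle; any misalignment deadlocks the gadget, mirroring the blocking semantics of Lemma~\ref{blockingM2C}.

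The hard part will be the joint quantitative control: each simulation step must last exactly $a+b$ time units and preserve the encoded counter values exactly, yet arbitrarily many steps must fit within the fixed budget $\cmax$. For the ``if $M$ halts then some valuation works'' direction, I would pick $(a,b)$ with $a+b \leq \cmax/N$, where $N$ is the length of the halting run, and small enough that all counter offsets reached during the simulation stay within $[0,b]$ so that the cycle encoding never aliases. For the converse, I would establish an invariant on the reachable states of $v(\Net_M)$ showing that the sequence of $done$ firings induces a unique, legal run of $M$ ending in the halt state. Verifying this invariant, and in particular ruling out ``cheating'' firing orders that exploit the non-determinism of time Petri nets without corresponding to legal moves of $M$, is the delicate step of the reduction.
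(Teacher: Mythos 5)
Your proposal follows essentially the same route as the paper's proof: a reduction from the blocking two-counter machines of Lemma~\ref{blockingM2C}, an initialisation gadget forcing $0<a\leq 1$ and $0<b\leq 1$, counters encoded as phases of clock-like token cycles (the paper's $x=b-aC_x$), instruction gadgets whose misalignment deadlocks the net, and the same time-budget argument in both directions (shrink $a,b$ so a halting run fits within the bound; conversely each step costs at least $b>0$ time, so a non-halting machine cannot reach the goal marking in bounded time). The only minor deviation is that you charge every gadget exactly $a+b$ time units whereas the paper's increment and zero-test gadgets take $b$ and only the decrement takes $a+b$, which does not affect the argument.
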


\begin{proof}
  Given a bounded parametric time Petri net $\Net$, we want to decide whether there exists some parameter valuation $v$ such that some given marking can be reached within $\cmax$ time units in $v(\Net)$. The idea of this proof was first sketched in~\cite{aleksandra-phd-13} for parametric timed automata.
  We encode the halting problem for two-counter machines, which is undecidable~\cite{Minsky67}, into the existential problem for parametric time Petri nets.

    We consider the 2-counter machine variant described in Lemma~\ref{blockingM2C}.
The machine starts in state $s_0$ and halts when it reaches a particular state $s_{halt}$.

Given such a machine $\cal{M}$, we now provide an encoding as a parametric time Petri net $\cal{N_M}$: each state $s_i$ of the machine is encoded as place, which we also call $s_i$.
  The encoding of the 2-counter machine $\cal{M}$ is as follows: it uses two rational-valued parameter $a$ and $b$,
    and three gadgets  shown in Figure~\ref{incrCywithb}.a modelling three clocks $x$, $y$, $z$.
    Provided all intervals are closed, which is the case in this proof, for a state $(m,I,c,v)$, the duration for which an enabled transition $t$ has been continuously enabled, which we call \emph{enabling time} of $t$, is $v(\lft{I_s(t)}-\lft{I(t)})$.
    For the gadget modelling clock $x$, the value of clock $x$ is equal to:
  i) the enabling time of transition $t_{x=b}$ when $P_{x\leq b}$ is marked;
    ii)  $b$ when $P_{x= b}$ is marked;
    iii) the sum of $b$ and the enabling time of transition $t_{x=a+b}$ when $P_{x\geq b}$ is marked (note that this value is lower than $a+b$);
   iv) $a+b$    when $P_{x=a+b}$ is marked;
   v) an unknown (and irrelevant) value in all other cases.
The gadget encoding the increment instruction of $C_y$ is given in Figure~\ref{incrCywithb}.a.
Clocks $x$ and $y$ store the value of each counter $C_x$ and $C_y$ as follows $x =b-aC_x$ and $y =b-aC_y$ when $z=0$.
The  zero-test gadget is given in Figure~\ref{BD:zero-test}.
The system is studied over 2 time units, one of which is spent in the initialisation gadget.

    \paragraph{Initialisation} We use the gadget in~Figure \ref{incrCywithb}.b to initialise $a$ and $b$ such that $0<a \leq 1$ and $0<b \leq 1$.
We explain how this works for $a$, the case of $b$ is similar. First, if $a=0$ then $t_a$ must be fired at date $0$ and the token produced in $P_a$ will be consumed by transition $\varepsilon_a$ before transition $start$ has a chance to fire. Now, if we do not want the token produced in $P_1$ at date $1$ to be consumed for sure by transition $\varepsilon_1$, transition $t_a$ must fire at date $1$ at the latest, and therefore $a$ must be less than or equal to $1$.
   Finally, when $0<a\leq 1$, a sufficient number of iterations of transition $loop_a$ allows a token to be produced in $P_a$ exactly at date $1$: it is for instance always possible to choose some duration $d\leq a$, spent before each iteration of $loop_a$ and a number of iterations $n$ such that $nd = 1-a$.
    All in all, transition $start$ can thus be fired if and only if $0<a\leq 1$ and $0<b\leq 1$.

\paragraph{Increment}
We start from some encoding configuration: $x=b-aC_x$, $y=b-aC_y$ and $z=0$ in a marking such that places $P_{z=0}$ and $s_i$ are marked. %
After the firing of transition $start$, there is an interleaving of transitions $R(x)$ and $R(y)$ that go through the gadget.
    Suppose that $1 < C_x < C_y$ (the other cases are similar). It follows that for $x$ and $y$, places $P_{x\leq b}$ and $P_{y\leq b}$ are marked.
    Then the next transition to fire is $t_{x=b}$ and when it fires we have $x=b$, $y=b-aC_y + aC_x$ and $z=a.C_x$.
    Then $R(x)$ fires, which corresponds to resetting $x$ to $0$: the token is back in $P_{x\leq b}$ and transition $t_{x=b}$ has been enabled for $0$ t.u.
    Then $t_{y=b}$ fires  (because $aC_y \leq b$), and then we have $x=aC_y - aC_x$, $y=b$, and $z=aC_y$.
    Then $t_{y=a+b}$ fires, and we have $x=aC_y - aC_x + a$, $y=a+b$ and $z=aC_y+a$.
    Then $R(y)$ fires, which corresponds to resetting $y$ to $0$.
    Then $t_{z=b}$ fires and we have $x=b - aC_x$, $y=b-aC_y-a$ and $z=b$.
    Finally, $done$ fires which resets $z$ and we have $x=b - aC_x$, $y=b-a(C_y + 1)$ and $z=0$ as expected.
    Note that if $a(C_y +1)> b$, then $t_{z=b}$ must fire before $t_{y=a+b}$ and then $\varepsilon_{z_b}$ fires immediately and $done$ can never be fired.
    Therefore, $v(\cal{N_M})$ will block for all the parameter valuations $v$ that do not correctly encode the machine.\vspace*{-1mm}

\paragraph{Decrement} By replacing the arc from $P_{z=b}$ to $done$ by an arc from $P_{z=a+b}$ to $done$, the only difference with the previous reasoning is that the time to fire $done$ is increased by $a$. Then, when $C_x>0$, we obtain  $z=0$,  $x=b+a-aC_x=b-a(C_x-1)$ and $y=b-aC_y$ corresponding to the decrement of $C_x$.

When $C_x=0$, we must begin with $R(x)$, then the sequence is $t_{y=b},t_{y=a+b},R(y),t_{z=b}$ as before and we obtain $x=b$, $y=b-aCy-a$, $z=b$. Now, since $x=b$ but $P_{x\leq b}$ is still marked, and since $a>0$, $t_{x=b}$  must fire next, i.e., before $t_{z=a+b}$. And then for the same reasons, $\varepsilon_{x_b}$ must fire and consume the token in $ok_x$ which means that $done$ can never fire and the machine is blocked.

We can obtain symmetrically (by swapping $x$ and $y$) the increment of $C_x$ and the decrement of$\,C_y$.\vspace*{-1mm}

\paragraph{Zero-test of $x$}
We start again from some encoding configuration: $x=b-aC_x$, $y=b-aC_y$ and $z=0$ in a marking such that places $P_{z=0}$ and $s_i$ are marked. %
After the firing of transition $start$, there is an interleaving of transitions $R(x)$ and $R(y)$ that go through the gadget.
    Suppose that $0 < C_x < C_y$ (the cases where $C_x>0$ and $C_x\geq C_y$ are similar). As before $P_{x\leq b}$ and $P_{y\leq b}$ are marked.
    Then the next transition to fire is $t_{x=b}$ and when it fires we have $x=b$, $y=b-aC_y + aC_x$ and $z=a.C_x$.
    Then $R(x)$ fires and $x$ is reset to $0$.
    Then $t_{y=b}$ fires, and we have $x=aC_y - aC_x$, $y=b$, and $z=aC_y$.
    Then $R(y)$ fires and $y$ is reset to $0$.
    Then $t_{z=b}$ fires and we have $x=b - aC_x$, $y=b-aC_y$ and $z=b$.
    But then place $P_{x\leq b}$ is marked and not place $P_{x=b}$ so transition $zero$ cannot fire. Also $x<b$ because $C_x>0$ so transition $t_{x=b}$ cannot fire immediately. So the only firable transition is $notzero$, and firing it blocks the machine.

    Suppose now that $C_x = 0$ and $C_y>0$ (the case where both are $0$ is similar). Then $P_{x = b}$ and $P_{y\leq b}$ are marked.
    We then fire $R(x)$ so $x$ becomes $0$.
    Then we fire $t_{y=b}$, $R(y)$, and $t_{z=b}$ as before, giving $x=b$, $y=b-aC_y$ and $z=b$.
    But now $t_{x=b}$ is firable, so we fire it and get $P_{x=b}$ marked so $zero$ can now fire which resets $z$ and we finally have $x=b$, $y=b-aC_y$ and $z=0$ as expected.\vspace*{-1mm}

 \paragraph{Proving the equivalence}
Both the increment gadget and the zero-test gadget require $b$ time units, and the decrement gadget  requires $(a+b)$ time units.	
Since the system executes over $1$ time unit, for any value of $a > 0$ and $b>0$, the number of operations that the machine can perform is finite.
We  consider two cases:
\begin{enumerate}
\itemsep=0.9pt
\item Either the machine halts. Then both counters $C_x$ and $C_y$ are bounded. Let $c$ be their maximum value over the whole execution and let $m$ be the number of steps of the finite halting execution of the machine.
If $c=0$ then the machine is a sequence of $m$ zero-tests taking $m.b$ time units and the parametric Petri net $\cal{N_M}$ can go within $1$ time unit to a marking $m_{halt}$ if $0<a \leq 1$ and $0<b \leq \frac{1}{m}$.
If $c>0$, since an instruction requires at most $a+b$ time units, if $a+b \leq \frac{1}{m}$ and if $0<a\leq \frac{b}{c}$
then there exists a run that correctly simulates the machine, and eventually reaches $m_{halt}$ within $1$ time unit.

This set of valuations is non-empty: for example if $c = 0$, then we can choose $a=b=\frac{1}{m}$ and if $c > 0$, then, since $m\geq c$, we can choose $a=\frac{b}{m}$  and $b=\frac{1}{1+m}$ hence $a=\frac{1}{m(1+m)}$.
\eject
\item Or the machine does not halt.
A step requires at least $b$ time units then for any value $v$ of the parameters, after a maximum number of steps (at most $\frac{1}{b}$), one whole time unit will elapse
without $v(\cal{N_M})$ reaching $m_{halt}$.
\end{enumerate}

\vspace*{-7mm}
\end{proof}

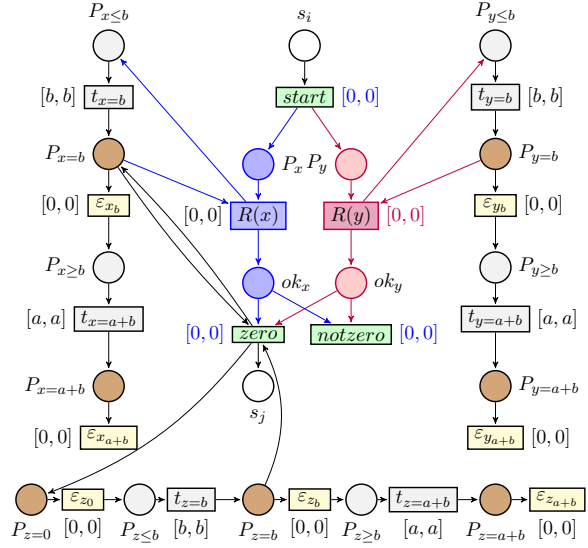
\begin{figure}[h]
\centering
\resizebox{.5\linewidth}{!}{
\begin{tikzpicture}[node distance=1.0cm,>=stealth',bend angle=45,auto]
     \node [grayclair place, tokens=0] (px1) [label={[black]above:$P_{x\leq b}$}] {};
    \node [brown place,tokens=0] (fx1) [yshift=-11mm,label={[black]left:$P_{x=b}$},below of=px1] {};
    \node [grayclair place,tokens=0] (pxa) [yshift=-33mm,label={[black]left:$P_{x\geq b}$},below of=px1] {};
    \node [brown place, tokens=0] (fxa) [yshift=-13mm,label={[black]left:$P_{x=a+b}$},below of=pxa] {};

    \node [grayclair place,tokens=0] (py1) [xshift=+65mm,label={[black]above:$P_{y\leq b}$},right of=px1] {};
    \node [brown place,tokens=0] (fy1) [yshift=-11mm,label={[black]right:$P_{y=b}$},below of=py1] {};
    \node [grayclair place,tokens=0] (pya) [yshift=-33mm,label={[black]right:$P_{y\geq b}$},below of=py1] {};
    \node [brown place, tokens=0] (fya) [yshift=-13mm,label={[black]right:$P_{y=a+b}$},below of=pya] {};

   \node [brown place,tokens=0] (pz0) [xshift=-15mm,yshift=-12mm,label={[black]below:$P_{z=0}$},below of=fxa] {};
    \node [grayclair place,tokens=0] (pz1) [xshift=11mm,label={[black]below:$P_{z\leq b}$},right of=pz0] {};
    \node [brown place,tokens=0] (fz1) [xshift=13mm,label={[black]below:$P_{z=b}$},right of=pz1] {};
    \node [grayclair place,tokens=0] (pza) [xshift=33mm,label={[black]below:$P_{z\geq b}$},right of=pz1] {};
    \node [brown place, tokens=0] (fza) [xshift=16mm,label={[black]below:$P_{z=a+b}$},right of=pza] {};


  \node [normal place,tokens=0] (incrCy) [xshift=28mm,yshift=-0mm,label={[black]above:$s_i$},right of=px1] {};
  \node [blue place,tokens=0] (checkx1) [xshift=-9mm,yshift=-13mm,label={[black]right:$P_x$},below of=incrCy] {};
  \node [purple place,tokens=0] (checkya) [xshift=9mm,yshift=-13mm,label={[black]left:$P_y$},below of=incrCy] {};
  \node [blue place,tokens=0] (okx) [xshift=0mm,yshift=-13mm,label={[black]right:$ok_x$},below of=checkx1] {};
  \node [purple place,tokens=0] (oky) [xshift=0mm,yshift=-13mm,label={[black]right:$ok_y$},below of=checkya] {};
  \node [normal place,tokens=0] (endCy) [xshift=0mm,yshift=-10mm,label={[black]below:$s_j$},below of=okx] {};

   \node [yellow transition] (emptyz0) [right of=pz0, label={[black]below:$[0,0]$}] {$\varepsilon_{z_{0}}$}
      edge [pre] (pz0)
      edge [post] (pz1);
   \node [normal transition] (tz1) [right of=pz1, label={[black]below:$[b,b]$}] {$t_{z=b}$}
      edge [pre] (pz1)
      edge [post] (fz1);
   \node [yellow transition] (emptyz1) [right of=fz1, label={[black]below:$[0,0]$}] {$\varepsilon_{z_{b}}$}
      edge [pre] (fz1)
      edge [post] (pza);
   \node [normal transition] (tza) [right of=pza,xshift=2mm, label={[black]below:$[a,a]$}] {$t_{z=a+b}$}
      edge [pre] (pza)
      edge [post] (fza);
   \node [yellow transition] (emptyza) [right of=fza,xshift=2mm, label={[black]below:$[0,0]$}] {$\varepsilon_{z_{a+b}}$}
      edge [pre] (fza);

    \node [normal transition] (txa) [below of=pxa, label={[black]left:$[a,a]$}] {$t_{x={a+b}}$}
      edge [pre] (pxa)
      edge [post] (fxa);
    \node [normal transition] (tx1) [below of=px1, label={[black]left:$[b,b]$}] {$t_{x=b}$}
      edge [pre] (px1)
      edge [post] (fx1);

    \node [normal transition] (tya) [below of=pya, label={[black]right:$[a,a]$}] {$t_{y=a+b}$}
      edge [pre] (pya)
      edge [post] (fya);
    \node [normal transition] (ty1) [below of=py1, label={[black]right:$[b,b]$}] {$t_{y=b}$}
      edge [pre] (py1)
     edge [post] (fy1);

          \node [yellow transition] (emptyxa) [below of=fxa, label={[black]left:$[0,0]$}] {$\varepsilon_{x_{a+b}}$}
      edge [pre] (fxa);
    \node [yellow transition] (emptyx1) [below of=fx1, label={[black]left:$[0,0]$}] {$\varepsilon_{x_b}$}
      edge [pre] (fx1)
      edge [post] (pxa);
    \node [yellow transition] (emptyya) [below of=fya, label={[black]right:$[0,0]$}] {$\varepsilon_{y_{a+b}}$}
      edge [pre] (fya);
    \node [yellow transition] (emptyy1) [below of=fy1, label={[black]right:$[0,0]$}] {$\varepsilon_{y_b}$}
      edge [pre] (fy1)
      edge [post] (pya);

    \node [green transition] (startIncrCy) [below of=incrCy,xshift=0mm,yshift=0mm, label={[blue]right:$[0,0]$}] {$start$}
      edge [pre] (incrCy)
      edge [purple,post] (checkya)
      edge [blue,post] (checkx1);

   \node [purple transition] (ya) [below of=checkya, label={[purple]right:$[0,0]$}] {$R(y)$}
      edge [purple,pre] (fy1)
      edge [purple,pre] (checkya)
      edge [purple,post] (oky)
      edge [purple,post] (py1);

  \node [blue transition] (x1) [below of=checkx1, label={[black]left:$[0,0]$}] {$R(x)$}
      edge [blue,pre] (fx1)
      edge [blue,pre] (checkx1)
      edge [blue,post] (okx)
      edge [blue,post] (px1);

    \node [green transition] (doneCy) [below of=okx, label={[blue]left:$[0,0]$}] {$zero$}
      edge [blue,pre] (okx)
      edge [pre, bend left=5] (fx1)
      edge [post, bend right=5] (fx1)
      edge [purple,pre] (oky)
      edge [pre,bend left=25] (fz1)
      edge [post] (endCy)
      edge [post,bend left=15] (pz0);
    \node [green transition] (doneCyp) [below of=oky, label={[blue]right:$[0,0]$}] {$not zero$}
      edge [blue,pre] (okx)
      edge [purple,pre] (oky);
\end{tikzpicture}
}\vspace*{-3mm}
\caption{Encoding 0-test over bounded-time: when in state $s_i$, if  $C_x=0$ then go to $s_j$}
\label{BD:zero-test}\vspace*{-6mm}
\end{figure}

\section{Symbolic semi-algorithms for parameter synthesis}
\label{sec:semialgo}

\subsection{State classes}

We now introduce the notion of state classes for pcTPNs. It was originally introduced for time Petri nets in~\cite{berthomieu-IFIP-83,berthomieu-TSE-91}, and extended for timing parameters in~\cite{traonouez-JUCS-09}, and for costs in~\cite{boucheneb-FORMATS-17}. We show that those two extensions seamlessly blend together.

For an  arbitrary sequence of transitions $\sigma = t_1\dots t_n \in T^*$, let $C_\sigma$ be the set of all
states that can be reached by the sequence $\sigma$ from any initial state $q_0$ : $C_\sigma =\{q \in Q | q_0 \xhookrightarrow{t_1}q_1\cdots \xhookrightarrow{t_n} q\}$.
 All the states of $C_\sigma$  share the same marking and can therefore be written as a pair $(m, D)$ where $m$ is the common marking and, if we note $\en{m}=\{t_1,\ldots,t_n\}$, then $D$ is the set of points $(\theta_{1}, \ldots,\theta_{n},c,v)$ such that $(m,I,c,v)\in C_\sigma$ and for all $t_i\in\en{m}$, $\theta_{i}\in I(t_i)$. For short, we will often write $(\vec{\theta},c,v)$ for such a point, with $\vec{\theta}=(\theta_1,\ldots,\theta_n)$ and a small abuse of notation. We denote by $\Theta$ the set of $\theta_i$ variables, of which we have one per transition of the net: for the sake of simplicity, we will usually use the same index to denote for instance that $\theta_i$ corresponds to transition $t_i$.

$C_\sigma$ is called a \emph{state class} and $D$ is its \emph{firing domain}.

Lemma~\ref{lemma:runs} equivalently characterises state classes, as a straightforward reformulation of the definition:
\begin{lemma}
    For all classes $C_\sigma=(m,D)$, $(\vec{\theta},c,v) \in D$ if and only if there exists a run $\rho$ in $v(\Net)$, and $I:\en{m}\rightarrow \Intervals{\Qp}$, such that $\sequence(\rho)=\sigma$, $(m,I,c)=\last(\rho)$, and $\vec{\theta}\in I$.
    \label{lemma:runs}
\end{lemma}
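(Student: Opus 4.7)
The plan is to prove both implications by straightforward unfolding of the definitions of $C_\sigma$, of its firing domain $D$, and of the TTS semantics given in Definition~\ref{def:sem-pctpn}. No inductive argument on $|\sigma|$ is needed, since $C_\sigma$ is already defined as the set of endpoints of $\sigma$-sequences in the TTS.

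For the forward direction, suppose $(\vec{\theta},c,v)\in D$. By the definition of $D$, there is some interval function $I$ such that $(m,I,c,v)\in C_\sigma$ and $\theta_i\in I(t_i)$ for every $t_i\in\en{m}$. By the definition of $C_\sigma$, there are states $q_1,\ldots,q_{n-1}$ and an initial state $q_0$ such that $q_0\xhookrightarrow{t_1}q_1\cdots\xhookrightarrow{t_{n-1}}q_{n-1}\xhookrightarrow{t_n}(m,I,c,v)$. Unfolding each $\xhookrightarrow{t_i}$ as a time elapse followed by the discrete firing of $t_i$ yields a run $\rho$ of the pcTPN with $\sequence(\rho)=\sigma$ and $\last(\rho)=(m,I,c,v)$. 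Since by Definition~\ref{def:sem-pctpn} the parameter component $v$ is preserved by both time and discrete transitions, every firing interval used along $\rho$ is actually of the form $v(I_s(t))$; this means $\rho$ canonically corresponds to a run in $v(\Net)$. Then $I$ and $\vec{\theta}\in I$ give the required witnesses.

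For the backward direction, take a run $\rho$ in $v(\Net)$ with $\sequence(\rho)=\sigma$, $(m,I,c)=\last(\rho)$, and $\vec{\theta}\in I$. Re-reading $\rho$ as a run of the parametric net $\Net$ under the constant parameter valuation $v$, its last state becomes $(m,I,c,v)$, which, by the very definition of $C_\sigma$ (concatenating the delay/firing pairs of $\rho$ into $\xhookrightarrow{t_i}$ arrows), belongs to $C_\sigma$. The hypothesis $\vec{\theta}\in I$ rewrites as $\theta_i\in I(t_i)$ for each $t_i\in\en{m}$, which is exactly the membership condition defining $D$. Hence $(\vec{\theta},c,v)\in D$.

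The only subtle point is the translation between runs of the parametric TTS with $v$ carried in every state and runs of the non-parametric net $v(\Net)$. This is a bijection at the level of sequences of timed transitions (since $v$ is invariant along any single run and since discrete firings in Definition~\ref{def:sem-pctpn} always take $I'(t')=v(I_s(t'))$ on newly enabled transitions). Making this identification explicit is the only non-routine step; everything else is a direct rewriting of the definitions.
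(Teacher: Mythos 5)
Your proof is correct and matches the paper's treatment: the paper gives no separate proof for this lemma, explicitly presenting it as ``a straightforward reformulation of the definition'' of $C_\sigma$ and $D$, which is precisely the definition-unfolding you carry out. Your explicit handling of the correspondence between runs of the parametric transition system (which carry $v$ in every state) and runs of $v(\Net)$ is the one point the paper leaves implicit, and you resolve it correctly.
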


From Lemma~\ref{lemma:runs}, we can then deduce a characterisation of the ``next'' class, obtained by firing a firable transition from some other class. This is expressed by Lemma~\ref{lemma:succ}.
\begin{lemma}
Let $C_{\sigma} = (m,D)$ and $C_{\sigma.t_f} = (m',D')$, we have:
$$(\vec{\theta'}, c', v)\in D'\text{ iff }\exists(\vec{\theta},c, v)\in D\text{ s.t. }
\left\{\begin{array}{l}
    \forall t_i\in\en{m}, \theta_i - \theta_f\geq 0\\
    \forall t_i\in\en{m-\pre{t_f}}, \theta'_i=\theta_i-\theta_f\\
    \forall t_i\in\newen{m,t_f}, \theta'_i\in v(I_s)(t_i)\\
    c' = c + \costrate(m)*\theta_f + \discreteCost(t_f)
\end{array}\right.$$
    \label{lemma:succ}
\end{lemma}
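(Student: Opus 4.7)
\medskip\noindent\textbf{Proof plan.}
The statement is a direct consequence of Lemma~\ref{lemma:runs} combined with the definition of the two transition types in the semantics (Definition~\ref{def:sem-pctpn}). I would prove the two implications separately, both by run-decomposition.

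For the ``only if'' direction, suppose $(\vec{\theta'},c',v)\in D'$. By Lemma~\ref{lemma:runs} applied to $\sigma\cdot t_f$, there exists a run $\rho'$ of $v(\Net)$ with $\sequence(\rho')=\sigma\cdot t_f$, an interval function $I'$ on $\en{m'}$ with $(m',I',c')=\last(\rho')$, and $\vec{\theta'}\in I'$. The last two steps of $\rho'$ have the shape $(m,I,c,v)\xrightarrow{\theta_f}(m,I\minusI\theta_f,c+\costrate(m)\theta_f,v)\xrightarrow{t_f}(m',I',c',v)$ for some duration $\theta_f$ and some interval function $I$ on $\en{m}$. The prefix $\rho$ of $\rho'$ ending in $(m,I,c,v)$ has $\sequence(\rho)=\sigma$. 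For every $t_i\in\en{m}$ with $i\neq f$, pick $\theta_i\in I(t_i)$ such that $\theta_i-\theta_f = \theta'_i$ whenever $t_i\in\en{m-\pre{t_f}}$ (this is possible because the definition of $I\minusI\theta_f$ ensures $\theta'_i+\theta_f\in I(t_i)$), and pick $\theta_i$ to be any value of $I(t_i)$ with $\theta_i\geq\theta_f$ in the remaining case of transitions disabled by $t_f$ (such values exist because $\theta_f\leq\lft{I(t_i)}$ by the time-transition condition). Set $\theta_f$ itself as the $f$-th component; it lies in $I(t_f)$ since $\eft{I(t_f)\minusI\theta_f}=0$ means $\theta_f\geq\eft{I(t_f)}$ and by the time-transition condition $\theta_f\leq\lft{I(t_f)}$. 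Applying Lemma~\ref{lemma:runs} to $\rho$ gives $(\vec{\theta},c,v)\in D$. The four required conditions then follow, respectively, from: the time-transition condition $\theta_f\leq\theta_i$ for every $t_i\in\en{m}$; the definition of $I'(t_i)=I(t_i)\minusI\theta_f$ for $t_i\in\en{m-\pre{t_f}}\cap\en{m'}$ (i.e., not newly enabled); the definition $I'(t_i)=v(I_s(t_i))$ for $t_i\in\newen{m,t_f}$; and the cost updates along the time and discrete steps.

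For the ``if'' direction, suppose $(\vec{\theta},c,v)\in D$ satisfies the four conditions. By Lemma~\ref{lemma:runs} there exists a run $\rho$ with $\sequence(\rho)=\sigma$, last state $(m,I,c,v)$ and $\vec{\theta}\in I$. The condition $\theta_i-\theta_f\geq 0$ for every $t_i\in\en{m}$ legitimises the time transition of duration $\theta_f$ (since $\theta_i\in I(t_i)$ provides the required witness $d'\geq d=\theta_f$). Because $\theta_f\in I(t_f)$, after this elapse $\eft{I(t_f)\minusI\theta_f}=0$, so $t_f$ can be fired. Extending $\rho$ with these two steps yields a run $\rho'$ with $\sequence(\rho')=\sigma\cdot t_f$ and last state $(m',I',c',v)$, where $I'$ coincides with $I\minusI\theta_f$ on non-newly-enabled transitions and with $v(I_s)$ on newly enabled ones. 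The four conditions then ensure that $\vec{\theta'}\in I'$ and that $c'=c+\costrate(m)\theta_f+\discreteCost(t_f)$. One more application of Lemma~\ref{lemma:runs} gives $(\vec{\theta'},c',v)\in D'$.

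The only mildly subtle point is the choice of the $\theta_i$'s for transitions that are enabled in $m$ but consumed by the firing of $t_f$ (i.e., in $\en{m}\setminus\en{m-\pre{t_f}}$): they do not appear in $\vec{\theta'}$, so one has to check that the time-transition condition alone suffices to pick admissible values in $I(t_i)$ with $\theta_i\geq\theta_f$. This is precisely what $\theta_f\leq\lft{I(t_i)}$ guarantees. Everything else is straightforward unfolding of the semantics and of the definition of $D$, $D'$.
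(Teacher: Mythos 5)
Your proof is correct and follows essentially the same route as the paper's: both directions apply Lemma~\ref{lemma:runs} to decompose (resp.\ extend) a run realising $\sigma.t_f$ into a run realising $\sigma$ followed by a delay $\theta_f$ and the firing of $t_f$, then read the four conditions off the semantics. You simply make explicit the choice of the $\theta_i$ components (persistent, disabled, and $t_f$ itself) that the paper leaves implicit in ``it follows that there exists a point $\vec{\theta}\in I$ with the desired properties,'' and you spell out the converse direction that the paper dismisses as ``similar.''
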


\begin{proof}
    Consider $(\vec{\theta}', c', v)\in D'$. Then by Lemma~\ref{lemma:runs}, there exists a run $\rho'$ in $v(\Net)$, and $I':\en{m}\rightarrow \Intervals{\Qp}$, such that $\sequence(\rho')=\sigma.t_f$, $(m',I',c')=\last(\rho')$, and $\vec{\theta'}\in I'$.
    Consider the prefix $\rho$ of $\rho'$ such that $\sequence(\rho)=\sigma$. The last state of $\rho$ can be written $(m,I,c,v)$ for some $I$ and $c$. We know that $t_f$ is fired from $(m,I,c,v)$ so there exists some delay $d$ such that $\eft{I(t_f)}\leq d$ and for all other transitions $t_i$ enabled by $m$, $\lft{I(t_i)} \geq d$. Furthermore, $c = c'-\costrate(m)*d - \discreteCost(t_f)$.  It follows that there exists a point $\vec{\theta}\in I$ with the desired properties.

    The other direction is similar.
\end{proof}

Note that according to Lemma~\ref{lemma:succ}, $D'$ is not empty if and only if there exists $(\vec{\theta}, c, v)$ in $D$ such that for all $t_i\in\en{m}$, $\theta_i \geq \theta_f$. In that case we say that $t_f$ is \emph{firable} from $(m,D)$ and note $t_f \in\firable{(m,D)}$.

From Lemma~\ref{lemma:succ}, it follows that $C_{\sigma.t_f}$ can be computed from $C_{\sigma}$ using Algorithm~\ref{algo:nextclass}. Note that it is formally the same algorithm as in~\cite{boucheneb-FORMATS-17}.

Given a class $C$ and a transition $t$ firable from $C$, we note $\Next(C,t)$ the result of applying Algorithm~\ref{algo:nextclass} to $C$ and $t$.\medskip
\begin{algorithm}[!h]
    \begin{algorithmic}[1]
        \STATE $m'\leftarrow m -\pre{t_f}+\post{t_f}$
        \STATE $D'\leftarrow D\wedge \bigwedge_{i\neq f, t_i\in\en{m}} \theta_f \leq \theta_i$\label{line:fireFirst}
        \STATE for all $t_i\in\en{m-\pre{t_f}},i\neq f$, add variable $\theta'_i$ to $D'$, constrained by $\theta_i = \theta'_i + \theta_f$\label{line:renameThetas}
        \STATE add variable $c'$ to $D'$, constrained by $c'=c+\theta_f *\costrate(m)+\discreteCost(t_f)$\label{line:cost1}
        \STATE eliminate (by projection) variables $c$, $\theta_i$ for all $i$ from $D'$\label{line:cost2}
        \STATE for all $t_j\in\newen{m,t_f}$, add variable $\theta'_j$ to $D'$, constrained by $\theta'_j\in I_s(t_j)$\label{line:newTransitions}
    \end{algorithmic}
    \caption{Successor $(m', D')$ of $(m, D)$ by firing $t_f$}
    \label{algo:nextclass}
\end{algorithm}

Let $C_0=(m_0,D_0)$ be the initial class. Domain $D_0$ is defined by the constraints $\forall t_i\in\en{m_0}, \theta_i\in I_s(t_i)$, $\forall t\in T, I_s(t)\neq\emptyset$, and $c=0$.
This gives a convex polyhedron of $\Rp^{|\en{m_0}|+|\Params|+1}$; since all the operations on domains in Algorithm~\ref{algo:nextclass} are polyhedral, all the domains of state classes are also convex polyhedra.
Note that only enabled transitions are constrained in the domain of a state class.

\medskip
Naturally, we define the \emph{cost} of state class $C_\sigma$ as $\costclass(C_\sigma)=\costclass(\sigma)$.

\begin{example}
    For the net in Figure~\ref{fig:pctpn}, the initial state class is defined by $m_0=\{p_0,p_1\}$ and $D_0=\{\theta_0=a,2\leq \theta_1\leq 5, c = 0, a\geq 0\}$. Firing $t_0$ from $C_0$, we get $C_1=(m_1, D_1)$ with $m_1=m_0$, and $D_1=\{t_0=a, 0\leq t_1, 2-a\leq t_1 \leq 5-a, c=2+3a, a\leq 5\}$. More generally, after firing $n$ times $t_0$ consecutively, we get $C_n=(m_0, D_n)$ with $D_n=\{t_0=a, 0\leq t_1, 2-na\leq t_1 \leq 5-na, c=n(2+3a), na\leq 5\}$. Finally, when firing $t_1$ from $C_n$, with $n>0$, we get $C'_n=(\{p_0,p_2\}, D'_n)$ with $D'_n=\{0\leq t_0\leq a, (n+1)a -5 \leq t_0\leq (n+1)a -2, c= n(2+3a)+3(a-t_0), \frac{2}{n+1}\leq a\leq \frac{5}{n}\}$. When firing $t_1$ from $C_0$, we get $C'_0=(\{p_0,p_2\}, D'_0)$ with $D'_0 = \{0\leq t_0\leq a, a -5 \leq t_0\leq a -2, c= 3(a-t_0), 2\leq a\}$.
    \label{ex:sc}
\end{example}

In the next two subsections we present parameter synthesis algorithms based on the state classes with parameters and cost we have just defined.

\subsection{Bounded-cost synthesis semi-algorithm}

\begin{algorithm}[!h]
    \begin{algorithmic}[1]
    \STATE $\aPoly\leftarrow \emptyset$
    \STATE $\aPassed\leftarrow \emptyset$
    \STATE $\aWaiting\leftarrow \{(m_0, D_0)\}$
    \WHILE {$\aWaiting\neq \emptyset$}
        \STATE select $C_\sigma = (m, D)$ from $\aWaiting$\label{line:select}
        \IF {$m\in \aGoal$}\label{line:mInGoal}
            \STATE $\aPoly\leftarrow \aPoly\cup\proj{\big(D\cap (c \leq \cmax)\big)}{\Params}$\label{line:updatePoly}
        \ENDIF
        \IF {for all $C'\in \aPassed, C_\sigma\not\costincl C'$}\label{line:passed}
            \STATE add $C_\sigma$ to $\aPassed$\label{line:addToPassed}
            \STATE for all $t\in\firable{C_\sigma}$, add $C_{\sigma.t}$ to $\aWaiting$\label{line:addToWaiting}
        \ENDIF
    \ENDWHILE
        \RETURN $\aPoly$
\end{algorithmic}
\caption{Symbolic semi-algorithm computing all parameter valuations such that some markings are reachable with a bounded cost.}
\label{algo:boundedsynth}
\end{algorithm}

We start with the problem of finding parameter valuations for which we can reach some given markings, with a cost that is less or equal to a given constant $\cmax$.
In Algorithm~\ref{algo:boundedsynth}, we explore the symbolic state-space in a classical manner. Whenever a goal marking is encountered we collect the parameter valuations that allowed that marking to be reached with a cost less or equal to $\cmax$.

\medskip
The $\aPassed$ list records the visited symbolic states. Instead of checking new symbolic states for membership, we test a weaker relation denoted by $\costincl$: does there exist a visited state allowing more behaviors with a cheaper cost?

\medskip
For any state class $C = (m, D)$ and any point $(\vec{\theta},v)\in \proj{D}{\Theta\cup\Params}$, the optimal cost of $(\vec{\theta},v)$ in $D$ is defined by $\costclass_D(\vec{\theta},v) = \inf_{(\vec{\theta},c,v)\in D} c$.

\begin{definition}
    Let $C=(m, D)$ and $C'=(m', D')$ be two parametric cost state classes.
    We say that $C$ is subsumed by $C'$, which we denote by $C\costincl C'$ iff
        $m=m'$,
        $\proj{D}{\Theta\cup\Params}\subseteq \proj{D'}{\Theta\cup\Params}$,
        and for all $(\vec{\theta},v)\in \proj{D}{\Theta\cup\Params}, \costclass_{D'}(\vec{\theta},v) \leq \costclass_D(\vec{\theta},v)$.
    \label{def:costincl}
\end{definition}

The following result is a fairly direct consequence of Definition~\ref{def:costincl}:
\begin{lemma}
    Let $C_{\sigma_1}$ and $C_{\sigma_2}$ be two state classes such that $C_{\sigma_1}\costincl C_{\sigma_2}$.

    If a transition sequence $\sigma$ is firable from $C_{\sigma_1}$, it is also firable from $C_{\sigma_2}$ and $\costclass(C_{\sigma_1.\sigma}) \geq \costclass(C_{\sigma_2.\sigma})$.
    \label{lemma:costincl}
\end{lemma}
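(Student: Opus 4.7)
The plan is to proceed by induction on $|\sigma|$, with the key work concentrated in a single auxiliary claim: that the subsumption relation $\costincl$ is preserved by $\Next$, that is, if $C_1 \costincl C_2$ and a transition $t$ is firable from $C_1$, then $t$ is firable from $C_2$ and $\Next(C_1,t) \costincl \Next(C_2,t)$. Once this is established, the lemma follows by iterating along $\sigma$ and then invoking the base case $|\sigma|=0$ to transfer the subsumption inequality into a cost inequality on the final classes.

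For the base case $\sigma = \epsilon$, I would unfold the definitions: $\costclass(C_{\sigma_i}) = \inf\{c \mid \exists\vec{\theta},v,\ (\vec{\theta},c,v)\in D_i\}$, which under the closed-interval assumption is attained at some $(\vec{\theta}_*, c_*, v_*)\in D_1$. Since $(\vec{\theta}_*,v_*)\in\proj{D_1}{\Theta\cup\Params}\subseteq\proj{D_2}{\Theta\cup\Params}$, Definition~\ref{def:costincl} gives a point $(\vec{\theta}_*,c_{**},v_*)\in D_2$ with $c_{**}\leq c_*$, whence $\costclass(C_{\sigma_2})\leq c_{**}\leq c_*=\costclass(C_{\sigma_1})$.

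The heart of the argument is the preservation claim. Write $C_i = (m,D_i)$ with $m_1=m_2=m$, and $\Next(C_i,t) = (m',D_i')$ with $m'=m-\pre{t}+\post{t}$ common to both. Firability of $t$ from $C_2$ follows because the witness $(\vec{\theta},c,v)\in D_1$ of firability (Lemma~\ref{lemma:succ}) satisfying $\theta_i\geq\theta_t$ for $t_i\in\en{m}$ has its projection $(\vec{\theta},v)$ in $\proj{D_2}{\Theta\cup\Params}$, so some $(\vec{\theta},c',v)\in D_2$ inherits the same $\theta_i\geq\theta_t$. For the subsumption $D_1'\costincl D_2'$, I take $(\vec{\theta'},v)\in\proj{D_1'}{\Theta\cup\Params}$ realizing the minimum cost $c_{*,1}=\costclass_{D_1'}(\vec{\theta'},v)$; Lemma~\ref{lemma:succ} yields a predecessor $(\vec{\theta}_*,c_*,v)\in D_1$ with $c_{*,1}=c_* + \theta_{*,t}\costrate(m)+\discreteCost(t)$ and $\theta'_i=\theta_{*,i}-\theta_{*,t}$ on transitions still enabled after $t$. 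Applying the subsumption $C_1\costincl C_2$ at the projection $(\vec{\theta}_*,v)$ produces $(\vec{\theta}_*,c_{**},v)\in D_2$ with $c_{**}\leq c_*$; re-applying Lemma~\ref{lemma:succ} forward from this point with the same $\theta_{*,t}$ (firability is still witnessed) and picking the same values of $\theta'_j$ for $t_j\in\newen{m,t}$ (those constraints depend only on $v$) yields $(\vec{\theta'},c_{*,2},v)\in D_2'$ with $c_{*,2}=c_{**}+\theta_{*,t}\costrate(m)+\discreteCost(t)\leq c_{*,1}$. This simultaneously establishes $(\vec{\theta'},v)\in\proj{D_2'}{\Theta\cup\Params}$ and $\costclass_{D_2'}(\vec{\theta'},v)\leq\costclass_{D_1'}(\vec{\theta'},v)$.

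The inductive step is then immediate: writing $\sigma=t.\sigma'$, the preservation claim ensures $t$ is firable from $C_{\sigma_2}$ and $C_{\sigma_1.t}\costincl C_{\sigma_2.t}$; the induction hypothesis applied to $\sigma'$ then yields firability of $\sigma'$ from $C_{\sigma_2.t}$ and the cost inequality on the terminal classes. The main obstacle I anticipate is precisely the preservation claim: one must be careful that the additive shift $\theta_{*,t}\costrate(m)+\discreteCost(t)$ is identical on both sides (which works because $m$ and the firing delay $\theta_{*,t}$ are shared), and that the freedom in choosing fresh $\theta'_j$ for newly enabled transitions is exploited identically, so that the specific $\vec{\theta'}$ we are reasoning about is reached in both $D_1'$ and $D_2'$ without spurious additional constraints.
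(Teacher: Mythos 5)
Your proposal is correct and follows essentially the same route as the paper's proof: show that $\costincl$ transfers firability and is preserved by $\Next$ (the key point being that the cost shift $\theta_f\costrate(m)+\discreteCost(t_f)$ is identical on both sides), then conclude by induction on $|\sigma|$ with the base case giving the cost inequality. The paper merely asserts the one-step preservation as a direct consequence of Definition~\ref{def:costincl}, whereas you verify it explicitly via Lemma~\ref{lemma:succ}; the underlying argument is the same.
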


\begin{proof}
    Let $C_{\sigma_1}=(m_1,D_1)$ and $C_{\sigma_2}=(m_2,D_2)$.
    From Definition~\ref{def:costincl}, for any point $(\vec{\theta},c_1,v)\in D_1$, there exists a point $(\vec{\theta},c_2,v)\in D_2$ such that $c_2\leq c_1$. This implies that:
            (i) $\costclass(C_{\sigma_1}) \geq \costclass(C_{\sigma_2})$;
            (ii) if transition $t$ is firable from $C_{\sigma_1}$, then it is firable from $C_{\sigma_2}$ and $\Next(C_{\sigma_1},t) \costincl\Next(C_{\sigma_2},t)$.
    And the result follows by a straightforward induction.
\end{proof}

While $\costincl$ can be checked using standard linear algebra techniques, we can also reduce it to standard inclusion on polyhedra by removing the upper bounds on cost (an operation called cost relaxation)~\cite{boucheneb-FORMATS-17}.

\begin{example}
    Using the classes given in Example~\ref{ex:sc}, we see that for all $n$, for all $m\leq n$, we have $C_n \not\costincl C_m$ because $\proj{D_n}{\Theta\cup\Params}\not\subseteq \proj{D_m}{\Theta\cup\Params}$. This means Algorithm~\ref{algo:boundedsynth} will not terminate, whatever the value of the cost bound.

    Nonetheless, for a cost bound of $5$, we see that from $D'_0$ we obtain an empty set of parameters meeting the bound which is empty, as expected, because $t_0\leq a - 2$ so $c=3(a-t_0)$ is at least $6$. And for a bound greater of equal to $6$ we get $a\geq 2$. With a bound of $8$, we need to explore $C'_1$ to find that $a$ can actually be greater or equal to $1$, and so on.
\end{example}

Lemma~\ref{lemma:invbs} states a technical invariant of the while loop.
\begin{lemma}
    The following invariant holds after each iteration of the while loop in Algorithm~\ref{algo:boundedsynth}: for all $C_\sigma=(m,D)\in\aPassed$,
    \begin{enumerate}
    \itemsep=0.9pt
        \item for all prefixes $\sigma'$ of $\sigma$, $C_{\sigma'}\in\aPassed$;
        \item if $m\in\aGoal$ then $\proj{\big(D\cap (c \leq \cmax)\big)}{\Params} \subseteq \aPoly$;
        \item if $t$ is firable from $C_\sigma$:
            \begin{itemize}
            \itemsep=0.9pt
                \item either $C_{\sigma.t}\in\aWaiting$,
                \item or there exists $C'\in\aPassed$ such that $C_{\sigma.t}\costincl C'$.
            \end{itemize}
    \end{enumerate}
    \label{lemma:invbs}\vspace*{-2mm}
\end{lemma}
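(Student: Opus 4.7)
My plan is to proceed by induction on the number of iterations of the while loop. In the base case, before any iteration has been executed, $\aPassed$ is empty, so all three clauses of the invariant hold vacuously.

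For the inductive step, I assume the invariant holds after $k$ iterations and analyse what happens in iteration $k+1$. Let $C_\sigma = (m,D)$ be the class selected from $\aWaiting$ at line~\ref{line:select}. The only parts of the algorithm state that change are (a) possibly $\aPoly$ (line~\ref{line:updatePoly}) if $m\in\aGoal$; and (b) possibly $\aPassed$ and $\aWaiting$ (lines~\ref{line:addToPassed}--\ref{line:addToWaiting}) if no class already in $\aPassed$ subsumes $C_\sigma$.

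For clause~1, any class in the updated $\aPassed$ was either already there (the induction hypothesis directly applies) or is the just added $C_\sigma$. In the latter case, I argue that each proper prefix $\sigma'$ of $\sigma$ satisfies $C_{\sigma'}\in\aPassed$: by tracing back, $C_\sigma$ must have been put into $\aWaiting$ either initially (only if $\sigma$ is empty) or at line~\ref{line:addToWaiting} during some earlier iteration $j\leq k$, when a class $C_{\sigma'}$ with $\sigma=\sigma'.t$ was not subsumed and hence added to $\aPassed$ at iteration $j$. Applying clause~1 of the invariant at the end of iteration $j$ and using that $\aPassed$ is monotone, the conclusion follows. Clause~2 is straightforward: if $m\in\aGoal$ then line~\ref{line:updatePoly} explicitly adds the required projection to $\aPoly$, and for classes in $\aPassed$ before the iteration the property is preserved because $\aPoly$ only grows.

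The most delicate part is clause~3, which I handle by a case split on whether $C_\sigma$ is the class added this iteration or was already in $\aPassed$. If $C_\sigma$ was already in $\aPassed$ before this iteration, then by the induction hypothesis, for any firable $t$ we either had $C_{\sigma.t}\in\aWaiting$ or some $C'\in\aPassed$ with $C_{\sigma.t}\costincl C'$; the second possibility is preserved since $\aPassed$ grows. The first possibility is preserved unless $C_{\sigma.t}$ itself is the class being processed in this iteration, in which case either it is subsumed at line~\ref{line:passed} (giving a witness $C'\in\aPassed$ with $C_{\sigma.t}\costincl C'$) or it is added to $\aPassed$ at line~\ref{line:addToPassed} and serves as its own witness via reflexivity of $\costincl$. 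If on the other hand $C_\sigma$ was added to $\aPassed$ during this very iteration, then line~\ref{line:addToWaiting} adds $C_{\sigma.t}$ to $\aWaiting$ for every firable $t$, so clause~3 holds for $C_\sigma$.

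The main obstacle is keeping the bookkeeping straight: we must simultaneously re-establish all three clauses, and in particular clause~3 for old members of $\aPassed$ when their successor in $\aWaiting$ happens to be the class currently processed. The key observation that makes this smooth is that $\aPassed$ and $\aPoly$ are monotone and that $\costincl$ is reflexive, so whenever a waiting successor gets selected it either finds an older subsumer or becomes a subsumer of itself.
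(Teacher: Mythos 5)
Your proof is correct and follows essentially the same route as the paper: induction on the loop iterations, using monotonicity of $\aPassed$ and $\aPoly$ for clauses~1 and~2, and the reflexivity of $\costincl$ together with the line~\ref{line:passed} test to restore clause~3 when the selected class leaves $\aWaiting$. Your explicit case split for clause~3 from the viewpoint of the old member of $\aPassed$ whose waiting successor is being processed is exactly the argument in the paper's third item, just phrased from the other end.
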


\begin{proof}
	We prove this lemma by induction.
	Before the while loop starts, $\aPassed$ is empty so the invariant is true.
	Let us now assume that the invariant holds for all iterations up to the $n$-th one, with $n\geq 0$, and that $\aWaiting\neq\emptyset$.
    Let $C_\sigma\in\aWaiting$ be the selected class at line~\ref{line:select}; to check whether the invariant still holds at the end of the $(n+1)$-th iteration, we only have to test the case where $C_\sigma$ is added to $\aPassed$ (which means that the condition at line~\ref{line:passed} is true).
	We can then check each part of the invariant:
	\begin{enumerate}
  \itemsep=0.9pt
		\item $C_\sigma$ was picked from $\aWaiting$ (line~\ref{line:select}); except for the initial class (for which $\sigma$ is empty, and therefore has no prefix), it means that, in a previous iteration, there was a sequence $\sigma'$ and a transition $t\in\firable{C_{\sigma'}}$ such that $\sigma=\sigma'.t$ (line~\ref{line:addToWaiting}) and $C_{\sigma'}\in\aPassed$ (line~\ref{line:addToPassed}).
            Since we add at most one state class to $\aPassed$ at each iteration, $C_{\sigma'}$ was added in a previous iteration and we can apply to it the induction hypothesis, which allows us to prove the first part of the invariant;
		\item lines~\ref{line:mInGoal} and~\ref{line:updatePoly} obviously imply the second part of the invariant;
        \item if $C_\sigma\in\aPassed$, then the condition of the if on line~\ref{line:passed} is true and then for any transition $t$ that is firable from $C_\sigma$, $C_{\sigma.t}$ is added to \aWaiting~(line~\ref{line:addToWaiting}) so the third part of the invariant holds for $C_\sigma$. Nevertheless, $C_\sigma$ itself is no longer in $\aWaiting$, and it is (except for the initial state class) the successor of some state class in $\aPassed$. But then we have only two possibilities: either $C_\sigma$ has been added to $\aPassed$ in line~\ref{line:addToPassed} if the condition on line~\ref{line:passed} was true, and certainly $C_\sigma\costincl C_\sigma$, or there exists $C'\in\aPassed$ such that $C_{\sigma}\costincl C'$ if that condition was false. Therefore the third part of the invariant holds.
	\end{enumerate}
	
	Both the basis case and the induction step are true: the result follows by induction.
\end{proof}

We can now prove the correctness of semi-algorithm, and its completeness when it terminates.
\begin{proposition}
    After any iteration of the while loop in Algorithm~\ref{algo:boundedsynth}:
    \begin{enumerate}
      \itemsep=0.9pt
        \item if $v\in\aPoly$, then there exists a run $\rho$ in $v(\Net)$ such that $\costrun(\rho)\leq \cmax$ and $\lastm(\rho)\in \aGoal$.
        \item if $\aWaiting = \emptyset$ then, for all parameter valuations $v$ such that there exists a run $\rho$ in $v(\Net)$ such that $\costrun(\rho)\leq \cmax$ and $\lastm(\rho)\in \aGoal$, we have $v\in\aPoly$.
    \end{enumerate}
    \label{prop:boundedcorrect}
\end{proposition}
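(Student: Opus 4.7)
The plan is to prove the two parts in turn, with the technical invariant of Lemma~\ref{lemma:invbs} and the monotonicity result of Lemma~\ref{lemma:costincl} as the main tools, and Lemma~\ref{lemma:runs} bridging state classes back to concrete runs.

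For part~1 (soundness), I would argue as follows. By construction of Algorithm~\ref{algo:boundedsynth}, $\aPoly$ is only modified at line~\ref{line:updatePoly}, and by the second item of Lemma~\ref{lemma:invbs} it is the union, over classes $C_\sigma=(m,D)\in\aPassed$ with $m\in\aGoal$, of the sets $\proj{(D\cap(c\leq\cmax))}{\Params}$. Hence $v\in\aPoly$ means that there exists such a class and a point $(\vec\theta,c,v)\in D$ with $c\leq\cmax$. Applying Lemma~\ref{lemma:runs} to this point yields a run $\rho$ in $v(\Net)$ with $\sequence(\rho)=\sigma$, $\lastm(\rho)=m\in\aGoal$, and $\costrun(\rho)=c\leq\cmax$, which is exactly what is required.

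For part~2 (completeness upon termination), fix a valuation $v$ and a run $\rho$ in $v(\Net)$ with $\lastm(\rho)\in\aGoal$ and $\costrun(\rho)\leq\cmax$. Let $\sigma=\sequence(\rho)$ and $C_\sigma=(m,D)$; by Lemma~\ref{lemma:runs}, there is a point $(\vec\theta,\costrun(\rho),v)\in D$. The heart of the argument is the following claim, proved by induction on the length of prefixes $\sigma'$ of $\sigma$:
\emph{there exists $C'\in\aPassed$ such that $C_{\sigma'}\costincl C'$.}
The base case uses that the initial class $C_0$ is inserted in $\aWaiting$ and, being the first selection, trivially passes the test at line~\ref{line:passed}, so $C_0\in\aPassed$. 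For the inductive step with $\sigma'.t$, the hypothesis gives some $C'\in\aPassed$ with $C_{\sigma'}\costincl C'$; by Lemma~\ref{lemma:costincl}, $t$ is firable from $C'$ and $\Next(C_{\sigma'},t)=C_{\sigma'.t}\costincl\Next(C',t)$. Since $\aWaiting=\emptyset$, the third item of Lemma~\ref{lemma:invbs} applied to $C'$ provides $C''\in\aPassed$ with $\Next(C',t)\costincl C''$, and transitivity of $\costincl$ (immediate from Definition~\ref{def:costincl}) closes the step.

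Applying the claim at $\sigma'=\sigma$ yields $C'=(m',D')\in\aPassed$ with $C_\sigma\costincl C'$. By Definition~\ref{def:costincl}, $m'=m\in\aGoal$, $(\vec\theta,v)\in\proj{D'}{\Theta\cup\Params}$, and $\costclass_{D'}(\vec\theta,v)\leq\costrun(\rho)\leq\cmax$; hence some $c'\leq\cmax$ witnesses $(\vec\theta,c',v)\in D'$, so that $v\in\proj{(D'\cap(c\leq\cmax))}{\Params}$. The second item of Lemma~\ref{lemma:invbs} then gives $v\in\aPoly$. The only subtlety, and thus the main obstacle, is the inductive step: one must carry subsumption forward through successor computation, which requires \emph{both} the monotonicity provided by Lemma~\ref{lemma:costincl} and the invariant that every successor of a passed class is itself subsumed by some passed class upon termination; once these are combined with the transitivity of $\costincl$, the rest of the proof is bookkeeping.
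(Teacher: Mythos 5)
Your proof is correct, and part~2 takes a genuinely different (though closely related) inductive route from the paper's. The paper inducts on the length of the longest suffix $\sigma_2$ of $\sigma$ whose first ``new'' class $C_{\sigma_1 t}$ is not in $\aPassed$: at each step it jumps to a subsuming class $C_{\sigma'}\in\aPassed$, uses Lemma~\ref{lemma:runs} to manufacture a \emph{new} run $\rho'$ with sequence $\sigma'.\sigma_3$ still reaching $\aGoal$ within cost $\cmax$, and applies the induction hypothesis to $\rho'$. You instead do a forward induction on prefixes of the original $\sigma$, maintaining the invariant that each prefix class is subsumed by some class of $\aPassed$; this avoids re-routing the run entirely and only at the very end converts subsumption into membership of $v$ in $\aPoly$. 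The price is that you need two facts the paper uses only implicitly: transitivity of $\costincl$ (which, as you say, is immediate from Definition~\ref{def:costincl}), and the preservation of $\costincl$ under $\Next$ --- note that the \emph{statement} of Lemma~\ref{lemma:costincl} only gives firability of $\sigma$ and a comparison of class costs, while what you invoke is item~(ii) established inside its proof, namely $\Next(C_{\sigma_1},t)\costincl\Next(C_{\sigma_2},t)$; you should cite that explicitly or restate the lemma slightly more strongly. Two further minor points: in part~1, $\aPoly$ is the union of $\proj{\big(D\cap(c\leq\cmax)\big)}{\Params}$ over all classes \emph{selected} at line~\ref{line:select} with goal marking, not only over those retained in $\aPassed$ (a selected class may fail the test at line~\ref{line:passed} after having contributed at line~\ref{line:updatePoly}); this does not hurt soundness since any selected class is a genuine $C_\sigma$ to which Lemma~\ref{lemma:runs} applies, but the equality you state via Lemma~\ref{lemma:invbs} only holds as an inclusion. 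And at the end of part~2, passing from $\costclass_{D'}(\vec{\theta},v)\leq\cmax$ to the existence of a point $(\vec{\theta},c',v)\in D'$ with $c'\leq\cmax$ uses that the infimum is attained, which holds here because the paper restricts to closed intervals so that domains are topologically closed polyhedra.
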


\begin{proof}

\vspace*{-8mm}
    \begin{enumerate}
      \itemsep=0.9pt
        \item By induction on the while loop: initially, $\aPoly$ is empty so the result holds trivially. Suppose it holds after some iteration $n$, and consider iteration $n+1$.
            Let $v\in\aPoly$ after iteration $n+1$. If $v$ was already in $\aPoly$ after iteration $n$ then we can apply the induction hypothesis. Otherwise it means that if $C_\sigma=(m,D)$ is the class examined at iteration $n+1$, then $m\in\aGoal$ and $v\in \proj{\big(D\cap (c \leq \cmax)\big)}{\Params}$. This means that there exists some point $(\vec{\theta},c,v)\in D$ with $c\leq \cmax$. By Lemma~\ref{lemma:runs}, this means that there exists a run $\rho$ such that $(m,I,c,v)=\last(\rho)$, for some $I$ such that $\vec{\theta}\in I$, and therefore $\lastm(\rho)\in\aGoal$ and $\cost(\rho)\leq \cmax$.
        \item Let $v$ be a parameter valuation such that there exists a run $\rho$ in $v(\Net)$ such that $\cost(\rho)\leq \cmax$ and $\lastm(\rho)\in\aGoal$. Let $\sigma=\sequence(\rho)$. We proceed by induction on the length $n$ of the biggest suffix $\sigma_2$ of $\sigma$ such that, either $\sigma_2$ is empty or, if we note $\sigma=\sigma_1\sigma_2$, with the first element of $\sigma_2$ being transition $t$, then $C_{\sigma_1t}\not\in\aPassed$.

            If $n=0$, then $C_\sigma=(m,D)\in\aPassed$. By Lemma~\ref{lemma:runs}, $v\in\proj{D}{\Params}$ and $m\in \aGoal$. From the latter, with Lemma~\ref{lemma:invbs}, we have $\proj{\big(D\cap (c \leq \cmax)\big)}{\Params} \subseteq \aPoly$ and therefore $v\in\aPoly$ because $v\in \proj{\big(D\cap (c \leq \cmax)\big)}{\Params}$.

            Consider now $n>0$ and assume the property holds for $n-1$. Since $n>0$, then there exists a transition $t$ and a sequence $\sigma_3$ such that $\sigma_2=t.\sigma_3$. By definition of $\sigma_2$, we have $C_{\sigma_1}\in\aPassed$ but $C_{\sigma_1.t}\not\in\aPassed$. By Lemma~\ref{lemma:invbs}, since $\aWaiting=\emptyset$, there must exists some class $C_{\sigma'}\in\aPassed$ such that $C_{\sigma_1.t}\costincl C_{\sigma'}$. From Lemma~\ref{lemma:costincl}, sequence $\sigma_3$ is also firable from $C_{\sigma'}$ and $C_{\sigma'.\sigma_3}=(m,D')$, with $\cost(C_{\sigma'.\sigma_3}) \leq \cost(C_\sigma)\leq \cmax$. By Lemma~\ref{lemma:runs}, there exists thus a run $\rho'$ in $v(\Net)$, with $\sequence(\rho')=\sigma'.\sigma_3$, $\lastm(\rho')\in\aGoal$ and $\costrun(\rho')\leq \cmax$. Also, from Lemma~\ref{lemma:invbs} (item~1), we know that for all prefixes of $\sigma'$, the corresponding state class is in $\aPassed$, so the biggest suffix of $\sigma'.\sigma_3$ as defined above in the induction hypothesis has length less or equal to $n-1$, and the induction hypothesis applies to $\rho'$, which allows to conclude.
    \end{enumerate}
\end{proof}

In particular, if the algorithm terminates, then the waiting list is empty and $\aPoly$ is exactly the solution to the synthesis problem.

\subsection{Infimum-cost synthesis semi-algorithm}

We now address the optimal cost synthesis problem, that is compute the infimum cost over all runs (and thus over all parameter valuations) and all parameter valuations for which this optimal cost can be achieved.

\begin{algorithm}[!h]
    \begin{algorithmic}[1]
    \STATE $\aCost\leftarrow \infty$
    \STATE $\aPoly\leftarrow \emptyset$
    \STATE $\aPassed\leftarrow \emptyset$
    \STATE $\aWaiting\leftarrow \{(m_0, D_0)\}$
    \WHILE {$\aWaiting\neq \emptyset$}
        \STATE select $C_\sigma = (m, D)$ from $\aWaiting$
        \IF {$m\in \aGoal$}\label{line:inf_startif}
            \IF {$\costclass(C_\sigma) < \aCost$}
                \STATE $\aCost\leftarrow \costclass(C_\sigma)$
                \STATE $\aPoly\leftarrow \proj{\big(D\cap (c = \aCost)\big)}{\Params}$
            \ELSIF {$\costclass(C_\sigma) = \aCost$}
                \STATE $\aPoly\leftarrow \aPoly\cup \proj{\big(D\cap (c = \aCost)\big)}{\Params}$
            \ENDIF
        \ENDIF\label{line:inf_endif}
        \IF {for all $C'\in \aPassed, C_\sigma\not\costincl C'$}\label{line:inf_passed}
            \STATE add $C_\sigma$ to $\aPassed$
            \STATE for all $t\in\firable{C_\sigma}$, add $C_{\sigma.t}$ to $\aWaiting$
        \ENDIF
    \ENDWHILE
        \RETURN $(\aCost, \aPoly)$
\end{algorithmic}

\caption{Symbolic semi-algorithm computing all parameter valuations such that some markings are reachable with the infimum cost.}
\label{algo:infsynth}
\end{algorithm}

Algorithm~\ref{algo:infsynth} explores the state-space in the same way as Algorithm~\ref{algo:boundedsynth}. The difference lies in how it collects ``good'' parameter valuations: the idea is to always store the parameter valuations for which the minimal possible cost in the class is realisable.

\begin{example}
    As we have seen previously, the minimum cost for $C'_0$ is $6$ with $a\geq 2$ and this is actually optimal because for all $n>0$, the minimum cost for $C'_n$ will be $n(2+3a) +  3(a-t_0)$ with $t_0\leq (n+1)a -2$, i.e., $a-t_0\geq 2 - na$ and $c\geq 2n +3na +6 -3na=2n+6>6$.
\end{example}

As before we first prove a simple technical invariant on the while loop.
\begin{lemma}
    The following invariant holds after each iteration of the while loop in Algorithm~\ref{algo:infsynth}:
        for all $C_\sigma=(m,D)\in\aPassed$,
    \begin{enumerate}
        \item for all prefixes $\sigma'$ of $\sigma$, $C_{\sigma'}\in\aPassed$;
        \item if $t$ is firable from $C_\sigma$:
            \begin{itemize}
                \item either $C_{\sigma.t}\in\aWaiting$,
                \item or there exists $C'\in\aPassed$ such that $C_{\sigma.t}\costincl C'$.
            \end{itemize}
        \item $\cost(C_\sigma) \geq \aCost$;
        \item if $m\in \aGoal$ and $\costclass(C_\sigma) = \aCost$
            then
            $\proj{\big(D\cap (c = \costclass(C_\sigma))\big)}{\Params}\subseteq \aPoly$;
    \end{enumerate}
    \label{lemma:invinfs}
\end{lemma}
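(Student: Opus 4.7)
The plan is to proceed by induction on the number of iterations of the while loop, closely mirroring the proof of Lemma~\ref{lemma:invbs}. The base case is vacuous: before entering the loop, $\aPassed = \emptyset$ and $\aCost = \infty$, so every clause holds trivially over an empty universally quantified set. For the induction step, I would fix the state of all variables at the start of the $(n+1)$-th iteration (given by the induction hypothesis), let $C_\sigma = (m,D)$ be the class selected from $\aWaiting$, and consider whether $C_\sigma$ is actually added to $\aPassed$ at line~\ref{line:inf_passed}. If it is not, $\aPassed$ is unchanged and I only need to re-verify the invariant for existing classes against any modification of $\aCost$ and $\aPoly$ made at lines~\ref{line:inf_startif}--\ref{line:inf_endif}; the argument there is a restriction of the one carried out in the added-case below.

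Clauses~1 and~2 would then be established exactly as in Lemma~\ref{lemma:invbs}: clause~1 by tracing $C_\sigma$ back to the predecessor that placed it into $\aWaiting$ (which must already lie in $\aPassed$ by the induction hypothesis), and clause~2 by observing that whenever we do add $C_\sigma$ to $\aPassed$, we also enqueue every successor $C_{\sigma.t}$ for $t \in \firable{C_\sigma}$ into $\aWaiting$. The novelty of the proof therefore lies in clauses~3 and~4, which are tied to the mutable variables $\aCost$ and $\aPoly$.

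The key observation is that $\aCost$ is monotonically non-increasing throughout the execution, since it is only ever reassigned to a strictly smaller value inside the first branch of the if-elseif block. I would split the induction step into three sub-cases based on what happens at lines~\ref{line:inf_startif}--\ref{line:inf_endif}: (i) $m \notin \aGoal$ or $\cost(C_\sigma) > \aCost_{\mathrm{old}}$, where $\aCost$ and $\aPoly$ are both unchanged and the clauses for existing classes follow from the induction hypothesis while clauses~3 and~4 for $C_\sigma$ hold immediately (clause~4 having a vacuously false premise); (ii) $m \in \aGoal$ and $\cost(C_\sigma) = \aCost_{\mathrm{old}}$, where $\aCost$ is unchanged and $\aPoly$ is augmented by $\proj{(D\cap(c=\aCost))}{\Params}$, so clause~4 holds for $C_\sigma$ by construction and is preserved for old classes by monotonicity of union; and (iii) $m \in \aGoal$ and $\cost(C_\sigma) < \aCost_{\mathrm{old}}$, where $\aCost$ strictly decreases to $\cost(C_\sigma)$ and $\aPoly$ is entirely overwritten.

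I expect case~(iii) to be the only delicate step. Resetting $\aPoly$ threatens clause~4 for classes already in $\aPassed$, and the plan is to use clause~3 of the previous invariant to show this threat is vacuous: every $C' \in \aPassed$ already satisfies $\cost(C') \geq \aCost_{\mathrm{old}} > \aCost_{\mathrm{new}}$, so the premise $\cost(C') = \aCost_{\mathrm{new}}$ of clause~4 is false for every old class and there is nothing to preserve. Clause~3 for the freshly added $C_\sigma$ is immediate since $\cost(C_\sigma) = \aCost_{\mathrm{new}}$, and clause~4 follows because $\aPoly$ has just been set to exactly $\proj{(D \cap (c = \aCost_{\mathrm{new}}))}{\Params}$. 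Assembling the three sub-cases closes the induction and establishes the full invariant.
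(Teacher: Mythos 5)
Your overall strategy is exactly the paper's: the published proof is a two-line remark saying the induction works ``exactly as the one of Lemma~\ref{lemma:invbs}'' and that the last two items are a direct consequence of the updates in lines~\ref{line:inf_startif}--\ref{line:inf_endif} together with the induction hypothesis. Your three-way case split on those lines is the natural way to make that remark precise, and your treatment of case~(iii) --- where $\aPoly$ is overwritten and you invoke clause~3 of the induction hypothesis to show that every previously passed class has $\cost(C')\geq\aCost_{\mathrm{old}}>\aCost_{\mathrm{new}}$, so the premise of clause~4 is vacuously false for all of them and nothing is lost by the reset --- is precisely the one delicate point the paper leaves implicit, and you handle it correctly. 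The same monotonicity argument also covers the branch where the dequeued class fails the subsumption test but still updates $\aCost$, as you note.

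There is, however, one step that does not go through as written: in case~(i) with $m\notin\aGoal$ you assert that clause~3 for the newly added $C_\sigma$ ``holds immediately''. It does not. $\aCost$ is only ever lowered when a goal class is dequeued, so a non-goal class can be added to $\aPassed$ while its cost is strictly below the current $\aCost$; the initial class is already a counterexample, since $\costclass(C_0)=0$ while $\aCost$ is still $\infty$ at the end of the first iteration whenever $m_0\notin\aGoal$. To be fair, this is a defect of the lemma statement itself rather than of your argument alone: item~3 is only ever invoked later (in the proof of Proposition~\ref{prop:infcorrect}) for classes whose marking lies in $\aGoal$, and for those your cases~(ii), (iii) and the sub-case $\costclass(C_\sigma)>\aCost_{\mathrm{old}}$ do establish it; the paper's own proof glosses over the point entirely. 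A complete write-up should either restrict item~3 to classes with $m\in\aGoal$ or explicitly acknowledge that the unrestricted claim cannot be derived from the algorithm.
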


\begin{proof}
    The proof works by induction exactly as the one of Lemma~\ref{lemma:invbs}. The specific last two items are a direct consequence of the updates in lines \ref{line:inf_startif} to \ref{line:inf_endif} and of the induction hypothesis.
\end{proof}

We can now prove correctness and completeness (the latter provided the semi-algorithm terminates).
\begin{proposition}
    After any iteration of the while loop in Algorithm~\ref{algo:infsynth}:
    \begin{enumerate}

        \item if $v\in\aPoly$, then there exists a run $\rho$ in $v(\Net)$ such that $\costrun(\rho)=\aCost$ and $\lastm(\rho)\in \aGoal$.
        \item if $\aWaiting = \emptyset$ then
               if some marking in $\aGoal$ is reachable for some parameter valuation then $\aCost = \min_{\rho\in\Runs(\calN),\lastm(\rho)\in\aGoal} \cost(\rho)$ otherwise $\aCost=+\infty$.
        \item if $\aWaiting = \emptyset$ then
                for all parameter valuations $v$ such that there exists a run $\rho$ in $v(\Net)$ such that $\costrun(\rho) = \aCost$ and $\lastm(\rho)\in \aGoal$, we have $v\in\aPoly$;
    \end{enumerate}
    \label{prop:infcorrect}
\end{proposition}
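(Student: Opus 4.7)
The plan is to mirror the proof of Proposition~\ref{prop:boundedcorrect}, adapted to track the dynamically updated infimum $\aCost$ rather than the fixed bound $\cmax$. All three items proceed by induction on the iterations of the while loop, using Lemma~\ref{lemma:runs}, Lemma~\ref{lemma:costincl}, and the four invariants of Lemma~\ref{lemma:invinfs}.

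For item~1 (soundness), I would maintain as a loop invariant that every $v\in\aPoly$ lies in $\proj{(D\cap (c=\aCost))}{\Params}$ for some previously examined class $(m,D)$ with $m\in\aGoal$, where $\aCost$ refers to its \emph{current} value. The delicate point is that whenever $\aCost$ strictly decreases, the set $\aPoly$ is reset from scratch, so valuations attached to a previously higher cost are discarded; when $\aCost$ is unchanged, we only append valuations drawn from a new class with the same cost. Lemma~\ref{lemma:runs} then turns every $v\in\aPoly$ into a run $\rho$ of $v(\Net)$ with $\lastm(\rho)=m\in\aGoal$ and $\costrun(\rho)=\aCost$.

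For item~2, the core ingredient is the suffix induction already used in Proposition~\ref{prop:boundedcorrect}~part~2: for any run $\rho$ with $\lastm(\rho)\in\aGoal$, write $\sequence(\rho)=\sigma_1\sigma_2$ where $\sigma_2$ is the largest suffix such that the class reached after $\sigma_1$ extended by the first letter of $\sigma_2$ is not in $\aPassed$, and induct on $|\sigma_2|$. The base case uses item~3 of Lemma~\ref{lemma:invinfs} to obtain $\aCost\leq\costclass(C_{\sequence(\rho)})\leq\costrun(\rho)$, while the inductive step uses items~1 and~2 of Lemma~\ref{lemma:invinfs} together with Lemma~\ref{lemma:costincl} to exhibit a strictly shorter target run of cost no larger than $\costrun(\rho)$, to which the induction hypothesis applies. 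This yields $\aCost\leq\inf_\rho\costrun(\rho)$. If a goal marking is reachable this infimum is finite, so the last iteration that lowered $\aCost$ to its final value processed a goal class; under the closed-interval assumption of the paper the infimum in its domain is attained, so $\aPoly$ becomes nonempty and item~1 provides the matching upper bound, giving equality. If no goal marking is reachable, no class in $\aGoal$ is ever examined, so $\aCost$ remains $+\infty$.

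For item~3, I would apply the same suffix induction to a run $\rho$ in $v(\Net)$ with $\costrun(\rho)=\aCost$ and $\lastm(\rho)\in\aGoal$. The chain of subsumption steps combined with Lemma~\ref{lemma:costincl} produces a class $C'=(m',D')\in\aPassed$ with $m'\in\aGoal$ and $\costclass(C')\leq\aCost$; item~3 of Lemma~\ref{lemma:invinfs} forces equality. The main obstacle, absent from Proposition~\ref{prop:boundedcorrect}~part~2, is to transport not merely the achievability of a bound but the specific valuation $v$ through the subsumption chain. This is precisely what the second and third clauses of Definition~\ref{def:costincl} guarantee: iterating them along the chain yields a point $(\vec{\theta}',\aCost,v)\in D'$, and item~4 of Lemma~\ref{lemma:invinfs} then gives $v\in\proj{(D'\cap(c=\aCost))}{\Params}\subseteq\aPoly$.
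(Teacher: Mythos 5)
Your proposal is correct and follows essentially the same route as the paper: item~1 by loop induction as in Proposition~\ref{prop:boundedcorrect} with $c\leq\cmax$ replaced by $c=\aCost$, and items~2 and~3 by the suffix induction on $\aPassed$ membership using Lemmas~\ref{lemma:runs}, \ref{lemma:costincl} and the invariants of Lemma~\ref{lemma:invinfs}, with the valuation $v$ carried through the subsumption chain exactly as Definition~\ref{def:costincl} permits. The only (immaterial) difference is that in item~2 you bound $\aCost$ below by the cost of an arbitrary goal-reaching run and argue attainment separately, whereas the paper fixes a minimal-cost run from the start.
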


\begin{proof}

\vspace*{-7mm}
    \begin{enumerate}
        \item This works as in the proof for Proposition~\ref{prop:boundedcorrect}, replacing $c\leq\cmax$ by $c=\aCost$.

                \item If no marking in $\aGoal$ is reachable then clearly, from its initialisation, $\aCost=+\infty$ when the algorithm terminates with $\aWaiting=\emptyset$.

            If some marking in $\aGoal$ is reachable for some parameter valuation $v$, let $\rho$ be one of the runs reaching $\aGoal$ with the smallest cost and let $\sigma=\sequence(\rho)$. First remark that $\cost(C_\sigma)=\cost(\rho)$ because $\rho$ realises the minimum cost of all runs reaching $\aGoal$, and therefore in particular of all runs firing transition sequence $\sigma$.

            We prove that $\aCost=\cost(\rho)$. We proceed again by induction on the length $n$ of the biggest suffix $\sigma_2$ of $\sigma$ such that, either $\sigma_2$ is empty or, if we note $\sigma=\sigma_1\sigma_2$, with the first element of $\sigma_2$ being transition $t$, then $C_{\sigma_1t}\not\in\aPassed$.

            If $n=0$, then $C_\sigma=(m,D)\in\aPassed$. Then, by Lemma~\ref{lemma:invinfs}, $\aCost\leq \cost(C_\sigma)$, which implies $\aCost<+\infty$. It is therefore clear that there is some $\sigma'=(m',D')\in\aPassed$ such that $\aCost=\cost(C_{\sigma'})$. If $\aCost< \cost(C_{\sigma})$, then by Lemma~\ref{lemma:runs} there exists a run $\rho'$ reaching $\aGoal$ with $\cost(\rho')<\cost(\rho)$, which is not possible since $\rho$ realises the minimum cost. Therefore $\aCost=\cost(C_\sigma)=\cost(\rho)$.

            Consider now $n>0$ and assume the property holds for $n-1$. Since $n>0$, then there exists a transition $t$ and a sequence $\sigma_3$ such that $\sigma_2=t.\sigma_3$. By definition of $\sigma_2$, we have $C_{\sigma_1}\in\aPassed$ but $C_{\sigma_1.t}\not\in\aPassed$. By Lemma~\ref{lemma:invbs}, since $\aWaiting=\emptyset$, there must exists some class $C_{\sigma'}$ such that $C_{\sigma_1.t}\costincl C_{\sigma'}$. From Lemma~\ref{lemma:costincl}, sequence $\sigma_3$ is also firable from $C_{\sigma'}$ and $C_{\sigma'.\sigma_3}=(m,D')$, with $\cost(C_{\sigma'.\sigma_3}) \leq \cost(C_\sigma)=\cost(\rho)$.  By Lemma~\ref{lemma:runs}, there exists thus a run $\rho'$, with $\sequence(\rho')=\sigma'.\sigma_3$, $\lastm(\rho')\in\aGoal$ and $\costrun(\rho')\leq \cost(\rho)$. Since $\rho$ realises the minimum of the cost for runs reaching $\aGoal$, this means that $\cost(\rho') = \cost(\rho)$. From Lemma~\ref{lemma:invinfs} (item~1),  we know that for all prefixes of $\sigma'$, the corresponding state class is in $\aPassed$, so the biggest suffix of $\sigma'.\sigma_3$ as defined above in the induction hypothesis has length less or equal to $n-1$, and the induction hypothesis applies to $\rho'$, which allows to conclude.

            \item Let $v$ be a parameter valuation such that there exists a run $\rho$ in $v(\Net)$ such that $\cost(\rho)=\aCost$ and $\lastm(\rho)\in\aGoal$. Let $\sigma=\sequence(\rho)$. We proceed by induction on the length $n$ of the biggest suffix $\sigma_2$ of $\sigma$ such that, either $\sigma_2$ is empty or, if we note $\sigma=\sigma_1\sigma_2$, with the first element of $\sigma_2$ being transition $t$, then $C_{\sigma_1t}\not\in\aPassed$.

            If $n=0$, then $C_\sigma=(m,D)\in\aPassed$. By Lemma~\ref{lemma:runs}, $v\in\proj{D}{\Params}$ and $m\in \aGoal$. From the latter, with Lemma~\ref{lemma:invinfs}, we have $\proj{\big(D\cap (c =\aCost)\big)}{\Params} \subseteq \aPoly$ and therefore $v\in\aPoly$ because $v\in \proj{\big(D\cap (c =\aCost)\big)}{\Params}$.

            Consider now $n>0$ and assume the property holds for $n-1$. Since $n>0$, then there exists a transition $t$ and a sequence $\sigma_3$ such that $\sigma_2=t.\sigma_3$. By definition of $\sigma_2$, we have $C_{\sigma_1}\in\aPassed$ but $C_{\sigma_1.t}\not\in\aPassed$. By Lemma~\ref{lemma:invbs}, since $\aWaiting=\emptyset$, there must exists some class $C_{\sigma'}\in\aPassed$ such that $C_{\sigma_1.t}\costincl C_{\sigma'}$. From Lemma~\ref{lemma:costincl}, sequence $\sigma_3$ is also firable from $C_{\sigma'}$ and $C_{\sigma'.\sigma_3}=(m,D')$, with $\cost(C_{\sigma'.\sigma_3}) \leq \cost(C_\sigma)=\aCost$. By Lemma~\ref{lemma:runs}, there exists thus a run $\rho'$ in $v(\Net)$, with $\sequence(\rho')=\sigma'.\sigma_3$, $\lastm(\rho')\in\aGoal$ and $\costrun(\rho')\leq \aCost$.
            Since $\aWaiting=\emptyset$, we can use item~2 above and therefore, since $\aGoal$ is reachable, $\aCost$ is finite and is the minimum cost of runs reaching $\aGoal$ (and in particular for those with parameter valuation $v$). So $\costrun(\rho')=\aCost$.
            Furthermore, from Lemma~\ref{lemma:invinfs} (item~1),  we know that for all prefixes of $\sigma'$, the corresponding state class is in $\aPassed$, so the biggest suffix of $\sigma'.\sigma_3$ as defined above in the induction hypothesis has length less or equal to $n-1$, and the induction hypothesis applies to $\rho'$, which allows to conclude.
    \end{enumerate}
\end{proof}

\section{Restricting to integer parameters}
\label{sec:integer}

Obviously, in general, (semi-)Algorithm~\ref{algo:boundedsynth} will not terminate, since the emptiness problem for the set it computes is undecidable. Semi-algorithm~\ref{algo:infsynth} will not terminate either for similar reasons.

\medskip
To ensure termination, we can however follow the methodology of \cite{jovanovic-TSE-15}: we require that parameters are bounded integers and, instead of just enumerating the possible parameter values, we propose a modification of the symbolic state computation to compute these integer parameters symbolically. For this we rely on the notion of integer hull.

\medskip
We call \emph{integer valuation} a $\Z$-valuation. Note that a $\Z$-valuation is also an $\R$-valuation, and
given a set $D$ of $\R$-valuations, we denote by $\IV(D)$ the set of integer valuations in $D$.

The \emph{convex hull} of a set $D$ of valuations, denoted by $\Conv(D)$, is the intersection of all the convex sets of valuations that contain $D$.

The \emph{integer hull} of a set $D$ of valuations, denoted by $\IH(D)$, is defined as the convex hull of the integer valuations in $D$: $\IH(D)=\Conv(\IV(D))$.

For a state class $C=(m,D)$, we write $\IH(C)$ for $(m,\IH(D))$.

\medskip
Before we see how our result can be adapted for the restriction to integer parameter valuations, and from there how we can enforce termination of the symbolic computations when parameters are assumed to be bounded, we need some results on the structure of the polyhedra representing firing domains of cost TPNs.

\medskip
By the Minkowski-Weyl theorem (see e.g. \cite{schrijver-book-86}), every convex polyhedron can be either described as a set of linear inequalities, as seen above, or by a set of \emph{generators}. More precisely, for the latter: if $d$ is the dimension of polyhedron $\mathcal{P}$, there exists $v_1,\ldots,v_p, r_1,\ldots,r_s\in \R^{d}$, such that for all points $x\in \mathcal{P}$, there exists $\lambda_1, \ldots, \lambda_p\in\R, \mu_1, \ldots,\mu_s\in\Rp$ such that $\sum_i\lambda_i = 1$ and $x=\sum_i\lambda_iv_i + \sum_i\mu_ir_i$.
    The $v_i$'s are called the \emph{vertices} of $\mathcal{P}$ and the $r_i$'s are the \emph{extremal rays} of $\mathcal{P}$. The latter correspond to the directions in which the polyhedron is infinite. In our case, they correspond to transitions with a (right-)infinite static interval, and possibly the cost.

\medskip
    A classical property of vertices, which can also be used as a definition, is as follows: $\vec{v}$ is a vertex of $\mathcal{P}$ iff for all non-null vectors $\vec{x}\in\R^d$, either $\vec{v}+\vec{x}\not\in \mathcal{P}$ or $\vec{v}-\vec{x}\not\in \mathcal{P}$ (or both), $+$ and $-$ being understood component-wise.

\begin{proposition}
    Let $\Net$ be a (non-parametric) cost TPN and let $C=(m,D)$ be one of its state classes, then $D$ has integer vertices.
    \label{prop:fullintv}
\end{proposition}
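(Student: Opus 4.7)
The plan is to associate to each class $C_\sigma = (m,D)$ a ``full-history'' polyhedron $\tilde P$ whose variables are the absolute firing times of $\sigma$, the absolute future firing times of the transitions currently enabled, and the cost; to show $\tilde P$ has only integer vertices; to recover $D$ as a coordinate projection (up to an integer-coefficient change of variable); and to conclude using a general lemma saying that coordinate projections of pointed polyhedra with integer vertices again have integer vertices. This lemma is the first preparatory step: given a vertex $v^*$ of $\pi(\tilde P)$, the set $F = \pi^{-1}(v^*) \cap \tilde P$ is a face of $\tilde P$ (it is the set on which any linear functional supporting $\pi(\tilde P)$ at $v^*$, pulled back through $\pi$, attains its minimum on $\tilde P$); being a face of a pointed polyhedron, $F$ is itself pointed and admits a vertex $w$, which is a vertex of $\tilde P$, hence integer. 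Since $\pi$ is a coordinate projection, $v^* = \pi(w)$ is then integer as well.

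Second, for $C_\sigma$ with $\sigma = t_1 \ldots t_k$, I would take as variables of $\tilde P$ the times $u_0 = 0, u_1, \ldots, u_k$ at which the transitions in $\sigma$ are fired, a variable $\phi_i$ for each $t_i \in \en{m}$ encoding the absolute future firing time (i.e., $\phi_i = u_k + \theta_i$), and the cost $c$. The constraints describe the usual TPN semantics: for each $j$, $u_j - u_{j_{t_j}} \in I_s(t_j)$, where $j_{t_j}$ is the step at which $t_j$ became newly enabled; for each transition $t_l$ enabled at step $j-1$ but distinct from $t_j$, $u_j - u_{j_l} \leq \lft{I_s(t_l)}$; for each currently enabled $t_i$, $\phi_i \geq u_k$ and $\phi_i - u_{j_i} \in I_s(t_i)$; and the cost equation $c = \sum_{j=1}^k \costrate(m_{j-1})(u_j - u_{j-1}) + \sum_{j=1}^k \discreteCost(t_j)$. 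In the non-parametric case, every constraint of the first group is a difference constraint between two of the $u$'s and $\phi$'s with integer bounds (a DBM constraint), and the cost equation has integer coefficients, since $\costrate$ and $\discreteCost$ are $\Z$-valued.

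Third, I would verify that $\tilde P$ is pointed (all $u_j \geq 0$, $\phi_i \geq u_k \geq 0$, and then $c$ is bounded below via the cost equation) and that all its vertices are integer: the DBM subsystem on $u$'s and $\phi$'s has integer vertices by the classical DBM property, and at any such vertex the cost equation determines $c$ as an integer linear combination of integers. The domain $D$ is then the image of $\tilde P$ under the linear map $(u, \phi, c) \mapsto (\phi - u_k, c) = (\theta, c)$, which has integer coefficients. Adding the auxiliary variables $\theta_i$ with the integer equations $\theta_i = \phi_i - u_k$ yields an enlarged polyhedron that still has integer vertices, and $D$ is now an actual coordinate projection of it; the preparatory lemma then gives the conclusion.

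The main obstacle I anticipate is showing rigorously that this $\tilde P$-based description of $D$ coincides with the one produced incrementally by Algorithm~\ref{algo:nextclass}, in which old $\theta$'s and past costs get eliminated at every step. A cleaner alternative is to run the induction along Algorithm~\ref{algo:nextclass} itself, keeping as invariant that $D$ is a coordinate projection of such a polyhedron and checking that each successor operation simply appends one new DBM variable (the firing delay) and updates the cost equation by an integer-coefficient term. Going through the projection lemma this way avoids any direct reasoning about vertex integrality through the Fourier--Motzkin elimination of line~\ref{line:cost2}, where one might naively fear integer vertices to be lost due to the cost coefficients being larger than one.
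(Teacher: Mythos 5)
Your proof is correct, but it takes a genuinely different route from the paper's. The paper leans on a structural result imported from the earlier work on (non-parametric) cost TPNs: the firing domain, with cost bounds relaxed, partitions into finitely many sub-polyhedra $D_i$ whose projections $\proj{D_i}{\Theta}$ are zones (hence have integer vertices) and which each carry a single integer-coefficient lower-bound (and upper-bound) constraint on $c$; Lemma~\ref{lemma:intv} then shows each $D_i$ has integer vertices, and the conclusion follows because $D=\bigcup_i D_i$ is convex and thus equals the convex hull of the vertices of the $D_i$'s. You instead build the full firing-schedule polyhedron $\tilde P$ over absolute dates, observe that its constraints form a difference-bound system with integer bounds plus one integer-coefficient cost equality (so $\tilde P$ is pointed with integer vertices), and recover $D$ as an integer-coefficient coordinate projection, concluding via the standard fact that coordinate projections of pointed integral polyhedra are integral. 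Your projection lemma is stated and used correctly (the pointedness hypothesis you check is exactly what makes the face $\pi^{-1}(v^*)\cap\tilde P$ have a vertex), and the obstacle you flag at the end is not really one: $C_\sigma$ is \emph{defined} via runs (Lemma~\ref{lemma:runs}), so you may identify $D$ with the projection of $\tilde P$ directly from the semantics, without re-deriving the incremental computation of Algorithm~\ref{algo:nextclass}; just remember to include the ordering constraints $u_{j-1}\leq u_j$, which are also difference constraints. What each approach buys: yours is self-contained and sidesteps the external decomposition result entirely, at the price of re-establishing the firing-schedule characterisation; the paper's reuses machinery that is needed anyway elsewhere (the zone-plus-cost-constraint structure reappears in Lemma~\ref{lemma:intv} and in the termination argument of Proposition~\ref{prop:wqo}).
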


\begin{proof}
    We have proved in \cite{boucheneb-FORMATS-17} that the domain $D$ of a state class of a cost TPNs, with removed upper bounds on cost (so-called relaxed classes), can be partitioned into a union of simpler polyhedra $\bigcup_{i=1}^n D_i$ that have the following key properties: (1) by projecting the cost out we obtain a convex polyhedron $\proj{D_i}{\Theta}$ with integer vertices (actually a \emph{zone}, as in~\cite{larsen-FCT-95,berthomieu-TSE-91}), and (2) these simpler polyhedra all have exactly one constraint on the cost variable, i.e., of the form $c\geq \ell(\vec{\theta})$, with integer coefficients. Note that the same result can be obtained, with the same technique, if we consider non-relaxed state classes, except that, we also have an upper bound on cost that is always greater or equal to the lower bound. We prove in Lemma~\ref{lemma:intv} that each of these simpler polyhedra also has integer vertices. Since $D$ and each of the $D_i$'s are convex and since $D=\bigcup_i D_i$, $D$ is equal to the convex hull of the vertices of the $D_i$'s and therefore $D$ also has integer vertices.
\end{proof}

   \begin{lemma}
       Let $D$ be a convex polyhedron on variables $\theta_1,\ldots,\theta_n,c$ such that the projection of $D$ on the $\theta$ variables has integer vertices, and there are two constraints on $c$ of the form $c\geq \ell(\theta_1, \ldots, \theta_n)$ and $c\leq \ell'(\theta_1, \ldots, \theta_n)$, with $\ell$ and $\ell'$ linear terms with integer coefficients, such that $\ell(\theta_1, \ldots, \theta_n) \leq \ell'(\theta_1, \ldots, \theta_n)$, for all values of the $\theta_i$'s.

        Then, the vertices of $D$ are the points $(\theta_1,\ldots,\theta_n,\ell(\theta_1,\ldots,\theta_n))$ and $(\theta_1,\ldots,\theta_n,$ $\ell'(\theta_1,\ldots,\theta_n))$ such that $(\theta_1,\ldots,\theta_n)$ is a vertex of $\proj{D}{\Theta}$, and they are integer points.
        \label{lemma:intv}
    \end{lemma}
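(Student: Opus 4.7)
The plan is to use the alternative characterization stated just before Proposition~\ref{prop:fullintv}: $v$ is a vertex of $\mathcal{P}$ iff for every non-null vector $\vec{x}$, at least one of $v+\vec{x}$, $v-\vec{x}$ is outside $\mathcal{P}$. I will use this in both directions: first to verify that the claimed points really are vertices, then to rule out any other possibility.

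First I would show that every point $v=(\vec{\theta},\ell(\vec{\theta}))$ (resp.\ $(\vec{\theta},\ell'(\vec{\theta}))$) with $\vec{\theta}$ a vertex of $\proj{D}{\Theta}$ is a vertex of $D$. Consider an arbitrary non-null displacement $(\vec{x},y)\in\R^{n+1}$. If $\vec{x}\neq 0$, then since $\vec{\theta}$ is a vertex of the projection, one of $\vec{\theta}\pm\vec{x}$ lies outside $\proj{D}{\Theta}$, hence the corresponding point in $\R^{n+1}$ cannot be in $D$ for any choice of $y$. If $\vec{x}=0$, then $y\neq 0$, and one of $\ell(\vec{\theta})\pm y$ violates $c\geq\ell(\vec{\theta})$, so one of $v\pm(\vec{x},y)$ leaves $D$. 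The argument for $(\vec{\theta},\ell'(\vec{\theta}))$ is identical, using $c\leq\ell'(\vec{\theta})$.

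Second, I would show no other points are vertices. Let $v=(\vec{\theta},c)$ be any vertex of $D$. If $\ell(\vec{\theta})<c<\ell'(\vec{\theta})$, a small vertical displacement $(0,\varepsilon)$ keeps $v\pm(0,\varepsilon)$ in $D$, contradicting vertexness; so $c\in\{\ell(\vec{\theta}),\ell'(\vec{\theta})\}$. Now suppose $c=\ell(\vec{\theta})$ and $\vec{\theta}$ is not a vertex of $\proj{D}{\Theta}$. Then there is a non-null $\vec{x}$ with $\vec{\theta}\pm\vec{x}\in\proj{D}{\Theta}$. Setting $y=\ell(\vec{\theta}+\vec{x})-\ell(\vec{\theta})$, affineness of $\ell$ gives $\ell(\vec{\theta}-\vec{x})-\ell(\vec{\theta})=-y$, so both $v\pm(\vec{x},y)$ lie on the surface $c=\ell(\vec{\theta}')$ above points of $\proj{D}{\Theta}$ and therefore in $D$ (using $\ell\leq\ell'$), contradicting vertexness. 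The case $c=\ell'(\vec{\theta})$ is symmetric.

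Finally, the integrality is immediate: vertices of $\proj{D}{\Theta}$ are integer by hypothesis, and evaluating the affine expressions $\ell$, $\ell'$ with integer coefficients at integer points yields integers. The one subtle point, which I would expect to require a little care, is the midpoint argument in the second step: it is exactly the affineness of $\ell$ (hence $\ell(\vec{\theta}+\vec{x})+\ell(\vec{\theta}-\vec{x})=2\ell(\vec{\theta})$) that lets us construct a single $y$ such that both perturbations land simultaneously on the active constraint and remain feasible.
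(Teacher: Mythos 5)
Your proof is correct and follows essentially the same route as the paper's: both use the displacement-vector characterization of vertices to show that any vertex of $D$ must project onto a vertex of $\proj{D}{\Theta}$ and must lie on one of the two cost surfaces $c=\ell(\vec{\theta})$ or $c=\ell'(\vec{\theta})$, with integrality then following from the integer coefficients. If anything, your write-up is slightly more complete and careful than the paper's: you also verify the converse inclusion (that the claimed points really are vertices), and your use of a small vertical displacement when $\ell(\vec{\theta})<c<\ell'(\vec{\theta})$ avoids the paper's minor imprecision of displacing by the full amount $c-\ell(\vec{\theta})$, which could overshoot the upper bound.
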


    \begin{proof}
        Recall here that we consider all constraints in $D$ to be non-strict so all polyhedra are topologically closed. The reasoning extends with no difficulty to non-necessarily-closed polyhedra by considering so-called \emph{closure points} in addition to vertices~\cite{bagnara-FAC-05}.

        Consider a non-vertex point $\vec{\theta}$ in $\proj{D}{\Theta}$ and let $(\vec{\theta}, c)$ be a point of $D$. Then using the form of the unique cost constraint, we have $c\geq \ell(\vec{\theta})$.
        Now since $\vec{\theta}$ is not a vertex, there exists a vector $\vec{x}$ such that both $\vec{\theta}+\vec{x}$ and $\vec{\theta}-\vec{x}$ belong to $\proj{D}{\Theta}$.  Then, for sure, $(\vec{\theta}+\vec{x}, \ell(\vec{\theta}+\vec{x}))\in D$ and $(\vec{\theta}-\vec{x}, \ell(\vec{\theta}-\vec{x}))\in D$. And since $\ell$ is linear, $(\vec{\theta}+\vec{x}, \ell(\vec{\theta})+\ell(\vec{x})))\in D$, i.e., $(\vec{\theta},\ell(\vec{\theta})) + (\vec{x}, \ell(\vec{x}))\in D$. And similarly, $(\vec{\theta},\ell(\vec{\theta})) - (\vec{x}, \ell(\vec{x}))\in D$.
        Using again the form of the unique cost constraint, and the fact that $c\geq \ell(\vec{\theta})$, we finally have $(\vec{\theta}, c) + (\vec{x}, \ell(\vec{x}))\in D$ and $(\vec{\theta}, c) - (\vec{x}, \ell(\vec{x}))\in D$, that is, $(\vec{\theta}, c)$ is not a vertex of $D$.

        By contraposition, any vertex of $D$ extends a vertex of $\proj{D}{\Theta}$, and using a last time the form of the cost constraint, any vertex of $D$, is of the form $(\vec{\theta}, \ell(\vec{\theta}))$, with $\vec{\theta}$ a vertex of $\proj{D}{\Theta}$: suppose $(\vec{\theta}, c)$ is a vertex of $D$, with $c>\ell(\vec{\theta})$, then for $\vec{x}$ defined with $c - \ell(\vec{\theta})$ on the cost variable, and $0$ on all other dimensions, we clearly have both $(\vec{\theta}, c) + \vec{x}$ and $(\vec{\theta}, c) - \vec{x}$ in $D$, which is a contradiction.

        We conclude by remarking that, since $\proj{D}{\Theta}$ has integer vertices, all the coordinates of $\vec{\theta}$ are integers, and since $\ell$ has integer coefficients then $\ell(\vec{\theta})$ is an integer.

        We can deal with the upper bound defined by $\ell'$ in exactly the same way.
    \end{proof}

From Proposition~\ref{prop:fullintv}, we can prove the following lemma that will be very useful in the subsequent proofs.
\begin{lemma}
    Let $(m, D)$ be a state class of a pcTPN and let $(\vec{\theta}, c, v)$ be a point in $D$.

    If $v$ is an integer valuation, then $(\vec{\theta}, c, v)\in\IH(D)$.
    \label{lemma:intpoint}
\end{lemma}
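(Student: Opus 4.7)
The plan is to reduce to the non-parametric case by slicing $D$ at the integer valuation $v$ and then invoking Proposition~\ref{prop:fullintv}. Define the slice $D_v = \{(\vec{\theta}, c) \mid (\vec{\theta}, c, v) \in D\}$. First, I would establish by induction on the length of a sequence $\sigma$ with $C_\sigma = (m, D)$ that $D_v$ equals the firing domain of the state class of the non-parametric cost TPN $v(\Net)$ obtained by firing the same sequence $\sigma$. The inductive step simply tracks Algorithm~\ref{algo:nextclass}: each of its operations (intersection with linear constraints, introduction of fresh variables constrained by $I_s(t_i)$, and projection of auxiliary variables) commutes with substituting the parameters by $v$, since substitution is linear and Fourier--Motzkin projection of linear constraints commutes with such substitution.

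Once this reduction is in place, Proposition~\ref{prop:fullintv} applied to the cost TPN $v(\Net)$ yields that the rational polyhedron $D_v$ has integer vertices. Since the constraints $\theta_i \geq 0$ make $D_v$ line-free, the classical result on integer polyhedra (see, e.g., \cite{schrijver-book-86}) implies $D_v = \IH(D_v) = \Conv(\IV(D_v))$. Therefore the point $(\vec{\theta}, c) \in D_v$ admits a finite convex decomposition
\[
(\vec{\theta}, c) \;=\; \sum_{i=1}^{k} \lambda_i\,(\vec{\theta}_i, c_i),
\]
with $(\vec{\theta}_i, c_i) \in \IV(D_v)$, $\lambda_i \geq 0$, and $\sum_i \lambda_i = 1$. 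Because $v$ is itself an integer valuation, each triple $(\vec{\theta}_i, c_i, v)$ is an integer point of $D$, and the same convex coefficients give
\[
(\vec{\theta}, c, v) \;=\; \sum_{i=1}^{k} \lambda_i\,(\vec{\theta}_i, c_i, v) \;\in\; \Conv(\IV(D)) = \IH(D),
\]
which is the required conclusion.

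The main obstacle is the commutation step of the first paragraph: although intuitively clear, one must check that projection does not interact badly with substitution of the parameters. This works out because every constraint generated by Algorithm~\ref{algo:nextclass} is linear in $(\vec{\theta}, c)$ with coefficients that depend linearly on the parameters, so fixing the parameters to an integer tuple preserves linearity and commutes with projection. A minor secondary point, already used above, is that the ``integer vertices $\Rightarrow$ integer hull'' implication requires $D_v$ to be a pointed rational polyhedron, which it is thanks to the non-negativity of the $\theta_i$ components.
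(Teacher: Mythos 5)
Your proposal is correct and follows essentially the same route as the paper: the paper also slices $D$ at $v$ (written $v(D)$), invokes Proposition~\ref{prop:fullintv} to get integer vertices and hence $v(D)=\IH(v(D))$, writes $(\vec{\theta},c)$ as a convex combination of integer points of $v(D)$, and lifts that combination back to $D$ by appending the integer valuation $v$. The commutation step you single out as the main obstacle is taken for granted in the paper (and can in fact be obtained directly from Lemma~\ref{lemma:runs}, which characterises both $D$ and the firing domains of $v(\Net)$ by the same runs), and your remark on pointedness is a harmless extra precaution.
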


\begin{proof}
    Since $(\vec{\theta}, c, v)\in D$ then $(\vec{\theta}, c)\in v(D)$.
    By Proposition~\ref{prop:fullintv}, $v(D)$ being the firing domain of a state class in a (non-parametric) cost TPN, it has integer vertices, and therefore $v(D)= \IH(v(D))$. Point $(\vec{\theta}, c)$ is therefore a convex combination of integer points in $v(D)$. Clearly, for all integer points $(\vec{\theta'},c ')$ in $v(D)$, we have that $(\vec{\theta'}, c', v)$ is an integer point of $D$. Since $D$ is convex, this implies that $(\vec{\theta}, c, v)\in \IH(D)$.
\end{proof}

When we restrict ourselves to integer parameter but continue to work symbolically, we need to adjust the definitions of the firability of a transition from a class and of the cost of a class.

First, a transition $t_f$ is firable for integer parameter valuations from a class $(m, D)$, call this \intfirable{}, if there exists an \emph{integer} parameter valuation $v$ and a point $(\vec{\theta}, c, v)$ in $D$ such that for all transitions $t_i\in\en{m}, \theta_i\geq \theta_f$.

\begin{lemma}
    Let $C=(m,D)$ be a state class.
    Transition $t_f\in\en{m}$ is \intfirable{} from $C$ if and only if it is firable (not necessarily \intfirable) from $(m,\IH(D))$.
    \label{lemma:intfirable}
\end{lemma}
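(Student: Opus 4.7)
The plan is to prove the biconditional separately in each direction.

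The forward direction is immediate from Lemma~\ref{lemma:intpoint}. Taking a witness $(\vec\theta,c,v)\in D$ with $v$ integer and $\theta_i\geq\theta_f$ for every $t_i\in\en{m}$, Lemma~\ref{lemma:intpoint} places this point in $\IH(D)$, which exhibits firability of $t_f$ from $(m,\IH(D))$.

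For the reverse direction, let $H$ denote the polyhedron cut out by the firing inequalities $\theta_i\geq\theta_f$ for $t_i\in\en{m}$, and fix a witness $(\vec\theta,c,v)\in\IH(D)\cap H$. My first step is to decompose this point via the Minkowski-Weyl representation of $\IH(D)$ as a convex combination of its integer vertices plus a nonnegative combination of its integer extremal rays. Each integer vertex has an integer parameter component, so whenever one of them already belongs to $H$, its parameter component serves as the desired integer valuation~$v^*$, and the vertex itself, which lies in $D$ since $\IH(D)\subseteq D$, is a witness of \intfirable{}.

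The main obstacle is that a convex combination of points lying outside $H$ can still lie inside $H$, so there is no automatic way to extract a single integer generator that is in $H$. To circumvent this, I would exploit two structural facts: (i) the constraints defining $H$ involve only the $\theta$-coordinates with integer coefficients and therefore do not interact with the parameter variables; and (ii) by Lemma~\ref{lemma:intpoint} combined with Proposition~\ref{prop:fullintv}, for every integer $v^*\in\proj{\IH(D)}{\Params}$ the slice $v^*(\IH(D))=v^*(D)$ is the firing domain of a state class of the non-parametric cost TPN $v^*(\Net)$ and hence has integer vertices. The last step of the argument, and the hardest one, would be to combine (i) and (ii) to show that the parameter component of the witness can be replaced by a nearby integer value without leaving $\IH(D)\cap H$; this integer value is then the required $v^*$, and any point of $v^*(D)\cap H$ (which is non-empty by construction) yields the witness of \intfirable{}.
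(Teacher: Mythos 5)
Your forward direction is exactly the paper's: the witness $(\vec{\theta},c,v)\in D$ with $v$ integer and $\theta_i\geq\theta_f$ lies in $\IH(D)$ by Lemma~\ref{lemma:intpoint}, so $t_f$ is firable from $(m,\IH(D))$. That half is correct and complete.

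The reverse direction, however, is not a proof: the decisive step --- replacing the parameter component of a witness in $\IH(D)\cap H$ by a nearby integer value without leaving $\IH(D)\cap H$ --- is announced as ``the hardest one'' and then never carried out, so the argument stops exactly where it would have to start. Moreover, that step cannot follow from the two structural facts you invoke alone: for a general polyhedron with integer vertices, non-emptiness of $\IH(D)\cap H$ does not force $\IH(D)\cap H$ to contain a point with integer parameter component, because $H$ is an intersection of half-spaces whose complement is not convex --- two integer generators, one violating $\theta_1\geq\theta_f$ and the other violating $\theta_2\geq\theta_f$, can average to a point of $H$ whose parameter coordinate lies strictly between two integers. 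So any completion along your lines would have to exploit far more of the specific structure of state-class domains than you actually deploy. For comparison, the paper disposes of this direction in a single line, directly from $\IH(D)\subseteq D$: the point of $\IH(D)$ witnessing firability is itself taken as the required point of $D$. You may reasonably object that this elides the integrality of the parameter component demanded by the definition of $\N^\Params$-firability --- your instinct that something needs checking here is defensible --- but whether one accepts the paper's one-liner or insists on the finer analysis you sketch, your submission as written does not contain a proof of this implication.
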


\begin{proof}
    $\Leftarrow$: trivial because $\IH(D)\subseteq D$.

    $\Rightarrow$: since $t_f$ is \intfirable{} from $C$, there exists an integer parameter valuation $v$, and $(\vec{\theta},c,v)\in D$ such that for all transitions $t_i\in\en{m}, \theta_i\geq \theta_f$. And the result follows from Lemma~\ref{lemma:intpoint} because $v$ is an integer valuation.
\end{proof}

Second, the cost of a class $C=(m,D)$, for integer parameters, is $\costclassint(C)=\inf_{(\vec{\theta},c,v)\in D, v\in\N^\Params} c$.

Lemma~\ref{lemma:intcost} is a direct consequence of Lemma~\ref{lemma:intpoint}:
\begin{lemma}
    Let $(m,D)$ be a state class. We have: $\costclassint((m,D))=\costclass((m,\IH(D))$.
    \label{lemma:intcost}
\end{lemma}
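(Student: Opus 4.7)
The plan is to prove the two inequalities $\costclassint((m,D)) \geq \costclass((m,\IH(D)))$ and $\costclassint((m,D)) \leq \costclass((m,\IH(D)))$ separately, both as short consequences of Lemma~\ref{lemma:intpoint} and of the definition of $\IH$. Unfolding the definitions, the first quantity is $\inf\{c \mid (\vec{\theta},c,v)\in D,\ v\in\N^\Params\}$ and the second is $\inf\{c \mid (\vec{\theta},c,v)\in\IH(D)\}$, so everything reduces to comparing these two sets of cost values.

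For $\costclassint((m,D)) \geq \costclass((m,\IH(D)))$, I would observe that by Lemma~\ref{lemma:intpoint}, any point $(\vec{\theta},c,v)\in D$ with $v\in\N^\Params$ already lies in $\IH(D)$. Hence the set of cost values ranged over in the definition of $\costclassint((m,D))$ is a subset of the one used for $\costclass((m,\IH(D)))$, and the inequality between the infima follows directly.

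For the reverse inequality $\costclassint((m,D)) \leq \costclass((m,\IH(D)))$, I would use that $\IH(D)=\Conv(\IV(D))$: any $(\vec{\theta},c,v)\in\IH(D)$ is a finite convex combination $\sum_i \lambda_i(\vec{\theta}_i,c_i,v_i)$ of fully integer points of $D$. Because the cost coordinate is a linear function, $c=\sum_i\lambda_i c_i\geq \min_i c_i$. The point realising this minimum lies in $D$ and has an integer parameter valuation $v_i$, so $c_i\geq \costclassint((m,D))$, whence $c\geq \costclassint((m,D))$. Taking the infimum over $\IH(D)$ yields $\costclass((m,\IH(D)))\geq \costclassint((m,D))$.

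The one point to be careful about is that $\IH(D)=\Conv(\IV(D))$ is being used in its ``finite convex combination'' form, with no separate treatment of extremal rays: along any rational ray of $D$ there are infinitely many integer points, all included in $\IV(D)$, so convex combinations alone capture every point of $\IH(D)$. The boundary cases where either infimum equals $+\infty$ (no integer parameter valuation is feasible) or $-\infty$ (cost is unbounded below along a ray) are handled uniformly by the same inequality chain, since the argument never assumes finiteness of the infima.
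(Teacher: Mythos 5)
Your proof is correct and follows the route the paper intends: the paper gives no written proof, merely stating the lemma as a direct consequence of Lemma~\ref{lemma:intpoint}, and your two inequalities (integer-parameter points of $D$ lie in $\IH(D)$ by Lemma~\ref{lemma:intpoint}; points of $\IH(D)=\Conv(\IV(D))$ are finite convex combinations of integer points of $D$, whose linear cost coordinate is bounded below by the minimum of the combined costs) are exactly the details being elided. Your remarks on rays and on infinite infima are sound given the paper's definition of $\IH$ as the convex hull of integer points.
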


\begin{lemma}
    If $v$ is an integer parameter valuation, then for all classes $C_\sigma=(m,D)$, $(\vec{\theta},c,v) \in \IH(D)$ if and only if there exists a run $\rho$ in $v(\Net)$, and $I:\en{m}\rightarrow \Intervals{\Qp}$, such that $\sequence(\rho)=\sigma$, $(m,I,c)=\last(\rho)$, and $\vec{\theta}\in I$.
    \label{lemma:iruns}
\end{lemma}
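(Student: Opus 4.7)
The plan is to derive this directly from the two results already at hand: the characterisation of classes in Lemma~\ref{lemma:runs} and the ``integer points live in the integer hull'' statement of Lemma~\ref{lemma:intpoint}. The only subtlety compared to Lemma~\ref{lemma:runs} is the replacement of $D$ by $\IH(D)$, and since $v$ is assumed integer, Lemma~\ref{lemma:intpoint} gives exactly the bridge between the two.

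For the $(\Leftarrow)$ direction, I would start from a run $\rho$ in $v(\Net)$ with $\sequence(\rho)=\sigma$, $\last(\rho)=(m,I,c)$, and $\vec{\theta}\in I$. Applying Lemma~\ref{lemma:runs} (which is an equivalence, not a restriction to the integer hull) immediately yields $(\vec{\theta},c,v)\in D$. Since $v$ is assumed to be an integer valuation of $\Params$, Lemma~\ref{lemma:intpoint} lifts this point from $D$ to $\IH(D)$, giving $(\vec{\theta},c,v)\in\IH(D)$ as required.

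For the $(\Rightarrow)$ direction, I would simply use the trivial inclusion $\IH(D)\subseteq D$ (which holds since $\IH(D)=\Conv(\IV(D))$ and $\IV(D)\subseteq D$, with $D$ convex). From $(\vec{\theta},c,v)\in\IH(D)\subseteq D$, Lemma~\ref{lemma:runs} provides the desired run $\rho$ in $v(\Net)$ together with an interval function $I$ satisfying $\sequence(\rho)=\sigma$, $\last(\rho)=(m,I,c)$, and $\vec{\theta}\in I$.

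There is no real obstacle here; the proof is essentially a two-line observation that sandwiches the integer hull between Lemma~\ref{lemma:runs} and Lemma~\ref{lemma:intpoint}. The only thing worth being careful about is to make the hypothesis ``$v$ is integer'' used explicitly in the $(\Leftarrow)$ direction (where Lemma~\ref{lemma:intpoint} needs it) and to note that it is not needed in the $(\Rightarrow)$ direction, which holds for any $v\in\Qp^{\Params}$.
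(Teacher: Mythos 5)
Your proof is correct and follows exactly the paper's own argument: the forward direction via the inclusion $\IH(D)\subseteq D$ together with Lemma~\ref{lemma:runs}, and the backward direction via Lemma~\ref{lemma:runs} followed by Lemma~\ref{lemma:intpoint} applied to the integer valuation $v$. Nothing to add.
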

\begin{proof}
    $\Rightarrow$: if $(\vec{\theta},c,v)\in\IH(D)$ then it is also in $D$ and the result follows from Lemma~\ref{lemma:runs}.

    $\Leftarrow$: by Lemma~\ref{lemma:runs}, we know that there exists some $(\vec{\theta},c,v)\in D$, and since $v$ is an integer valuation, by Lemma~\ref{lemma:intpoint}, $(\vec{\theta},c,v)\in \IH(D)$.
\end{proof}
\begin{lemma}
    Let $C_{\sigma_1}$ and $C_{\sigma_2}$ be two state classes such that $\IH(C_{\sigma_1})\costincl \IH(C_{\sigma_2})$.

    If a transition sequence $\sigma$ is \intfirable{} from $C_{\sigma_1}$ it is also \intfirable{} from $C_{\sigma_2}$ and $\costclassint(C_{\sigma_1.\sigma}) \geq \costclassint(C_{\sigma_2.\sigma})$.
    \label{lemma:intcostincl}
\end{lemma}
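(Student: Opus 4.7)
The plan is to imitate the proof of Lemma~\ref{lemma:costincl} at the level of integer hulls, reducing everything to a one-step lemma that is then iterated along~$\sigma$. The key auxiliary statement to establish first is: if $\IH(C)\costincl\IH(C')$ and $t$ is \intfirable{} from $C$, then $t$ is \intfirable{} from $C'$ and $\IH(\Next(C,t))\costincl\IH(\Next(C',t))$. The firability part is short: Lemma~\ref{lemma:intfirable} rewrites \intfirable{} as ordinary firability from the respective integer hulls, and the pointwise consequence of subsumption used in the proof of Lemma~\ref{lemma:costincl} transfers firability from $\IH(C)$ to $\IH(C')$. For the subsumption of the successor integer hulls, I would argue pointwise on integer points: starting from any integer point $(\vec\theta',c'_1,v)\in\IH(\Next(C,t))$, Lemma~\ref{lemma:intpoint} places it in $\Next(C,t)$; Lemma~\ref{lemma:succ} then extracts a witness $(\vec\theta,c_1,v)\in C$ satisfying the firing conditions, which lies in $\IH(C)$ again by Lemma~\ref{lemma:intpoint} since $v$ is integer. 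The assumption $\IH(C)\costincl\IH(C')$ yields some $(\vec\theta,c_2,v)\in\IH(C')$ with $c_2\le c_1$; Lemma~\ref{lemma:intpoint} moves it to $C'$, and because the firing conditions only constrain the $\theta$ coordinates, the ``if'' direction of Lemma~\ref{lemma:succ} delivers $(\vec\theta',c'_2,v)\in\Next(C',t)$ with $c'_2\le c'_1$ (the markings of $C$ and $C'$ coincide, so the same $\costrate$ and $\discreteCost$ enter the update), and a final use of Lemma~\ref{lemma:intpoint} places that point in $\IH(\Next(C',t))$. Since both integer hulls are the convex hull of their integer points, taking convex combinations of this pointwise correspondence yields the projection inclusion and the pointwise minimum-cost inequality required by Definition~\ref{def:costincl}.

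With the one-step lemma in hand, the statement of Lemma~\ref{lemma:intcostincl} follows by induction on $|\sigma|$. The base case ($\sigma$ empty) reduces to $\costclassint(C_{\sigma_1})\ge\costclassint(C_{\sigma_2})$, which follows from Lemma~\ref{lemma:intcost} together with property~(i) in the proof of Lemma~\ref{lemma:costincl}. For $\sigma=t.\sigma'$, the one-step lemma provides that $t$ is \intfirable{} from $C_{\sigma_2}$ and that $\IH(C_{\sigma_1.t})\costincl\IH(C_{\sigma_2.t})$, after which the induction hypothesis applied to $\sigma'$ with the pair $C_{\sigma_1.t}$ and $C_{\sigma_2.t}$ finishes the argument.

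The main obstacle I expect is precisely this lifting step in the one-step lemma: turning a pointwise correspondence between integer points of the two successor hulls into a genuine parametric subsumption $\IH(\Next(C,t))\costincl\IH(\Next(C',t))$. It rests on Lemma~\ref{lemma:intpoint}, which keeps every witness in the correct integer hull as long as the parameter valuation $v$ remains integer, together with the fact that each integer hull is generated by its integer points, so that convex averaging of the point-by-point correspondences simultaneously delivers the projection inclusion and the minimum-cost inequality demanded by the definition of $\costincl$.
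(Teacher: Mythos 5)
Your proof is correct and follows essentially the same route as the paper's: derive a one-step transfer ($\N^\Params$-firability plus subsumption of the successors) from Definition~\ref{def:costincl} together with Lemmas~\ref{lemma:intpoint}, \ref{lemma:intfirable}, \ref{lemma:succ} and \ref{lemma:intcost}, then induct along $\sigma$. If anything, your rendering is the more careful one: the paper states the one-step successor relation as $\Next(C_{\sigma_1},t)\costincl\Next(C_{\sigma_2},t)$ without the integer hulls and leaves the re-establishment of the inductive hypothesis implicit, whereas you explicitly maintain $\IH(\Next(C,t))\costincl\IH(\Next(C',t))$ via Lemma~\ref{lemma:intpoint} and convex combinations of integer points, which is exactly the invariant the induction needs.
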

\begin{proof}
    Let $C_{\sigma_1}=(m_1,D_1)$ and $C_{\sigma_2}=(m_2,D_2)$.
    From Definition~\ref{def:costincl}, for any point $(\vec{\theta},c_1,v)\in \IH(D_1)$, there exists a point $(\vec{\theta},c_2,v)\in \IH(D_2)$ such that $c_2\leq c_1$. With Lemma~\ref{lemma:intfirable} and Lemma~\ref{lemma:intcost}, this implies that:
            (i) $\costclassint(C_{\sigma_1}) \geq \costclassint(C_{\sigma_2})$;
            (ii) if transition $t$ is \intfirable{} from $C_{\sigma_1}$, then it is \intfirable{} from $C_{\sigma_2}$ and $\Next(C_{\sigma_1},t) \costincl\Next(C_{\sigma_2},t)$.
    And, as before, the result follows by a straightforward induction.
\end{proof}

\begin{algorithm}[!h]
    \begin{algorithmic}[1]
    \STATE $\aPoly\leftarrow \emptyset$
    \STATE $\aPassed\leftarrow \emptyset$
    \STATE $\aWaiting\leftarrow \{(m_0, D_0)\}$
    \WHILE {$\aWaiting\neq \emptyset$}
        \STATE select $C_\sigma = (m, D)$ from $\aWaiting$
        \IF {$m\in \aGoal$}
            \STATE $\aPoly\leftarrow \aPoly\cup\proj{\big(\IH(D)\cap (c \leq \cmax)\big)}{\Params}$
        \ENDIF
        \IF {for all $C'\in \aPassed, \IH(C_\sigma)\not\costincl \IH(C')$}
            \STATE add $C_\sigma$ to $\aPassed$
            \STATE for all $t\in\firable{\IH(C_\sigma)}$, add $C_{\sigma.t}$ to $\aWaiting$
        \ENDIF
    \ENDWHILE
        \RETURN $\aPoly$
\end{algorithmic}
    \caption{Restriction of (semi-)Algorithm~\ref{algo:boundedsynth} to integer parameter valuations.}
\label{algo:intboundedsynth}
\end{algorithm}

\begin{example}
    Let us compute the integer hulls of some of the state classes of the net in Figure~\ref{fig:pctpn}.
    $D_0$ and $D_1$ already have integer vertices, but starting from $n>1$, we have in $C_n$ that $a\leq \frac{5}{n}$, and so $D_2$ does not have integer vertices. The integer hull of $D_2$ is $\{t_0=a, 0\leq t_1, 2-a\leq t_1\leq 5-a, c=2(2+3a), 0\leq a\leq 2\}$. For $n\in\{3,4,5\}$, we have $\IH(D_n)=\{t_0=a, 0\leq t_1, 2-2a\leq t_1\leq 5-na, c=n(2+3a), 0\leq a\leq 1\}$. And finally, for $n\geq 6$, $\IH(D_n)=\{t_0=0, 2\leq t_1\leq 5, c=2n, a=0\}$. So $C_7\costincl C_6$ because $D_7=D_6$ and, for the costs, $14\geq 12$. Note that we actually already had $C_2\costincl C_1$ because $D_2\subset D_1$ and on $D_2$, i.e., here for all $a\leq 2$, we have $2(2+3a) \geq 2+3a$.
\end{example}

Using Lemma~\ref{lemma:iruns} instead of Lemma~\ref{lemma:runs}, and Lemma~\ref{lemma:intcostincl} instead of Lemma~\ref{lemma:costincl} in the proof of Proposition~\ref{prop:boundedcorrect}, we get the following proposition, stating the completeness and soundness of Algorithm~\ref{algo:intboundedsynth}.

\begin{proposition}
    After any iteration of the while loop in Algorithm~\ref{algo:intboundedsynth}:
    \begin{enumerate}
    \itemsep=0.9pt
        \item if $v\in\aPoly$ and $v$ is an integer parameter valuation then there exists a run $\rho$ in $v(\Net)$ such that $\costrun(\rho)\leq \cmax$ and $\lastm(\rho)\in \aGoal$.
        \item if $\aWaiting = \emptyset$ then for all integer parameter valuations $v$ such that there exists a run $\rho$ in $v(\Net)$ such that $\costrun(\rho)\leq \cmax$ and $\lastm(\rho)\in \aGoal$, we have $v\in\aPoly$.
    \end{enumerate}
    \label{prop:intboundedcorrect}
\end{proposition}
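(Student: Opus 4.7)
The plan is to mirror the proof of Proposition~\ref{prop:boundedcorrect}, performing two substitutions throughout: every use of Lemma~\ref{lemma:runs} is replaced by Lemma~\ref{lemma:iruns}, and every use of Lemma~\ref{lemma:costincl} by Lemma~\ref{lemma:intcostincl}. Since the structure of Algorithm~\ref{algo:intboundedsynth} differs from Algorithm~\ref{algo:boundedsynth} only by applying $\IH(\cdot)$ before the firability test, the $\costincl$ comparison, and the intersection with $c\leq\cmax$, I would first restate Lemma~\ref{lemma:invbs} for this algorithm: its only change is that $\proj{(D\cap(c\leq\cmax))}{\Params}\subseteq\aPoly$ becomes $\proj{(\IH(D)\cap(c\leq\cmax))}{\Params}\subseteq\aPoly$, and ``$C_{\sigma.t}\costincl C'$'' becomes ``$\IH(C_{\sigma.t})\costincl \IH(C')$''. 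The inductive proof of the adapted invariant is identical to that of Lemma~\ref{lemma:invbs}, since the reasoning about $\aPassed$, $\aWaiting$, and line-by-line updates does not depend on whether the domain or its integer hull is manipulated.

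For soundness (item~1), I would proceed by induction on the iterations of the while loop. When an integer valuation $v$ is added to $\aPoly$ at iteration $n+1$ for a class $C_\sigma=(m,D)$ with $m\in\aGoal$, there exists $(\vec{\theta},c,v)\in\IH(D)$ with $c\leq\cmax$. Since $v$ is integer, Lemma~\ref{lemma:iruns} applies and yields a run $\rho$ in $v(\Net)$ with $\sequence(\rho)=\sigma$, $\lastm(\rho)=m\in\aGoal$, and $\cost(\rho)=c\leq\cmax$. For completeness (item~2), given an integer $v$ and a run $\rho$ in $v(\Net)$ with $\lastm(\rho)\in\aGoal$ and $\cost(\rho)\leq\cmax$, letting $\sigma=\sequence(\rho)$, I would induct on the length $n$ of the largest suffix $\sigma_2$ of $\sigma$ whose first state class is not yet in $\aPassed$. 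If $n=0$ then $C_\sigma=(m,D)\in\aPassed$; Lemma~\ref{lemma:iruns} gives $(\vec{\theta},c,v)\in\IH(D)$ with $c\leq\cmax$, and the adapted invariant closes the case. If $n>0$, decompose $\sigma=\sigma_1.t.\sigma_3$ with $C_{\sigma_1}\in\aPassed$ and $C_{\sigma_1.t}\notin\aPassed$; the adapted invariant together with $\aWaiting=\emptyset$ furnishes $C_{\sigma'}\in\aPassed$ with $\IH(C_{\sigma_1.t})\costincl\IH(C_{\sigma'})$. By Lemma~\ref{lemma:intcostincl}, $\sigma_3$ is \intfirable{} from $C_{\sigma'}$ and $\costclassint(C_{\sigma'.\sigma_3})\leq\costclassint(C_\sigma)\leq\cmax$, and Lemma~\ref{lemma:iruns} converts the witness point with the integer valuation $v$ back into a run $\rho'$ to which the induction hypothesis applies, since all prefixes of $\sigma'$ are in $\aPassed$.

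The main subtlety to verify carefully is the interplay between a concrete run with integer parameter valuation and the \emph{symbolic} cost over $\IH(D)$: one must ensure that whenever $\rho$ is an actual run in $v(\Net)$ with $\sequence(\rho)=\sigma$ and cost $c$, the triple $(\vec{\theta},c,v)$ witnessed by Lemma~\ref{lemma:runs} actually belongs to $\IH(D)$, not merely to $D$. This is precisely what Lemma~\ref{lemma:intpoint} (and hence Lemma~\ref{lemma:iruns}) provides, and it is the real content of restricting to integer parameters: without it, the subsumption test performed on integer hulls would be too coarse and completeness could fail. Once this point is noted, the remainder of the argument is a mechanical port of the proof of Proposition~\ref{prop:boundedcorrect}.
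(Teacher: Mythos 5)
Your proposal is correct and follows exactly the route the paper takes: the paper's entire proof of this proposition is the remark that one obtains it from the proof of Proposition~\ref{prop:boundedcorrect} by substituting Lemma~\ref{lemma:iruns} for Lemma~\ref{lemma:runs} and Lemma~\ref{lemma:intcostincl} for Lemma~\ref{lemma:costincl}, which is precisely what you carry out (with the adapted invariant spelled out explicitly). Your identification of Lemma~\ref{lemma:intpoint} as the point where integrality of $v$ is genuinely needed is also the right observation.
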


\begin{algorithm}[!h]
    \begin{algorithmic}[1]
    \STATE $\aCost\leftarrow \infty$
    \STATE $\aPoly\leftarrow \emptyset$
    \STATE $\aPassed\leftarrow \emptyset$
    \STATE $\aWaiting\leftarrow \{(m_0, D_0)\}$
    \WHILE {$\aWaiting\neq \emptyset$}
        \STATE select $C_\sigma = (m, D)$ from $\aWaiting$
        \IF {$m\in \aGoal$}
            \IF {$\costclass(\IH(C_\sigma)) < \aCost$}
                \STATE $\aCost\leftarrow \costclass(\IH(C_\sigma))$
                \STATE $\aPoly\leftarrow \proj{\big(\IH(D)\cap (c = \aCost)\big)}{\Params}$
            \ELSIF {$\costclass(\IH(C_\sigma)) = \aCost$}
                \STATE $\aPoly\leftarrow \aPoly\cup \proj{\big(\IH(D)\cap (c = \aCost)\big)}{\Params}$
            \ENDIF
        \ENDIF
        \IF {for all $C'\in \aPassed, \IH(C_\sigma)\not\costincl \IH(C')$}
            \STATE add $C_\sigma$ to $\aPassed$
            \STATE for all $t\in\firable{\IH(C_\sigma)}$, add $C_{\sigma.t}$ to $\aWaiting$
        \ENDIF
    \ENDWHILE
        \RETURN $(\aCost, \aPoly)$
\end{algorithmic}

    \caption{Restriction of (semi-)Algorithm~\ref{algo:infsynth} to integer parameter valuations.}
\label{algo:intinfsynth}
\end{algorithm}

Similarly, we can prove the completeness and soundness of Algorithm~\ref{algo:intinfsynth}:
\begin{proposition}
    After any iteration of the while loop in Algorithm~\ref{algo:intinfsynth}:
    \begin{enumerate}
\itemsep=0.9pt
        \item if $v\in\aPoly$ and $v$ is an integer valuation, then there exists a run $\rho$ in $v(\Net)$ such that $\costrun(\rho)=\aCost$ and $\lastm(\rho)\in \aGoal$.
        \item if $\aWaiting = \emptyset$ then
               if some marking in $\aGoal$ is reachable for some integer parameter valuation then $\aCost = \min_{\rho\in\Runs(\calN),\lastm(\rho)\in\aGoal} \cost(\rho)$ otherwise $\aCost=+\infty$.
        \item if $\aWaiting = \emptyset$ then
                for all integer parameter valuations $v$ such that there exists a run $\rho$ in $v(\Net)$ such that $\costrun(\rho) = \aCost$ and $\lastm(\rho)\in \aGoal$, we have $v\in\aPoly$;
    \end{enumerate}
    \label{prop:intinfcorrect}
\end{proposition}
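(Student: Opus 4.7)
The plan is to mirror the proof of Proposition~\ref{prop:infcorrect} while systematically replacing domains $D$ with their integer hulls $\IH(D)$ and using the corresponding integer-valuation machinery. Specifically, Lemma~\ref{lemma:runs} is replaced by Lemma~\ref{lemma:iruns}, Lemma~\ref{lemma:costincl} by Lemma~\ref{lemma:intcostincl}, and whenever a cost of a class appears I use $\costclassint$, exploiting $\costclassint((m,D))=\costclass((m,\IH(D)))$ from Lemma~\ref{lemma:intcost} and firability via Lemma~\ref{lemma:intfirable}. The first step is to establish an integer-hull analogue of Lemma~\ref{lemma:invinfs}: after every iteration of the while loop, for every $C_\sigma=(m,D)\in\aPassed$ (i) all prefixes $C_{\sigma'}$ are in $\aPassed$; (ii) every \intfirable{} one-step successor $C_{\sigma.t}$ is either in $\aWaiting$ or satisfies $\IH(C_{\sigma.t})\costincl \IH(C')$ for some $C'\in\aPassed$; (iii) $\costclassint(C_\sigma)\geq \aCost$; and (iv) if $m\in\aGoal$ and $\costclassint(C_\sigma)=\aCost$ then $\proj{(\IH(D)\cap(c=\aCost))}{\Params}\subseteq\aPoly$. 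The proof is the straightforward induction on the iteration count, transcribing that of Lemma~\ref{lemma:invinfs}.

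For item~1, suppose $v\in\aPoly$ with $v$ an integer valuation. Inducting on the iteration at which $v$ enters $\aPoly$, $v$ was produced from some class $C_\sigma=(m,D)$ with $m\in\aGoal$ and $v\in\proj{(\IH(D)\cap(c=\aCost))}{\Params}$ at that iteration; since $\aCost$ can only decrease later and any such decrease resets $\aPoly$, the current value of $\aCost$ coincides with the one used when $v$ was added. Hence there exists $\vec{\theta}$ with $(\vec{\theta},\aCost,v)\in\IH(D)$, and Lemma~\ref{lemma:iruns} yields a run $\rho$ in $v(\Net)$ with $\sequence(\rho)=\sigma$, $\lastm(\rho)=m\in\aGoal$, and $\costrun(\rho)=\aCost$.

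For item~2, if no integer valuation makes $\aGoal$ reachable then $\aCost=+\infty$ from its initialisation, and no iteration ever modifies it. Otherwise pick an integer valuation $v^\star$ and a run $\rho$ in $v^\star(\Net)$ realising the infimum $\min_{\rho'\in\Runs(\calN),\,\lastm(\rho')\in\aGoal}\cost(\rho')$ restricted to integer valuations. Write $\sigma=\sequence(\rho)$ and note $\costrun(\rho)=\costclassint(C_\sigma)$ by minimality of $\rho$ among runs firing $\sigma$ with an integer valuation. I then induct on the length $n$ of the maximal suffix $\sigma_2$ of $\sigma$ for which, writing $\sigma=\sigma_1\sigma_2$ and letting $t$ be the first letter of $\sigma_2$, $C_{\sigma_1.t}\notin\aPassed$. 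The base case $n=0$ gives $C_\sigma\in\aPassed$, whence $\aCost\leq \costclassint(C_\sigma)$ by the invariant; equality follows because any $C'\in\aPassed$ with $m'\in\aGoal$ and $\costclassint(C')=\aCost<\costclassint(C_\sigma)$ would, by Lemma~\ref{lemma:iruns}, produce a cheaper run, contradicting minimality of $\rho$. For $n>0$, since $\aWaiting=\emptyset$ the invariant supplies $C'\in\aPassed$ with $\IH(C_{\sigma_1.t})\costincl\IH(C')$; Lemma~\ref{lemma:intcostincl} then makes the remaining suffix \intfirable{} from $C'$ with no greater integer cost, producing via Lemma~\ref{lemma:iruns} a run $\rho'$ over some integer valuation reaching $\aGoal$ with $\cost(\rho')\leq\cost(\rho)$, hence equal to $\cost(\rho)$ by minimality, and whose suffix-length is at most $n-1$.

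For item~3, take an integer valuation $v$ and a run $\rho$ in $v(\Net)$ with $\lastm(\rho)\in\aGoal$ and $\costrun(\rho)=\aCost$, and perform the same suffix induction. In the base case, the last clause of the invariant gives $\proj{(\IH(D)\cap(c=\aCost))}{\Params}\subseteq\aPoly$, and Lemma~\ref{lemma:iruns} together with Lemma~\ref{lemma:intpoint} places $v$ in this set. The induction step is identical: the replacement run $\rho'$ is over the same integer valuation $v$ with $\costrun(\rho')\leq\aCost$, and item~2 (which is now available) forces equality, so the induction hypothesis applies to $\rho'$. The main subtlety throughout is that one must extract witnesses from $\IH(D)$ rather than $D$, which is precisely what Lemma~\ref{lemma:intpoint} guarantees for integer valuations; this is also what makes the subsumption test $\IH(C_\sigma)\costincl\IH(C')$ in Algorithm~\ref{algo:intinfsynth} exactly the hypothesis needed to invoke Lemma~\ref{lemma:intcostincl}.
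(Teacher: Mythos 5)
Your proof is correct and takes essentially the same approach as the paper: the paper gives no separate argument for Proposition~\ref{prop:intinfcorrect}, only the remark that it follows the proof of Proposition~\ref{prop:infcorrect} with Lemma~\ref{lemma:iruns} and Lemma~\ref{lemma:intcostincl} substituted for Lemmas~\ref{lemma:runs} and~\ref{lemma:costincl}, which is exactly the transcription you carry out (in rather more detail, including the integer-hull analogue of the loop invariant of Lemma~\ref{lemma:invinfs} and the use of Lemma~\ref{lemma:intpoint} to extract witnesses from $\IH(D)$).
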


In Algorithms~\ref{algo:intboundedsynth} and \ref{algo:intinfsynth}, we compute state classes as usual then handle them via their integer hulls. We can actually simply integrate integer hulls at the end of Algorithm~\ref{algo:nextclass} and use Algorithm~\ref{algo:boundedsynth} with this updated successor computation as proved by Lemma~\ref{lemma:keepih}.

\begin{lemma}
    Let $(m, D)$ be a state class of a pcTPN $\Net$, and $t$ a transition firable from $C$. Let $(m',D') = \Next((m,D), t)$ and $(m'',D'')= \Next((m,\IH(D)), t)$. Then $m'' = m'$ and $\IH(D'') = \IH(D')$.
    \label{lemma:keepih}
\end{lemma}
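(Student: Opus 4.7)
The plan splits naturally into the two claims of the statement. For $m'' = m'$, observe that line~1 of Algorithm~\ref{algo:nextclass} computes the successor marking as $m - \pre{t} + \post{t}$, which depends only on $m$ and $t$ and is therefore unaffected by replacing the firing domain $D$ by $\IH(D)$.

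For the equality $\IH(D'') = \IH(D')$, I would prove two inclusions. The easy direction, $\IH(D'') \subseteq \IH(D')$, would follow from the monotonicity of the symbolic successor computation in its domain argument: since $\IH(D) \subseteq D$, and each polyhedral step of Algorithm~\ref{algo:nextclass} (intersection with a fixed half-space, an invertible affine change of variables, affine extension by the new cost variable, projection onto the remaining variables, and Cartesian product with the intervals $I_s(t_j)$ for newly enabled transitions) preserves set inclusion, we obtain $D'' \subseteq D'$, and hence $\IH(D'') \subseteq \IH(D')$.

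The substantive direction is $\IH(D') \subseteq \IH(D'')$. Since $\IH(D')$ is by definition the convex hull of the integer points of $D'$, it suffices to show that every such integer point already lies in $\IH(D'')$. Pick one, say $P = (\vec{\theta'}, c', v) \in D'$; in particular $v$ is an integer parameter valuation. By Lemma~\ref{lemma:succ}, which captures exactly the input-output relation realised by Algorithm~\ref{algo:nextclass}, there exists a preimage $(\vec{\theta}, c, v) \in D$ satisfying the successor conditions that produce $P$. Because $v$ is integer, Lemma~\ref{lemma:intpoint} (which relies on Proposition~\ref{prop:fullintv}) upgrades this preimage to a point of $\IH(D)$. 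Feeding that very preimage into the successor construction applied to $(m, \IH(D))$ then witnesses $P \in D''$, and since $P$ is an integer point of $D''$, $P \in \IH(D'')$ as required.

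The main subtlety I would want to check carefully is that the symbolic characterisation of Lemma~\ref{lemma:succ} can legitimately be applied to $(m, \IH(D))$, which is not itself the firing domain of a bona fide state class in general. This is however not a real obstacle: Algorithm~\ref{algo:nextclass} is a purely polyhedral implementation of the successor conditions, so the input-output relation of Lemma~\ref{lemma:succ} applies unchanged to any convex polyhedron fed to it, including $\IH(D)$.
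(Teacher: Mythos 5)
Your proposal is correct and follows essentially the same route as the paper's proof: markings are trivially equal, one inclusion comes from monotonicity of the successor computation under $\IH(D)\subseteq D$, and the other comes from taking an integer point of $D'$, pulling back a preimage in $D$ (the paper phrases this via the non-parametric net $v(\Net)$ where you invoke Lemma~\ref{lemma:succ} directly, but it is the same fact), lifting that preimage into $\IH(D)$ via Lemma~\ref{lemma:intpoint}, and pushing it forward into $D''$. Your remark that Algorithm~\ref{algo:nextclass} is purely polyhedral and so applies verbatim to $\IH(D)$ is a legitimate point the paper leaves implicit.
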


\begin{proof}
    The equality of markings is trivial so we focus on firing domains.

    By definition of the integer hull, we have $\IH(D)\subseteq D$. Since the computation of the next class domain is non-decreasing with respect to inclusion, we then have $D''\subseteq D'$. Taking the integer hull is also non-decreasing wrt. inclusion, so $\IH(D'')\subseteq \IH(D')$.

    Consider now an integer point $(\vec{\theta'},c',v)$ in $D'$. Then $(\vec{\theta'},c')\in v(D')$. Consider state class computations in the (non-parametric) cost TPN $v(\Net)$: there exists some point $(\vec{\theta},c)$ in $v(D)$ such that $(m', \vec{\theta'}, c')\in\Next((m,\{(\vec{\theta},c)\}), t)$. Since $(\vec{\theta}, c, v)$ thus belongs to $D$ and since $v$ is an integer parameter valuation, by Lemma~\ref{lemma:intpoint}, we have that $(\vec{\theta}, c, v)\in \IH(D)$. Thus $(\vec{\theta'}, c', v)\in D''$ and since it is an integer point, it is in $\IH(D'')$.
\end{proof}

\section{Termination of Algorithms~\ref{algo:intboundedsynth} and \ref{algo:intinfsynth}}
\label{sec:termination}

We now consider that parameter valuations are bounded by some value $M_1\in\N$ (and that they still have integer values).
We also assume that, for all integer parameter valuations, there exists $M_2\in\Z$ such that for all runs $\rho$ in $v(\Net)$, $\costrun(\rho) \geq M_2$: this allows us, as in \cite{boucheneb-FORMATS-17,bouyer-CAV-16}, to keep Algorithms~\ref{algo:intboundedsynth} and \ref{algo:intinfsynth} simple by doing away with negative cost loop-checking.
Finally, we assume the net itself is bounded: there exists $M_3\in\N$ such that for all reachable markings $m$, for all places $p$, $m(p)\leq M_3$.

To prove the termination of Algorithm~\ref{algo:intboundedsynth} and \ref{algo:intinfsynth} under these assumptions, we consider $\costrincl$ the symmetric relation to $\costincl$, such that $x\costrincl y$ iff $y\costincl x$. We prove that it is a well quasi-order (wqo), i.e., that for every infinite sequence of state classes, there exist $C$ and $C'$ in the sequence, with $C$ strictly preceding $C'$ such that $C\costrincl C'$. This implies that the exploration of children in Algorithm~\ref{algo:intboundedsynth} and \ref{algo:intinfsynth} will always eventually stop.

\begin{proposition}
    Let $\Net$ be a bounded pcTPN, with bounded integer parameters and such that the cost of all runs is uniformly lower-bounded for all integer parameter valuations.

    Relation $\costrincl$ is well-quasiorder on the set of state classes of $\Net$.
    \label{prop:wqo}
\end{proposition}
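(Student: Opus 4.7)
\smallskip
\noindent\emph{Proof plan.} The plan is to reduce the wqo claim, via a sequence of pigeonhole arguments on the structural components of state classes, to a direct application of Dickson's lemma on cost vectors at integer vertices of projected domains. Given an infinite sequence $(C_i)_{i\geq 1}$ of state classes of $\Net$, I would first use the $M_3$-boundedness of $\Net$, which yields finitely many reachable markings, to extract by pigeonhole an infinite subsequence with a common marking $m$. Next, since parameter valuations lie in $[0, M_1]^\Params$ and the $\theta$-coordinates are bounded by the maximum of $M_1$ and the constant upper bounds appearing in $I_s$ --- with a fixed finite set of extremal ray directions contributed by transitions whose static upper bound is $\infty$ --- each projection $\proj{\IH(D_i)}{\Theta\cup\Params}$ is the convex hull of a subset of a common finite set of integer points together with a subset of a common finite set of rays. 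Only finitely many such polyhedra exist, so a second pigeonhole step reduces to the case where all $\proj{\IH(D_i)}{\Theta\cup\Params}$ equal a single polyhedron $P$.

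Let $V$ denote the (finite) set of integer vertices of $P$. For each $u \in V$, the value $\costclass_{\IH(D_i)}(u)$ is, by Lemma~\ref{lemma:iruns}, the minimum cost of a run of $u_{|\Params}(\Net)$ whose transition sequence reaches $C_i$ with relative firing times given by $u_{|\Theta}$; by the uniform lower-bound assumption this value is at least $M_2$, and by Proposition~\ref{prop:fullintv} it is an integer. The sequence $i \mapsto (\costclass_{\IH(D_i)}(u))_{u \in V}$ therefore lives in $(\Z_{\geq M_2})^V$, which is order-isomorphic to $\N^{|V|}$. Dickson's lemma then yields indices $i < j$ with $\costclass_{\IH(D_i)}(u) \leq \costclass_{\IH(D_j)}(u)$ for every $u \in V$.

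It remains to lift this componentwise inequality from $V$ to all of $P$, so as to match the definition of $\costincl$ from Definition~\ref{def:costincl}. Here I would exploit convexity: every $(\vec{\theta}, v) \in P$ can be written as a convex combination $\sum_k \mu_k u_k$ of integer vertices of $V$ (plus a non-negative combination of the common extremal rays); since $\IH(D_i)$ is convex, applying the same coefficients $\mu_k$ to the min-cost witnesses at the $u_k$ yields a feasible point of $\IH(D_i)$ whose cost is at most $\sum_k \mu_k \costclass_{\IH(D_i)}(u_k)$. Choosing the decomposition that realises the minimum in $\IH(D_j)$ thus gives $\costclass_{\IH(D_i)}(\vec{\theta}, v) \leq \sum_k \mu_k \costclass_{\IH(D_i)}(u_k) \leq \sum_k \mu_k \costclass_{\IH(D_j)}(u_k) = \costclass_{\IH(D_j)}(\vec{\theta}, v)$, establishing $C_j \costincl C_i$ and hence $C_i \costrincl C_j$. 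The main delicate point is this last step --- arguing that the min-cost function on $\IH(D)$ over the $(\Theta \cup \Params)$-projection coincides with the convex-combination infimum of its values at the integer vertices $V$, so that extremal rays and the upper bounds on $c$ do not interfere with the comparison --- for which I would rely on the explicit vertex description given by Lemma~\ref{lemma:intv} and Proposition~\ref{prop:fullintv}.
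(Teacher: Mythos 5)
Your strategy is genuinely different from the paper's: the paper invokes the known fact (from the FORMATS'17 reference) that $\costrincl$ is a wqo on state classes of \emph{non-parametric} bounded cost TPNs, pigeonholes over the finitely many bounded integer parameter valuations to extract a subsequence ordered under every instantiation $v(\cdot)$, and then lifts this to the parametric integer hulls via a vertex/ray argument for inclusion and a linear-programming argument for the cost comparison. Your route (pigeonhole on markings and on the projected polyhedra, then Dickson's lemma on the cost dimension) could in principle be made self-contained, but as written its final lifting step contains a genuine gap. (Your second pigeonhole step --- finitely many possible polyhedra $\proj{\IH(D_i)}{\Theta\cup\Params}$ --- also deserves justification, since it needs both that the integer vertices lie in a common bounded box and that the recession cones range over a finite set, but this is repairable.)

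The gap is that comparing the min-cost functions $g_i(\cdot):=\costclass_{\IH(D_i)}(\cdot)$ only at the vertices $V$ of the common projection $P$ does not control them on the rest of $P$. The function $g_j$ is the lower envelope of $\IH(D_j)$ over $P$: it is convex but only \emph{piecewise} linear, with breakpoints at projections of vertices of $\IH(D_j)$ that need not be vertices of $P$ --- this is exactly why the decomposition of Proposition~\ref{prop:fullintv} into pieces with a single linear cost lower bound is needed. Hence your claimed identity $\sum_k \mu_k\, g_j(u_k) = g_j(\vec{\theta},v)$ fails. Concretely, take $P=[0,2]$ in a single $\theta$-variable, $\IH(D_j)=\{(\theta,c): 0\leq\theta\leq 2,\ c\geq|\theta-1|\}$ and $\IH(D_i)=\{(\theta,c): 0\leq\theta\leq 2,\ c\geq 1\}$: both have integer vertices, $g_i$ and $g_j$ agree (and equal $1$) on $V=\{0,2\}$, so your Dickson comparison succeeds, yet $g_i(1)=1>0=g_j(1)$, so $C_j\not\costincl C_i$. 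The repair within your framework is to index the Dickson vector by \emph{all} integer points of $P$ rather than its vertices: then for any $(\vec{\theta},v)\in P$ one writes $(\vec{\theta},v,g_j(\vec{\theta},v))$ as a convex combination of integer vertices $(y_k,c_k)$ of $\IH(D_j)$ and gets $g_j(\vec{\theta},v)=\sum_k\mu_k c_k\geq\sum_k\mu_k g_j(y_k)\geq\sum_k\mu_k g_i(y_k)\geq g_i(\vec{\theta},v)$, the last step by convexity of $g_i$. This index set is finite when $P$ is bounded, but you would still owe an argument for transitions with right-infinite static intervals, where $P$ has infinitely many integer points; the paper's reduction to the non-parametric wqo result avoids this issue altogether.
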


\begin{proof}
    Consider an infinite sequence $C_0, C_1, C_2, \ldots$ of state classes.
    Let $C_i=(m_i,D_i)$.

    From \cite{boucheneb-FORMATS-17}, we know that $\costrincl$ is a wqo for the state classes of bounded (non parametric) cost TPNs. So for each integer parameter valuation $v$, and using a classical property of wqo we can extract a subsequence of $v(C_{0}), v(C_{1}), \ldots$ that is completely ordered by $\costrincl$. And since, we have a finite number of such parameter valuations, we can extract an infinite subsequence $C_{i_0}, C_{i_1}, \ldots$ such that for all integer parameter valuations $v$, $v(C_{i_0}) \costrincl v(C_{i_1}) \costrincl \cdots$.

    Let us consider two of those: $C_{i_r}$ and $C_{i_s}$, with $r<s$.

    Since $\IH(D_{i_s})$ has integer vertices, and for any integer parameter valuation, $v(C_{i_r}) \costrincl v(C_{i_s})$, which implies that $v(D_{i_s})\subseteq v(D_{i_r})$, then all the vertices of $D_{i_s}$ are also in $D_{i_r}$. Now assume that some extremal ray $\vec{r}$ of $D_{i_s}$ is not in $D_{i_r}$. Then starting from some vertex $\vec{x}$ of $D_{i_r}$, there must be some $\lambda\leq 0$ such that $\vec{x}+\lambda r\not\in D_{i_s}$ and the same holds for any $\lambda'\geq \lambda$ (by convexity). But since $r$ has rational coordinates for some value of $\lambda'$, $\lambda'r$ is an integer vector and so is $\vec{x}+\lambda'r$, which contradicts the fact that $v(D_{i_s})\subseteq v(D_{i_r})$, for all integer parameter valuations $v$, and in particular $\proj{(\vec{x}+\lambda'r)}{\Params}$. We can therefore conclude that $D_{i_r}\subseteq D_{i_s}$ and we now proceed to proving that $D_{i_s}$ is also ``cheaper'' than $D_{i_r}$.

    We use another property of the vertices of convex polyhedra: vertices of a convex polyhedron of dimension $n$ defined by $m$ inequalities $\sum_{k=1}^n a_{kl}x_k \leq b_l$, for $j\in [1..m]$ are solutions of a system of $n$ linearly independent equations $\sum_{k=1}^n a_{kl}x_k = b_l$, with $l$ in a subset of size $n$ of $[1..m]$.

    Now consider the polyhedron $D$ obtained from $\IH(D_{i_r})$, with its cost variable $c$, by adding one variable $c'$ constrained by the cost inequalities of $\IH(D_{i_s})$.
    Clearly, since $c$ and $c'$ are not constrained together, the vertices of $D$ are those of $\IH(D_{i_r})$, extended with the corresponding minimal and maximal values of $c'$, and symmetrically those of $\IH(D_{i_s})$, extended with the corresponding minimal and maximal values of $c'$. Since the inequalities constraining $c$ and $c'$ have integer coefficients, and $\IH(D_{i_s})$ and $\IH(D_{i_r})$ have integer vertices, $D$ also has integer vertices.

    For the $i$-th lower-bound inequality on $c$, and the $j$-th lower-bound inequality on $c'$, we define $E_{ij}$ as $D$ in which we transform both constraints into equalities. Clearly, from the property above, this does not add any new vertex, but it may remove some. Second, by construction, we have $\bigcup_{ij} E_{ij} = \{(\vec{\theta},\min_{(\vec{\theta},c,v)\in \IH(D_{i_r})} c, \min_{(\vec{\theta},c,v)\in \IH(D_{i_s})} c) | \vec{\theta}\in\proj{\IH(D_r)}{\Theta}\}$.
    If we minimize $c-c'$ over $E_i$, we know from the theory of linear programming that the minimum is obtained at a vertex of $E_{ij}$, and therefore, in particular, for an integer valuation $v$ of the parameters, and an integer vector $\vec{\theta}$ of $D_{i_r}$. Since we have $v(C_{i_r}) \costrincl v(C_{i_s})$, we then know that for these values of the theta variables and parameters, $c\leq c'$. This means that this holds for the whole of $E_{ij}$, and finally that $C_{i_r}\costrincl C_{i_s}$.
\end{proof}

\section{Case study}
\label{sec:casestudy}
We now consider a scheduling problem where some tasks include \emph{runnables}, a key concept of the AUTomotive Open System ARchitecture (AUTOSAR), the open standard for designing the architecture of vehicle software~\cite{Autosar-RTE}.
Runnables represent the functional view of the system and are executed by the runtime of the software component~\cite{naumann2009autosar}. For their execution they are mapped to tasks and a given runnable can be split across different tasks to introduce parallelism, for instance. In industrial practice, runnables that interact a lot are mapped to the same task, in particular when they perform functions with the same period.

\medskip
In this example, we consider 3~non-preemptive, periodic tasks T1, T2 and~T3, on which have already been mapped some runnables that interact together; we add another independent runnable whose code must be split between tasks T1 and T2:
\begin{itemize}
	\item the period of task T1 is 100~time units; T1 includes a ``fixed part'', independent from the new runnable and whose execution lasts 22\,t.u.;
	\item the period of T2 is 200\,t.u.; T2 also has a fixed part lasting 28\,t.u.;
	\item the period of T3 is 400\,t.u.; its execution lasts 11\,t.u.;
	\item the period of the runnable is 200\,t.u.; its execution lasts 76\,t.u.; parameter $a$ denotes the duration of the section that is executed in T1\footnote{Every 200\,t.u., since T1 is executed twice as often as T2, T1 is running during $(22+a)*2=44+2a$\,t.u. whereas T2 is running during $28+(76-2a)=104-2a$\,t.u.}.
\end{itemize}

The processing unit consists of 2~cores C0 and C1; T3 can only execute on C0 whereas both T1 and~T2 can execute on either core.
When both cores are idle, the cost is null; when only one core is busy, the cost is equal to 2/t.u.; when both cores are busy, the cost is equal to 3/t.u.
Any optimised strategy to divide the runnable over T1 and T2 and to allocate these tasks to C0 or C1 must therefore favour the cases where both cores are in the same state.

\begin{figure}[h]
    \centering
\includegraphics[width=13cm]{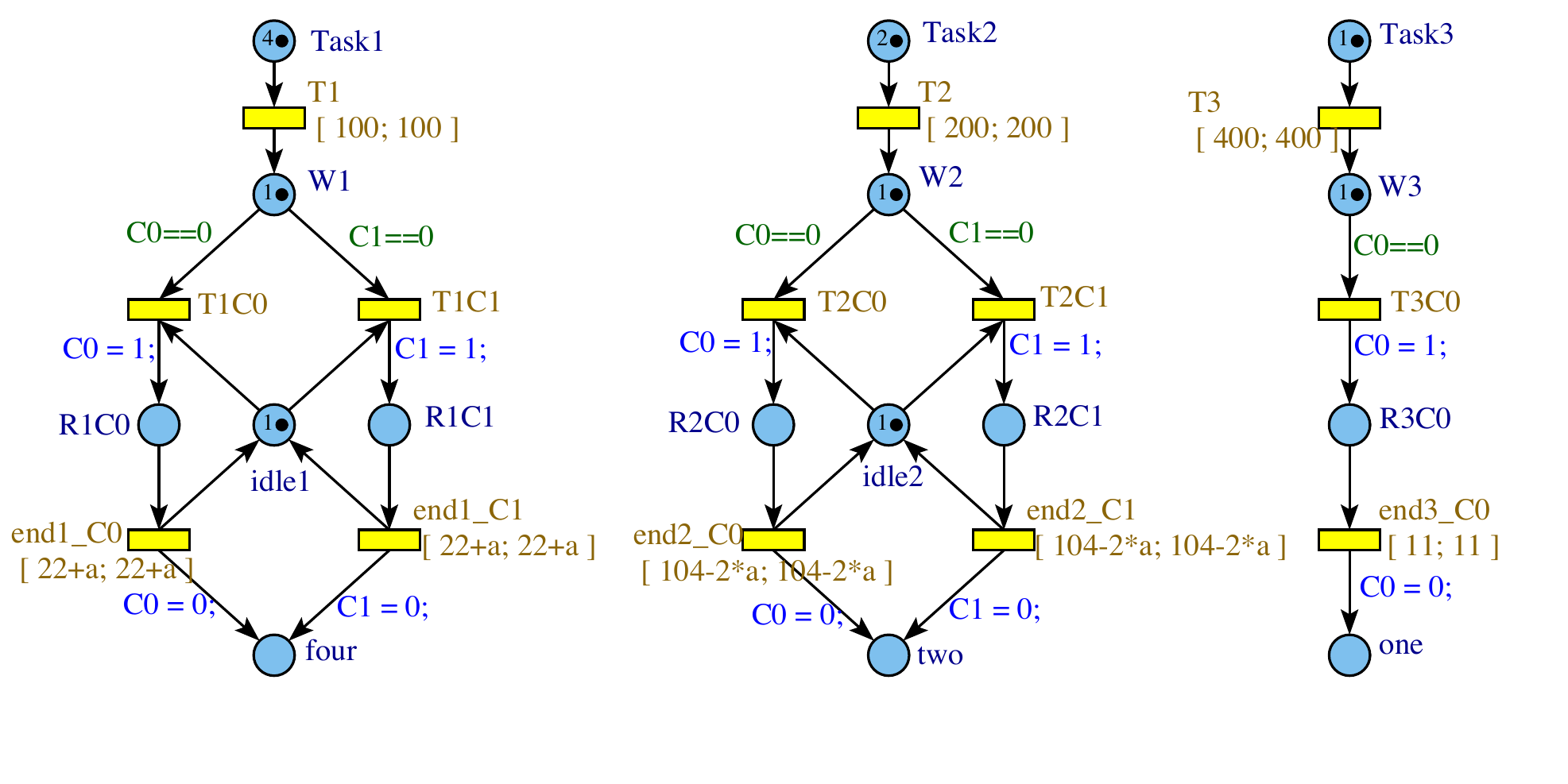}
\vspace{-1.2cm}
\caption{Offline non preemptive scheduling problem}
\label{fig:scheduling}
\end{figure}

Figure~\ref{fig:scheduling} presents the model for this problem\footnote{To ensure a correct access to the cores, we could have added one place for each core and some arcs on each task to capture and release them but the resulting net would have been quite unreadable.
Instead, we chose to add 2~integer variables C0 and C1 (both initialised to~0); a variable equal to~0 (resp.~1) obviously means the corresponding core is idle (resp. busy).}.
The associated cost function is:
{\small$2*(C0\neq C1) + 3*C0*C1 + 1000*\big(W1*(R1C0+R1C1) + W2*(R2C0+R2C1) + W3*R3C0\big)$}, where the name of a place (e.g. $R1C0$) represents its marking\footnote{The last term ensures that such cases where an instance of a task is activated while a previous one is running are heavily penalised.}.

We limit the study of the system to the first 400\,t.u., at the end of which T1 has been executed 4~times, T2 twice and T3 once. This will be the marking to reach.

We start by computing the optimal cost with the corresponding parameter values. This is done with Romeo using formula \texttt{mincost (four==4 and two==2 and one==1)}. We obtain minimum cost $466$ and $a \in[13,17]$.

We can run a consistency check with the following bounded cost reachability property: \texttt{EF four==4 and two==2 and one==1 and cost}$\leq$\texttt{466}. As expected this holds iff $a \in[13,17]$.

We then try to relax the constraint on the cost a bit and for instance we find that property \texttt{EF four==4 and two==2 and one==1 and cost}$\leq$\texttt{470} holds iff $a\in[12,18]$. Similarly, the cost can be made less or equal to $500$ iff $a\in [4,26]$.

Back to the optimal case, we set $a$ to~17; Romeo provides the following timed trace, in which the notation {\scriptsize T1@t1} means that transition T1 is fired at date t1:
{\scriptsize T1C0@61, T2C1@69, T1@100, end1\_C0@100, T1C0@100, end1\_C0@139, end2\_C1@139, T1@200, T2@200, T2C0@261, T1C1@261, T1@300, end1\_C1@300, T1C1@303, end2\_C0@331, T3C0@331, end3\_C0@342, end1\_C1@342}

From this trace, we obtain the Gantt chart in Figure~\ref{Gantt} (above).
Setting $a$ to~13 yields another timed trace, resulting in the Gantt chart in Figure~\ref{Gantt} (below).
In both cases, we can see that both cores are busy during 148\,t.u. (and for 11\,t.u., only one is idle), which confirms our analysis on the optimised strategy above.

\begin{figure}[h]
\begin{center}
\begin{tikzpicture}[scale=0.5]
\draw (-0.8,.5) node[left] {Task1};
\draw (0,0.25) node[left] {{\tiny C0}};
\draw (0,0.75) node[left] {{\tiny C1}};

\draw (-0.8,2) node[left] {Task2};
\draw (0,1.75) node[left] {{\tiny C0}};
\draw (0,2.25) node[left] {{\tiny C1}};

\draw (-0.8,3.32) node[left] {Task3};
\draw (0,3.25) node[left] {{\tiny C0}};

\fill[gray!75] (69/20,2) rectangle ++(70/20,.5);
\fill[gray!75] (261/20,1.5) rectangle ++(70/20,.5);

\fill[gray!75] (61/20,0) rectangle ++(39/20,.5);
\fill[gray!75] (100/20,0) rectangle ++(39/20,.5);
\fill[gray!75] (261/20,0.5) rectangle ++(39/20,.5);
\fill[gray!75] (303/20,0.5) rectangle ++(39/20,.5);

\fill[gray!75] (331/20,3) rectangle ++(11/20,.5);

\draw[dotted] (0,3.5) -- ++(20,0);
\draw[dotted] (0,2) -- ++(20,0);
\draw[dotted] (0,2.5) -- ++(20,0);
\draw[dotted] (0,0.5) -- ++(20,0);
\draw[dotted] (0,1) -- ++(20,0);

\draw[->,thick] (0,0) -- (20.5,0);
\draw[->,thick] (0,1.5) -- (20.5,1.5);
\draw[->,thick] (0,3) -- (20.5,3);
\draw (20.5,0) node[right] {$t$};

\foreach \x in {0,20}
{
\draw [->,thick] (\x,3) -- ++(0,0.8);
}

\foreach \x in {1,...,40}
{
  \draw[very thin] (\x/2,0) -- ++(0,1);
  \draw[very thin] (\x/2,1.5) -- ++(0,1);
  \draw[very thin] (\x/2,3) -- ++(0,0.5);
}
\foreach \x in {0,100,200,300,400}
{
\draw [->,thick] (\x/20,0) -- ++(0,1.3);
}
\foreach \x in {0,200,400}
{
\draw [->,thick] (\x/20,1.5) -- ++(0,1.3);
}
\foreach \x in {0,40,80,120,...,400}
{
  \draw (\x/20,0) node[below] {$\x$};
}
\foreach \x in {40,80,120,...,360}
{
  \draw[ thin] (\x/20,-0.15) -- ++(0,3.65);
}

\end{tikzpicture}

\begin{tikzpicture}[scale=0.5]
\draw (-0.8,.5) node[left] {Task1};
\draw (0,0.25) node[left] {{\tiny C0}};
\draw (0,0.75) node[left] {{\tiny C1}};

\draw (-0.8,2) node[left] {Task2};
\draw (0,1.75) node[left] {{\tiny C0}};
\draw (0,2.25) node[left] {{\tiny C1}};

\draw (-0.8,3.32) node[left] {Task3};
\draw (0,3.25) node[left] {{\tiny C0}};

\fill[gray!75] (68/20,2) rectangle ++(78/20,.5);
\fill[gray!75] (265/20,2) rectangle ++(78/20,.5);

\fill[gray!75] (65/20,0) rectangle ++(35/20,.5);
\fill[gray!75] (100/20,0) rectangle ++(35/20,.5);
\fill[gray!75] (265/20,0) rectangle ++(35/20,.5);
\fill[gray!75] (300/20,0) rectangle ++(35/20,.5);

\fill[gray!75] (135/20,3) rectangle ++(11/20,.5);

\draw[dotted] (0,3.5) -- ++(20,0);
\draw[dotted] (0,2) -- ++(20,0);
\draw[dotted] (0,2.5) -- ++(20,0);
\draw[dotted] (0,0.5) -- ++(20,0);
\draw[dotted] (0,1) -- ++(20,0);

\draw[->,thick] (0,0) -- (20.5,0);
\draw[->,thick] (0,1.5) -- (20.5,1.5);
\draw[->,thick] (0,3) -- (20.5,3);
\draw (20.5,0) node[right] {$t$};

\foreach \x in {0,20}
{
\draw [->,thick] (\x,3) -- ++(0,0.8);
}

\foreach \x in {1,...,40}
{
  \draw[very thin] (\x/2,0) -- ++(0,1);
  \draw[very thin] (\x/2,1.5) -- ++(0,1);
  \draw[very thin] (\x/2,3) -- ++(0,0.5);
}
\foreach \x in {0,100,200,300,400}
{
\draw [->,thick] (\x/20,0) -- ++(0,1.3);
}
\foreach \x in {0,200,400}
{
\draw [->,thick] (\x/20,1.5) -- ++(0,1.3);
}
\foreach \x in {0,40,80,120,...,400}
{
  \draw (\x/20,0) node[below] {$\x$};
}
\foreach \x in {40,80,120,...,360}
{
  \draw[ thin] (\x/20,-0.15) -- ++(0,3.65);
}

\end{tikzpicture}
\end{center}
\vspace{-0.6cm}
\caption{Gantt charts for $a=17$ (above) and $a=13$ (below)}
\label{Gantt}
\end{figure}
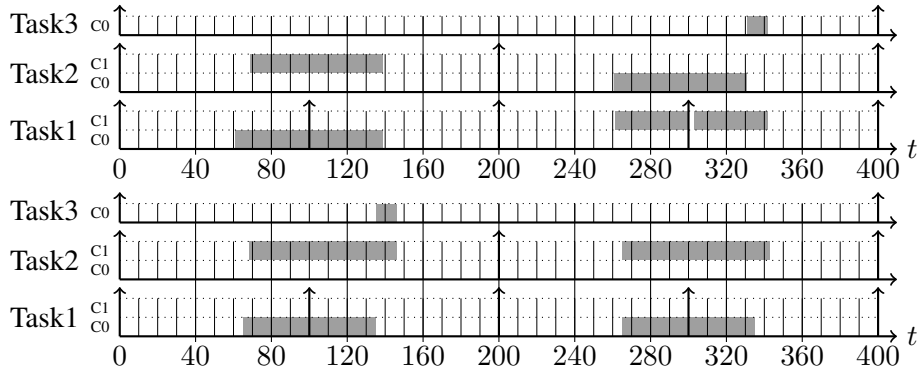

\section{Conclusion}
\label{sec:conclusion}

We have proposed a new Petri net-based formalism with parametric timing and cost features, thus merging two classical lines of work.
For this formalism, we define an existential problem and two synthesis problems for parametric reachability with cost constraints.
We prove that the existential problem is undecidable but we nonetheless give and prove symbolic semi-algorithms for the synthesis problems.
We finally propose variants of those synthesis semi-algorithms suitable for integer parameter valuations and prove their termination when those parameter valuations are bounded a priori, and with some other classical assumptions.
These symbolic algorithms avoid the explicit enumeration of all possible parameter valuations. They are implemented in our tool Romeo and we have demonstrated their use on a case-study addressing a scheduling problem, and inspired by the AUTOSAR standard.

Further work includes computing the optimal cost as a function of parameters and investigating the case of costs (discrete and rates) as parameters.



\end{document}